\newcommand\redout{\bgroup\markoverwith
{\textcolor{red}{\rule[.5ex]{2pt}{0.4pt}}}\ULon}
\tikzset{%
  highlight/.style={rectangle,rounded corners,fill=red!15,draw,
    fill opacity=0.5,thick,inner sep=0pt}
}
\newcolumntype{x}[1]{>{\centering\arraybackslash\hspace{0pt}}p{#1}}
\theoremstyle{definition}
\newtheorem{theorem}{Theorem}[section]
\newtheorem{definition}[theorem]{{{Definition}}}
\newtheorem{example}[theorem]{{{Example}}}
\newtheorem{remark}[theorem]{{{Remark}}}
\newtheorem{conjecture}[theorem]{{{Conjecture}}}
\newtheorem{question}[theorem]{{{Question}}}
\newtheorem{corollary}[theorem]{{{Corollary}}}
\newtheorem{proposition}[theorem]{{{Proposition}}}
\newtheorem{lemma}[theorem]{{{Lemma}}}
\newcommand{\C}{\mathcal C}
\newcommand{\F}{\mathbb F}
\newcommand{\K}{\mathbb K}
\newcommand{\R}{\mathbb R}
\newcommand{\LL}{\mathbb L}
\newcommand{\CC}{\mathbb{C}}
\newcommand{\Hh}{\mathbb{H}}
\newcommand{\OO}{\mathbb{O}}
\newcommand{\E}{\mathbb{E}}
\newcommand{\Fq}{\F_q}
\newcommand{\mV}{\mathcal V}
\newcommand{\mU}{\mathcal U}
\newcommand{\mB}{\mathcal B}
\newcommand{\mI}{\mathcal I}
\DeclareMathOperator{\GL}{GL}
\DeclareMathOperator{\supp}{supp}
\DeclareMathOperator{\rk}{rk}
\DeclareMathOperator{\dd}{d}
\DeclareMathOperator{\rs}{rowsp}
\DeclareMathOperator{\cs}{colsp}
\DeclareMathOperator{\wt}{wt}
\DeclareMathOperator{\mm}{m}
\DeclareMathOperator{\I}{I}
\DeclareMathOperator{\csup}{csupp}
\DeclareMathOperator{\rsup}{rsupp}
\DeclareMathOperator{\Gal}{Gal}
\DeclareMathOperator{\End}{End}
\newcommand{\drk}{\dd_{\rk}}
\newcommand{\st}{\,:\,}
\newcommand{\mat}{\K^{m \times n}}
\newcommand{\sqmat}{\K^{m \times m}}
\newcommand{\Fqk}{[m\times n, k]_{\K}}
\newcommand{\Fqkd}{[m\times n, k, d]_{\K}}
\newcommand{\Fmk}{[n, k]_{\LL/\K}}
\newcommand{\Fmkd}{[n, k, d]_{\LL/\K}}
\title{Rank-metric codes over arbitrary fields: Bounds and constructions}
\author[A. Neri]{Alessandro Neri}
\address{Alessandro Neri, \textnormal{Department of Mathematics and Applications ``R. Caccioppoli'', University of Naples Federico II, Via Cintia 21, Monte S. Angelo, 80121 Napoli, Italy}}
 \email{alessandro.neri@unina.it}
\author[F. Zullo]{Ferdinando Zullo}
\address{Ferdinando Zullo, \textnormal{Department of Mathematics and Physics, University of Campania ``Luigi Vanvitelli'', Viale Lincoln 5, 81100 Caserta, Italy}}
 \email{ferdinando.zullo@unicampania.it}
\begin{document}

\maketitle

\begin{abstract}
  Rank-metric codes, defined as sets of matrices over a finite field with the rank distance, have gained significant attention due to their applications in network coding and connections to diverse mathematical areas. Initially studied by Delsarte in 1978 and later rediscovered by Gabidulin, these codes have become a central topic in coding theory. This paper surveys the development and mathematical foundations, in particular, regarding bounds and constructions of rank-metric codes, emphasizing their extension beyond finite fields to more general settings. We examine Singleton-like bounds on code parameters, demonstrating their sharpness in finite field cases and contrasting this with contexts where the bounds are not tight. Furthermore, we discuss constructions of Maximum Rank Distance (MRD) codes over fields with cyclic Galois extensions and the relationship between linear rank-metric codes with systems and evasive subspaces. The paper also reviews results for algebraically closed fields and real numbers, previously appearing in the context of topology and measure theory. We conclude by proposing future research directions, including conjectures on MRD code existence and the exploration of rank-metric codes over various field extensions.
\end{abstract}

\medskip 
\noindent {\bf Keywords.} Rank-metric code; Singleton-like bound; MRD code; evasive subspace; scattered subspace.\\
{\bf MSC classification.} 94B65, 94B05, 15B33.


\tableofcontents

\section*{Introduction}

Rank-metric codes are typically defined as sets of $m\times n$ matrices over a finite field, equipped with the rank distance, that is, the distance induced by the rank function, for which the rank distance between two matrices is defined as the rank of their difference.
Interest in these codes has increased since they were proposed as a useful tool in network coding.  This application has its origin in \cite{ahlswede2000network}, where it was shown that random network coding is a powerful tool for maximizing information flow in a noncoherent network with multiple sources and sinks. In this context, rank-metric codes are particularly well-suited for constructing efficient encoding schemes; see \cite{silva2008rank}.
Since then, rank-metric codes have become a vibrant area of research.
However, rank-metric codes were already introduced and studied much earlier. Delsarte first studied rank-metric codes in 1978 under the name \emph{Singleton systems}, in terms of bilinear forms \cite{de78}. A few years later, Gabidulin independently rediscovered them \cite{ga85a}.

The first applications appeared in Roth's 1991 paper \cite{roth1991maximum}, where rank-metric codes were used for correcting crisscross errors, and in the work of Gabidulin, Paramonov, and Tretjakov \cite{gabidulin1991ideals}, who proposed a cryptosystem based on these codes.

Building on these foundational works, rank-metric codes have revealed deep connections to various areas of mathematics and computer science, including finite geometry, tensors, skew algebras, semifields, matroids, and more (see, e.g., \cite{sheekey201913,polverino2020connections,byrne2019tensor,sheekey2020new,alfarano2023recursive,gluesing2025polynomial}). 

The aim of this survey is to demonstrate that the study of linear spaces of matrices over a (not necessarily finite) field -- hence, rank-metric codes over any field -- has profound connections with even more areas of mathematics. For instance, the study of rank-metric codes over the real numbers with minimum distance two appeared in the context of differential geometry, Clifford algebras and measure theory even before Delsarte's paper; see, e.g., \cite{adams1962vector, adams1965matrices, handel1970subspaces}.

In particular, in this paper we survey existing bounds on the parameters of a rank-metric code, analogous to the classical Singleton bound for Hamming-metric codes. Using a similar argument, it is possible to derive a general Singleton-like bound, and rank-metric codes that attain such a bound are called Maximum Rank Distance (MRD) codes. We {recall} constructions of MRD codes over fields that admit a cyclic Galois extension of degree $\max\{m,n\}$, demonstrating that the Singleton-like bound for these extensions is always achieved.  
We then focus on fields $\K$ admitting a degree $m$ (not necessarily Galois) field extension $\LL$. In such a case, one can represent $m\times n$ matrices over $\K$ as vectors of length $n$ over $\LL$, after a suitable choice of a $\K$-basis of $\LL$. It is then natural to consider the additional structure of $\LL$-linearity in this setting. With this linearity, one can obtain a more geometric point of view of $\LL$-linear rank-metric codes of dimension $k$ in $\LL^n$, exploring their correspondence with the so-called  systems, which are $\K$-subspaces of $\LL^k$ of $\K$-dimension $n$. This leads to the study of scattered and evasive subspaces in the broader context of general field extension, which do not necessarily involve finite fields.

At this point, we switch to study two cases of fields for which the Singleton-like bound is not tight in general, and for  which we can not always represent matrices as vectors over extension field: algebraically closed fields and the real numbers. First, we note that, in the case of algebraically closed fields, the Singleton-like bound is never tight. Instead, there exists another bound on the parameters of a rank-metric code that is proven to be tight in this context, and was given by Westwick \cite{westwick1972spaces}. The situation is markedly different when working with rank-metric codes over the real numbers. Indeed, for the purpose of studying vector fields over a sphere in differential geometry, Adams derived a completely different bound for the maximum dimension of a rank-metric code of $n\times n$ matrices with maximum rank distance $n$ in \cite{adams1962vector}, which involves the so-called Radon-Hurwitz numbers. However, in the general case, very little is known about rank-metric codes over the reals.

We conclude the paper by outlining three potential research directions for future study, which we consider as natural extensions of the results collected in this survey. First, we propose an MRD conjecture that aims to characterize the existence of MRD codes of $m\times n$ matrices over finite fields, which are linear over $\F_{q^m}$.  Second, we suggest exploring rank-metric codes $m\times n$ matrices over general fields admitting a field extension $\LL$ of degree $m$, and that are linear over $\LL$. Lastly, we recommend the study of rank-metric codes over the real numbers, for which no general bounds that are tight for every set of parameters are known.

\section{Rank-Metric Codes}

\subsection{Basic Definitions and Properties}
Let us consider the map
$$\begin{array}{rcl}\drk: \mat \times \mat & \longrightarrow & \mathbb N\\
(X,Y) & \longmapsto & \drk(X,Y):=\rk(X-Y).
\end{array}$$ 
It is easy to verify that $\drk$ defines a metric on $\mat$, which we call  \textbf{rank distance}.

\begin{definition}
 A \textbf{(matrix) rank-metric code} is a $\K$-linear subspace $\C \subseteq \mat$. If $\C \neq \{0\}$, then the \textbf{minimum rank distance} of $\C$ is the integer 
$$\drk(\C):= \min\{\rk(X) \st X \in \C, \ X \neq 0\} = \min\{\drk(X,Y) \st X,Y \in \C, \ X \neq Y\}.$$
\end{definition}

 From now on, we will refer to a rank-metric code $\C \subseteq \mat$ of dimension $k$  as an $\Fqk$ code. When also the minimum distance  $d$ is known, we will call it an $\Fqkd$ code.

Also in the rank metric, one can define the notion of equivalence of codes.

\begin{definition}\label{def:equiv_K_linear}
Two rank-metric codes $\C,\C' \subseteq \mat$ are \textbf{(linearly) equivalent} if there exists a $\K$-linear isometry $\varphi: (\mat,\mathrm{d_{rk}}) \to (\mat,\mathrm{d_{rk}})$ such that $\varphi(\C)=\C'$.
If $\C, \C'\in \mat$ are equivalent rank-metric codes, then we will write $\C\sim \C'$.
\end{definition}

As a linear isometry of $\mat$ is necessarily bijective, equivalent codes have the same dimension and minimum distance. According to \cite{hua1951theorem,wan1996geometry}, in which all the $\K$-(semi)linear isometries are classified,  we have the following result.
\begin{theorem}[see \textnormal{\cite{hua1951theorem,wan1996geometry}}]
 Let $\C,\C'\subseteq \mat$ be two rank-metric codes. Then, $\C\sim \C'$ if and only if there exist invertible matrices $A \in \GL(n,\K)$, $B\in \GL(m,\K)$ such that 
$$\C'=A \C B:=\left\{A X B \st X \in \C\right\},$$
or, when $m=n$,
$$\C'=A\C^\top B:=\left\{AX^\top B \st X \in \C\right\}.$$
\end{theorem}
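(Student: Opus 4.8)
The plan is to reduce the statement to a classification of the $\K$-linear isometries of $(\mat,\drk)$ and then to determine their shape. The backward implication is immediate: for invertible $A,B$ the map $\varphi(X)=AXB$ is $\K$-linear and satisfies $\rk(AXB)=\rk(X)$, so it is a linear isometry carrying $\C$ to $A\C B$; when $m=n$ the map $\varphi(X)=AX^\top B$ is likewise $\K$-linear and rank-preserving, since transposition preserves rank. For the forward implication I would start from a $\K$-linear isometry $\varphi$ with $\varphi(\C)=\C'$. Because $\varphi$ is linear, $\drk(\varphi(X),\varphi(Y))=\rk(\varphi(X-Y))$, so $\varphi$ is an isometry if and only if $\rk(\varphi(Z))=\rk(Z)$ for every $Z\in\mat$; it therefore suffices to show, once and for all, that every $\K$-linear rank-preserving bijection of $\mat$ has one of the two stated forms, and then apply this to $\varphi$.

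The core is a rank-one preserver argument. Since $\varphi$ preserves rank, it maps the set of rank-one matrices bijectively onto itself. I would next identify the maximal $\K$-subspaces of $\mat$ all of whose nonzero elements have rank one: writing a rank-one matrix as $uv^\top$, these are exactly the column pencils $L_u=\{u\,w^\top \st w\in\K^n\}$ and the row pencils $R_v=\{z\,v^\top \st z\in\K^m\}$, of dimensions $n$ and $m$ respectively. The elementary fact to record is that $uv^\top+u'v'^\top$ has rank at most one precisely when $u,u'$ are proportional or $v,v'$ are proportional, which identifies the $L_u$ and $R_v$ as the maximal rank-$\le 1$ subspaces. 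Being a linear bijection that preserves rank, $\varphi$ must permute these maximal subspaces.

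From here the argument splits according to dimensions. If $m\neq n$ the two families have different dimensions, so $\varphi$ maps column pencils to column pencils and row pencils to row pencils; if $m=n$ it either preserves the two families or interchanges them, and composing with transposition reduces the interchanging case to the preserving one, which is the origin of the $X\mapsto AX^\top B$ alternative. In the preserving case, the induced action on column pencils is a collineation of $\PG(m-1,\K)$ and the action on row pencils a collineation of $\PG(n-1,\K)$; by the fundamental theorem of projective geometry each is induced by a semilinear map, giving invertible $A,B$ with $\varphi(uv^\top)$ proportional to $(A\,\sigma(u))(\tau(v)^\top B)$ for field automorphisms $\sigma,\tau$. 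The full $\K$-linearity of $\varphi$ (not merely its additivity) then forces $\sigma=\tau=\mathrm{id}$ and makes the scalar factors constant, yielding $\varphi(X)=AXB$ on all rank-one matrices and hence, by linearity and the fact that rank-one matrices span $\mat$, on all of $\mat$.

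I expect the delicate step to be the passage from the action on individual rank-one matrices to the global bilinear form $X\mapsto AXB$: one must verify that the scalars attached to each rank-one matrix can be absorbed consistently into fixed $A$ and $B$, and that the two projective collineations glue compatibly across pencils rather than merely matching within each family. This is exactly where $\K$-linearity is essential, since it both eliminates the semilinear twist produced by the fundamental theorem of projective geometry and pins down the proportionality constants. Over a general, possibly infinite or non-closed field one must further ensure that every step is purely incidence-geometric and uses no finiteness, which is precisely the generality in which Hua's fundamental theorem of the geometry of matrices operates and from which the present linear statement specializes.
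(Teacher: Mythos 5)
The paper itself offers no proof of this theorem: it is quoted from Hua's fundamental theorem of the geometry of matrices and Wan's book, where the $\K$-(semi)linear isometries of $(\mat,\drk)$ are classified. So your proposal can only be compared with that classical argument, and what you outline is indeed its standard skeleton, correctly specialized to linear maps: the backward direction is immediate; a $\K$-linear isometry is the same thing as a $\K$-linear rank preserver; a rank-preserving linear bijection permutes the maximal subspaces all of whose nonzero elements have rank one; these maximal subspaces are exactly the pencils $L_u=\{uw^\top \st w\in\K^n\}$ and $R_v=\{zv^\top \st z\in\K^m\}$; and composing with transposition (possible only when $m=n$) accounts for the case where the two families are interchanged. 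All of this is valid over an arbitrary field, and it is the right approach.

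Two steps, however, need repair. First, the dichotomy ``$\varphi$ preserves the two families or interchanges them'' when $m=n$ is not automatic from maximality; it follows from the intersection pattern (distinct pencils of the same family meet in $\{0\}$, while $L_u\cap R_v=\langle uv^\top\rangle_{\K}$ is one-dimensional) together with the fact that a linear bijection preserves intersections; you assert it without argument. Second, and more seriously, the fundamental theorem of projective geometry applies only to collineations of $\PG(k,\K)$ with $k\geq 2$: when $m=2$ or $n=2$ the induced map lives on $\PG(1,\K)$, where every bijection vacuously preserves collinearity, so that step proves nothing in cases the theorem does cover (and even for $k\geq 2$ you must check that the induced map preserves collinearity, which follows since $[u_3]$ lies on the line through $[u_1],[u_2]$ exactly when $L_{u_3}\subseteq L_{u_1}+L_{u_2}$). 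Both defects vanish if you use the $\K$-linearity of $\varphi$ \emph{before} the projective step rather than after: fixing $v$ and a representative $v''$ with $\varphi(R_v)=R_{v''}$, the equation $\varphi(uv^\top)=f_v(u)(v'')^\top$ defines a $\K$-linear bijection $f_v$ of $\K^m$ which induces the action on column pencils; any two such maps $f_v,f_{v_1}$ induce the same projective map and hence differ by a nonzero scalar, so after rescaling a single $A:=f_{v_0}$ serves for all $v$, and the resulting assignment $g$ with $\varphi(uv^\top)=(Au)\,g(v)^\top$ is then itself $\K$-linear and bijective, giving $B$ with $g(v)^\top=v^\top B$. This yields $\varphi(X)=AXB$ on rank-one matrices, hence on all of $\mat$ since these span, with no appeal to the fundamental theorem of projective geometry, no semilinear twist to eliminate, and no restriction on $m,n$ --- which is precisely how the linear case of Hua's theorem is usually derived. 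With these two repairs your outline becomes a complete proof.
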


Recall that the \textbf{trace-product} of $X,Y \in \mat$ is $\mathrm{Tr}(XY^\top)$. It is well-known and easy to see that the map
$(X,Y) \mapsto \mathrm{Tr}(XY^\top)$ defines a $\K$-bilinear, symmetric and nondegenerate form on $\mat$.

\begin{definition}\label{def:matrixdual}
The \textbf{dual} of an $\Fqk$ code is $$\C^\perp:=\{X \in \mat \st \mathrm{Tr}(XY^\top) =0 \mbox{ for all } Y \in \C\}.$$
Note that $\C^\perp$ is an $[m\times n, nm-k]_{\K}$ code.
\end{definition}

We also define the column support and row support of a rank-metric code.
See~\cite{gorla2021rank} for a detailed analysis of the various definitions of rank-support proposed in the literature.
For this purpose, it is important to introduce the notion of row space and column space of a matrix  $A \in \F^{m\times n}$, where $\F$ is a field. They are defined as the $\F$-subspace generated by the rows, respectively the columns, of $A$, and denoted by $\rs(A)$ and $\cs(A)$. 

\begin{definition}
	Let $\C$ be a rank-metric code. The \textbf{column support} and the \textbf{row support} of $\C$ are defined to be the $\K$-subspaces of $\K^m$ and $\K^n$, respectively,
	defined by 
	$$\csup(\C):=\sum_{M \in \C} \mathrm{colsp}(M), \qquad \rsup(\C):=\sum_{M\in \C} \mathrm{rowsp}(M),$$
	where the sums are sums of vector subspaces. The code $\C\subseteq \mat$ is said to be \textbf{column-nondegenerate} if $\csup(\C)=\K^m$ and \textbf{row-nondegenerate} if $\rsup(\C)=\K^n$.
\end{definition}

\subsection{Maximum Rank Distance Codes}

The following result is the rank-metric analogue of the Singleton bound for codes with the Hamming metric.
In order to show a short proof of this bound, we will use the following.

\begin{definition}
    Let $A \in \K^{m\times n}$ and $\mathcal{J}\subseteq [n]$, and let $A^{\mathcal{J}}$ denote the matrix in $\mathbb{K}^{m\times |\mathcal{J}|}$ obtained from $A$ by deleting the columns not indexed in $\mathcal{J}$. For a given $\mathcal I\subseteq [n]$, consider the map
    $$\begin{array}{rcl}
    \pi_{\mI}:\mat & \longrightarrow & \K^{m\times (n-|\mI|)} \\
    A & \longmapsto & A^{[n]\setminus \mI}.
\end{array}$$
    Let $\C$ be an $\Fqkd$ code, we define the \textbf{puncturing} of $\C$ on the coordinates in $\mathcal{I}$ as 
    \[ \pi (\C,\mathcal{I}):=\pi_{\mI}(\C)=\{ A^{[n]\setminus \mathcal{I}} \colon A \in \C \}. \]
\end{definition}

\begin{remark}
    The above notion of puncturing is not the most general one; for a more comprehensive definition, we refer the reader to \cite{byrne2017covering}. However, we will restrict to this specific definition throughout this paper, since we do not need the general notion of puncturing.
\end{remark}

\begin{theorem}[\textnormal{see  \cite{de78,ga85a}}]\label{thm:singbound}
    Let $\C$ be an $\Fqkd$ code. Then
    $$k\leq \max\{m,n\}(\min\{m,n\}-d+1).$$
\end{theorem}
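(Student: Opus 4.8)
The plan is to exploit the puncturing map $\pi_{\mI}$ just introduced, together with an elementary dimension count, mimicking the proof of the Singleton bound in the Hamming metric. Since transposition $X \mapsto X^\top$ is a $\K$-linear bijection $\mat \to \K^{n\times m}$ that preserves the rank of every matrix, it sends $\C$ to a code $\C^\top$ of the same $\K$-dimension $k$ and the same minimum rank distance $d$. This lets me assume without loss of generality that $n \le m$; under this assumption $\min\{m,n\}=n$ and $\max\{m,n\}=m$, so the claim reduces to proving $k \le m(n-d+1)$.

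Before puncturing I would record that $d \le n$: every nonzero matrix in $\mat$ has rank at most $\min\{m,n\}=n$, so in particular a minimum-weight codeword does. Hence $d-1 \le n-1$, and I may fix an index set $\mI \subseteq [n]$ with $|\mI| = d-1$; this guarantees that the puncturing on the $d-1$ columns indexed by $\mI$ is well defined and that the target space is $\K^{m\times(n-d+1)}$.

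The core step is to show that the restriction of the ($\K$-linear) map $\pi_{\mI}$ to $\C$ is injective. Suppose $A \in \C$ lies in its kernel, i.e. $A^{[n]\setminus \mI}=0$. Then every column of $A$ outside $\mI$ vanishes, so all nonzero entries of $A$ are confined to the $d-1$ columns indexed by $\mI$. Consequently $\cs(A)$ is spanned by at most $d-1$ vectors, which forces $\rk(A) \le d-1 < d$. Since $d=\drk(\C)$, the only codeword of rank strictly less than $d$ is the zero matrix, so $A=0$; thus $\pi_{\mI}|_{\C}$ is injective.

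Finally, injectivity of a $\K$-linear map preserves dimension, so
$$k = \dim_{\K}\C = \dim_{\K}\pi_{\mI}(\C) \le \dim_{\K}\K^{m\times(n-|\mI|)} = m(n-d+1) = \max\{m,n\}(\min\{m,n\}-d+1),$$
which is the asserted bound. I do not anticipate a genuine obstacle: the argument is a direct transcription of the Hamming-metric Singleton bound, with ``support confined to few columns'' replacing ``support confined to few coordinates.'' The only points that require a little care are the reduction via transposition, which is precisely what symmetrizes the roles of $m$ and $n$ into the bound written with $\max$ and $\min$, and the preliminary observation $d \le \min\{m,n\}$, which ensures that deleting $d-1$ columns makes sense.
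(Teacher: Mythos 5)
Your proof is correct and follows essentially the same route as the paper's: puncturing $d-1$ columns via $\pi_{\mI}$, arguing injectivity of $\pi_{\mI}|_{\C}$ from the minimum distance, and concluding by a dimension count. You merely spell out two details the paper leaves implicit (the transposition argument for the reduction to $n\le m$, and the explicit kernel computation showing injectivity), both of which are accurate.
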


\begin{proof}
Without loss of generality, let us assume that $n\leq m$. Let $\mI\subset[n]$ be any set with $|\mI|=d-1$.  
Consider the map $\pi_{\mI}$ restricted to $\C$:
$$\begin{array}{rcl}
    \pi_{\mI}:\C & \longrightarrow & \K^{m\times (n-d+1)} \\
    A & \longmapsto & A^{[n]\setminus \mI}.
\end{array}$$
The map $\pi_{\mI}$ is $\K$-linear, and by the assumption on the minimum rank distance on $\C$, it must be injective.
Thus,  
$$k=\dim_{\K}(\C)=\dim_{\K}(\pi(\C,\mI))\leq \dim_{\K}( \K^{m\times (n-d+1)})= m(n-d+1).$$
\end{proof}

\begin{definition}
A code $\C$ is  \textbf{maximum rank distance} (\textbf{MRD}) if it meets the bound of Theorem \ref{thm:singbound}, or if it is the zero code.
\end{definition}

\begin{remark}
The Singleton-like bound of  Theorem \ref{thm:singbound} is not always sharp. In particular, we will see in Section \ref{sec:complex_and_reals} that this depends on the underlying field $\K$.
\end{remark}

The property of being MRD satisfies a duality statement.

\begin{theorem}[\textnormal{see \cite{de78,ga85a}}]\label{thm:dualMRD}
Let $\C$ be an $\Fqk$  MRD code. Then $\C^\perp$ is an $[m \times n, nm-k]_{\K}$ MRD code.
\end{theorem}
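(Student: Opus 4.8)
The plan is to fix the convention $n \le m$; the case $n > m$ follows by transposition, since $X \mapsto X^\top$ is a rank-preserving isometry that also preserves the trace product, whence $(\C^\top)^\perp = (\C^\perp)^\top$. Writing $d = \drk(\C)$, the MRD hypothesis gives $k = m(n-d+1)$, so by Definition~\ref{def:matrixdual} the dual has dimension $\dim_\K \C^\perp = mn - k = m(d-1)$. If $\C = \{0\}$ or $\C = \mat$ the claim is immediate from the definition of MRD, so I assume $\{0\} \ne \C \ne \mat$, whence $\C^\perp \ne \{0\}$ and $d' := \drk(\C^\perp)$ is well defined. Applying Theorem~\ref{thm:singbound} to $\C^\perp$ yields $m(d-1) \le m(n - d' + 1)$, i.e.\ $d' \le n - d + 2$; since $\dim_\K \C^\perp = m(d-1) = m\bigl(n - (n-d+2)+1\bigr)$, the code $\C^\perp$ is MRD precisely when this inequality is an equality. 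Hence everything reduces to the lower bound $d' \ge n-d+2$, that is, to showing that every nonzero $M \in \C^\perp$ satisfies $\rk(M) \ge n-d+2$.

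The key structural input I would isolate first is a surjectivity property of puncturing for MRD codes. For $|\mI| = d-1$ the proof of Theorem~\ref{thm:singbound} shows that $\pi_{\mI}|_{\C}$ is injective; as $\C$ is MRD, its domain $\C$ and codomain $\K^{m\times(n-d+1)}$ share the same $\K$-dimension $m(n-d+1)$, so $\pi_{\mI}|_{\C}$ is in fact an isomorphism. Consequently, puncturing an MRD code of minimum distance $d$ onto any set $\mJ \subseteq [n]$ of $r \le n-d+1$ surviving columns is surjective onto $\K^{m\times r}$: choosing $\mJ \subseteq \mJ'$ with $|\mJ'| = n-d+1$, the space $\{A^{\mJ'} \st A \in \C\} = \K^{m\times(n-d+1)}$ restricts, under deletion of the columns outside $\mJ$, onto all of $\K^{m\times r}$.

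For the lower bound I argue by contradiction: suppose $M \in \C^\perp$, $M \ne 0$, has $r := \rk(M) \le n-d+1$. Since the columns of $M$ span the $r$-dimensional space $\cs(M)$, there is $B \in \GL(n,\K)$ such that $MB$ is supported on a column set $\mJ$ with $|\mJ| = r$. The trace product splits over columns as $\mathrm{Tr}(XY^\top) = \sum_{j} \langle X_{\cdot j}, Y_{\cdot j}\rangle$, and a short computation gives that $M \in \C^\perp$ is equivalent to $MB \in (\C B^{-\top})^\perp$; as $\C B^{-\top}$ is equivalent to $\C$, it is again MRD with minimum distance $d$. Replacing $\C$ by $\C B^{-\top}$ and $M$ by $MB$, I may thus assume $M$ is nonzero, supported on the columns $\mJ$ with $|\mJ| = r \le n-d+1$, and orthogonal to $\C$. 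Because $M$ vanishes off $\mJ$, orthogonality reads $\mathrm{Tr}\bigl(M^{\mJ} (X^{\mJ})^\top\bigr) = 0$ for all $X \in \C$, i.e.\ $M^{\mJ}$ lies in the trace-orthogonal complement of the puncturing $\{X^{\mJ} \st X \in \C\}$ inside $\K^{m \times r}$. By the previous paragraph this puncturing is all of $\K^{m\times r}$, and nondegeneracy of the trace product forces $M^{\mJ} = 0$, hence $M = 0$ -- a contradiction. Combined with $d' \le n-d+2$, this yields $d' = n-d+2$ and $\dim_\K \C^\perp = m(n-d'+1)$, so $\C^\perp$ is MRD.

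I expect the main obstacle to be the bookkeeping in the reduction of the third paragraph: correctly tracking how $\C^\perp$ transforms under a right column operation $B$ (it becomes $(\C B^{-\top})^\perp$, still MRD of minimum distance $d$), and confirming that a rank-$r$ matrix can be moved, by such an operation \emph{alone}, to one supported on exactly $r$ columns, so that the column-wise splitting of the trace product can be applied. The surjectivity of puncturing and the final nondegeneracy step are then routine.
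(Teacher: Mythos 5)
Your proof is correct, but there is nothing in the paper to compare it against: Theorem~\ref{thm:dualMRD} is stated there without proof, with only a pointer to \cite{de78,ga85a}. The classical arguments in those references are not field-agnostic in the way your argument is --- Delsarte's proof goes through association schemes and character sums, and Gabidulin's through linearized polynomials, both tied to finite fields --- whereas this survey works over arbitrary $\K$, so a proof valid in that generality is genuinely worth having. Your route delivers exactly that, using only ingredients already present in the paper: the injectivity of $\pi_{\mI}|_{\C}$ from the proof of Theorem~\ref{thm:singbound}, upgraded by a dimension count to surjectivity of puncturing onto any $r \le n-d+1$ columns; the column-wise splitting $\mathrm{Tr}(XY^\top)=\sum_{i,j}X_{ij}Y_{ij}$; and nondegeneracy of the trace form. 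All the delicate steps check out: the transposition reduction is sound because $\mathrm{Tr}(X^\top Y)=\mathrm{Tr}(XY^\top)$ gives $(\C^\top)^\perp=(\C^\perp)^\top$; the identity $\mathrm{Tr}\bigl(MB\,(XB^{-\top})^\top\bigr)=\mathrm{Tr}(MX^\top)$ correctly tracks the dual under the column operation, and $\C B^{-\top}$ is indeed equivalent to $\C$, hence MRD with the same $d$; and a rank-$r$ matrix can always be pushed onto $r$ columns by right multiplication alone, since column reduction is exactly right multiplication by an element of $\GL(n,\K)$. This style of argument (essentially the one in Ravagnani's duality-theory paper \cite{ravagnani2016rank}, transplanted from $\F_q$ to arbitrary $\K$) is arguably the ``right'' proof for the setting of this survey, and one could even extract from your second paragraph a standalone lemma --- puncturing an MRD code onto at most $n-d+1$ columns is surjective --- that complements Proposition~\ref{prop:puncturingMRD}.
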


We conclude by showing how one can construct MRD codes in $\mat$ by using MRD codes in $\K^{\max\{m,n\}\times \max\{m,n\}}$. 

\begin{proposition}\label{prop:puncturingMRD}
    Assume that $n\le m$ and $m-n<d$, and let $\C$ be an $[m\times m,m(m-d+1),d]_{\K}$ code. For every $\mathcal I\subseteq [n]$ with $|\mathcal I|=m-n$, the code 
    $\pi (\C,\mathcal{I})$
    is a $[m\times n, m(m-d+1),d-m+n]_{\K}$ MRD code.
\end{proposition}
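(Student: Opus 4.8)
The plan is to follow the same puncturing strategy used in the proof of Theorem~\ref{thm:singbound}, and then to close the loop by invoking that very bound to pin the minimum distance down exactly. Throughout, I write $\pi_{\mI}$ for the $\K$-linear map that deletes the $m-n$ columns indexed by $\mathcal I$, sending an $m\times m$ matrix to an $m\times n$ matrix, so that $\pi(\C,\mathcal I)=\pi_{\mI}(\C)$.

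First I would show that $\pi_{\mI}$ is injective on $\C$, which simultaneously fixes the dimension. If $A\in\C$ satisfies $\pi_{\mI}(A)=0$, then every column of $A$ outside $\mathcal I$ vanishes, so $A$ is supported on at most $|\mathcal I|=m-n$ columns and hence $\rk(A)\le m-n$. Since $\drk(\C)=d>m-n$ by hypothesis, the only such element is $A=0$. Therefore $\pi_{\mI}|_{\C}$ is injective and $\dim_{\K}(\pi(\C,\mathcal I))=\dim_{\K}(\C)=m(m-d+1)$.

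Next I would bound the minimum distance from below. Deleting a single column can lower the rank of a matrix by at most one, so deleting the $m-n$ columns in $\mathcal I$ gives $\rk(\pi_{\mI}(A))\ge \rk(A)-(m-n)$ for every $A$. Combined with injectivity, every nonzero codeword of $\pi(\C,\mathcal I)$ is of the form $\pi_{\mI}(A)$ for some nonzero $A\in\C$, whence $\rk(\pi_{\mI}(A))\ge d-(m-n)=d-m+n$, and so $\drk(\pi(\C,\mathcal I))\ge d-m+n$.

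Finally, to upgrade this to an equality and conclude the MRD property, I would apply Theorem~\ref{thm:singbound} to the $[m\times n,\,m(m-d+1)]_{\K}$ code $\pi(\C,\mathcal I)$. Writing $d'$ for its minimum distance and using $n\le m$, the bound reads $m(m-d+1)\le m(n-d'+1)$, i.e.\ $d'\le d-m+n$. Together with the lower bound this forces $d'=d-m+n$, and since the dimension then meets the Singleton-like bound with equality, $\pi(\C,\mathcal I)$ is MRD. The only delicate point is the injectivity step: it is precisely the hypothesis $m-n<d$ that rules out a nonzero codeword being annihilated by the column deletion, and everything else is bookkeeping with ranks together with the arithmetic identity $m(n-d'+1)=m(m-d+1)$.
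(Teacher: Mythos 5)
Your proof is correct. The paper actually states Proposition~\ref{prop:puncturingMRD} without any proof, and your argument supplies exactly the natural one implicit in the paper's framework: injectivity of $\pi_{\mI}$ on $\C$ forced by the hypothesis $m-n<d$ (the same mechanism used in the paper's proof of Theorem~\ref{thm:singbound}), the elementary fact that deleting a column lowers rank by at most one to get $\drk(\pi(\C,\mathcal I))\ge d-m+n$, and then Theorem~\ref{thm:singbound} applied to the punctured code to force equality of the minimum distance and the MRD property. All three steps are sound, so nothing is missing.
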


As a consequence, if one can construct MRD codes in $\K^{m\times m}$ of minimum rank distance $d$ for every $d\le m$, then, using Proposition \ref{prop:puncturingMRD}, one obtains MRD codes in $\mat$ of minimum rank distance $d$ for every $d\le \min\{m,n\}$.
In next subsection we will construct MRD codes in $\K^{m\times m}$ for any possible value of the minimum distance under certain circumstances; Proposition  \ref{prop:puncturingMRD} will be used to construct non-square MRD codes.

\begin{remark}\label{rem:puncturing_MRD}
    Assume without loss of generality that $n\le m$. The reader may wonder whether every MRD code in $\K^{m \times n}$ of minimum rank distance $d$ can be obtained by puncturing an MRD code in $\K^{m\times m}$ of minimum rank distance $d+m-n$. The answer to this interesting question is not yet known, although the authors believe that this is not true in general. In Remark \ref{rem:2extscattered}, we will explore this question in the case of vector codes that have a stronger  linearity condition. We will see  surprisingly such a linearity can be used to construct examples of linear vector MRD codes that are not the puncturing of longer linear vector MRD codes.
\end{remark}


\subsection{Construction of MRD codes}\label{sec:constrMRD}

In this section we describe a construction of MRD codes in $\mat$ for every admissible minimum rank $d$, over fields admitting a cyclic Galois extension field of degree $\max\{m,n\}$. For simplicity, we will illustrate this construction in the square case, keeping in mind that it can be extended to the rectangular case by using Proposition \ref{prop:puncturingMRD}.

For the rest of this section, we will assume that the field $\K$ admits a degree $m$ field extension $\LL$ which is Galois, and whose Galois group is
$$G:=\Gal(\LL/\K).$$

Note that $\LL$ is a $\K$-vector space of dimension $m$, and thus, if we fix a $\K$-basis of $\LL$, we have a $\K$-algebra isomorphism
\begin{equation}\label{eq:isomorphism}
    \End_{\K}(\LL)\cong \sqmat.
\end{equation}
One of the key ideas for constructing MRD codes is thus considering the additional multiplicative structure of the auxiliary extension field $\LL$, by using the isomorphism \eqref{eq:isomorphism}.

The second crucial ingredient is the fact that such extension is Galois. This allows to express the $\K$-algebra $\End_{\K}(\LL)$ by using the skew algebra generated by $\LL$ and $G$. Formally, define
$$\LL[G]:=\left\{\sum_{g \in G}a_gg \,:\, a_g \in \LL\right\}.$$
The set $\LL[G]$ can be equipped with two operations: the sum, which is defined in the usual way, that is
$$\left(\sum_{g\in G}a_gg\right)+\left(\sum_{g\in G}b_gg\right)=\left(\sum_{g\in G}(a_g+b_g)g\right),$$
and the composition, which is defined over monomials as
$$(a_gg)\circ(b_hh)=a_gg(b_h)(g\cdot h), \qquad a_g,b_h \in \LL, g,h \in G,$$
-- where $\cdot$ denotes the group operation in $G$ -- and  it is then extended by distributivity. With these two operations $\LL[G]$ is a (noncommutative) ring and a $\K$-algebra.

The following result clarifies the importance of the skew algebra $\LL[G]$. It is a direct consequence of Artin's theorem of independence of characters \cite[Theorem 12]{artin2012galois}.

\begin{theorem}[{see \cite[Theorem 12]{artin2012galois}, \cite[Theorem 1.3]{ChHaRo/65}, \cite[Theorem 1]{gow2009galois}}]\label{thm:skew_isomorphism}
    Let $\LL/\K$ be a finite Galois extension with $G=\Gal(\LL/\K)$. Then, 
    $$\begin{array}{rccl}\psi:&\LL[G]& \longrightarrow & \End_{\K}(\LL) \\
    &\sum\limits_ga_gg & \longmapsto & \left\{\begin{array}{ccc}
    \LL & \longrightarrow & \LL \\   
        \beta & \longmapsto & \sum_ga_gg(\beta) 
    \end{array} \right.
    \end{array}$$
    is a $\K$-algebra isomorphism.
\end{theorem}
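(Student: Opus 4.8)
The plan is to verify that $\psi$ is a well-defined $\K$-algebra homomorphism, establish injectivity via Artin's independence of characters, and then conclude bijectivity by a dimension count.

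First I would check that $\psi$ is $\K$-linear: this is immediate from the way the sum on $\LL[G]$ was defined, since $\psi$ acts coefficient-wise and addition in $\End_{\K}(\LL)$ is the pointwise addition of $\K$-linear maps. I would then verify that $\psi$ is multiplicative, which is precisely the computation justifying the twisted definition of composition on $\LL[G]$. It suffices to check this on monomials and extend by distributivity: applying $\psi(a_g g) \circ \psi(b_h h)$ to an element $\beta \in \LL$ gives $a_g\, g\big(b_h\, h(\beta)\big) = a_g\, g(b_h)\, (g\cdot h)(\beta)$, using that $g$ is a ring homomorphism, while $\psi\big((a_g g)\circ(b_h h)\big) = \psi\big(a_g g(b_h)(g\cdot h)\big)$ sends $\beta$ to the same expression. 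Finally $\psi$ maps the unit $1\cdot e$, with $e$ the identity of $G$, to $\mathrm{id}_{\LL}$, so $\psi$ is a homomorphism of $\K$-algebras.

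The core step is injectivity. Suppose $\psi\big(\sum_{g\in G} a_g g\big) = 0$, that is, $\sum_{g\in G} a_g\, g(\beta) = 0$ for every $\beta \in \LL$. Restricting to $\beta \in \LL^{\ast}$ and viewing the distinct automorphisms $g \in G$ as distinct characters $\LL^{\ast} \to \LL^{\ast}$, Artin's theorem on the linear independence of characters forces $a_g = 0$ for all $g \in G$. Hence $\ker \psi = \{0\}$. To conclude, I would compare $\K$-dimensions: as a $\K$-vector space $\LL[G]$ admits the basis $\{\lambda g : \lambda \text{ ranging over a } \K\text{-basis of } \LL,\ g \in G\}$, so $\dim_{\K}\LL[G] = |G|\cdot[\LL:\K] = m^2$, while $\dim_{\K}\End_{\K}(\LL) = m^2$ by the isomorphism \eqref{eq:isomorphism}. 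An injective $\K$-linear map between $\K$-vector spaces of equal finite dimension is an isomorphism, so $\psi$ is bijective.

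The main obstacle is the injectivity step: the homomorphism properties and the dimension count are formal, but injectivity is exactly where Artin's independence of characters is indispensable. It is what prevents the distinct ``monomials'' $g$ from collapsing under $\psi$ and guarantees that the twisted algebraic relations encoded in $\LL[G]$ are represented faithfully inside $\End_{\K}(\LL)$.
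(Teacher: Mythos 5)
Your proof is correct and follows exactly the route the paper indicates: the paper gives no proof of its own, citing the result as a direct consequence of Artin's theorem on independence of characters, which is precisely the key injectivity step in your argument. The remaining steps (homomorphism verification and the dimension count $\dim_{\K}\LL[G]=|G|\cdot[\LL:\K]=m^2=\dim_{\K}\End_{\K}(\LL)$) are the standard formal complements, so there is nothing to object to.
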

For the general theory of rank-metric codes over general Galois extension in this setting, we refer the reader to \cite{augot2021rank}. 

The last ingredient for constructing MRD codes is to add the additional property of having a Galois group which is cyclic. Indeed, when $G=\langle \sigma \rangle=\{\sigma^0=\mathrm{id},\sigma, \sigma^2,\ldots,\sigma^{m-1}\}$, every element  $a\in \LL[\sigma]:=\LL[\langle \sigma \rangle]$ can be expressed as
$$a:=\sum_{i=0}^{m-1}a_i\sigma^i.$$
With such a representation, we can define a natural notion of \textbf{$\sigma$-degree} for a nonzero element $a\in\LL[\sigma]$, given by
$$\deg_\sigma(a):=\max\{i\,:\, a_i \neq 0\}.$$
The notion of $\sigma$-degree is then crucial for the construction of MRD codes. This is explained by the following result, which relates the $\sigma$-degree of an element in $\LL[G]$ with the nullity of the corresponding $\K$-endomorphism of $\LL$ under the map $\psi$ of Theorem \ref{thm:skew_isomorphism}. This result was proven (in an equivalent formulation) in \cite{de78} for the case of finite fields, when $\sigma$ is the $q$-Frobenius element, and in \cite{guralnick1994invertible}, for general cyclic Galois extension. 
\begin{theorem}[{see  \cite{guralnick1994invertible}, \cite[Theorem 5]{gow2009galoisBis}}]\label{thm:sigmadegree_bound}
    Let $a\in \LL[\sigma]$ be nonzero. Then
    $$\dim_{\K}\ker(\psi(a)) \leq \deg_{\sigma}(a).$$
\end{theorem}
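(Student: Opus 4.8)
The plan is to prove the bound by induction on $d:=\deg_\sigma(a)$, exploiting the fact that, by Theorem \ref{thm:skew_isomorphism}, $\psi$ is a $\K$-algebra isomorphism, so any factorization of $a$ in the skew ring $\LL[\sigma]$ translates into a composition of the corresponding $\K$-endomorphisms of $\LL$. This mirrors the classical fact that a $q$-linearized polynomial of $q$-degree $d$ over a finite field has a root space of $\F_q$-dimension at most $d$; here $\LL[\sigma]$ plays the role of the ring of linearized polynomials and $\sigma$-degree replaces $q$-degree. The base case $d=0$ is immediate: $\psi(a)$ is multiplication by the nonzero scalar $a_0\in\LL$, hence injective, and $\dim_\K\ker\psi(a)=0$.

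For the inductive step I would assume $d\ge 1$ and the statement for all elements of smaller $\sigma$-degree. If $\ker\psi(a)=\{0\}$ there is nothing to prove, so fix a nonzero $\gamma\in\ker\psi(a)$. The key step is to peel off this root by producing a \emph{right} factor of $\sigma$-degree one that vanishes on $\gamma$. Set $c:=\sigma(\gamma)/\gamma\in\LL$ and $p:=\sigma-c\cdot\mathrm{id}\in\LL[\sigma]$; then $\psi(p)(\beta)=\sigma(\beta)-c\beta$, so $\psi(p)(\gamma)=0$. Moreover $\psi(p)(\beta)=0$ forces $\sigma(\beta/\gamma)=\beta/\gamma$, and since $\LL/\K$ is Galois with $G=\langle\sigma\rangle$, the fixed field of $\sigma$ is exactly $\K$; hence $\ker\psi(p)=\K\gamma$ is one-dimensional. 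This is the only place where the cyclic Galois hypothesis is genuinely used. Because $p$ has invertible leading coefficient, right Euclidean division in the Ore domain $\LL[\sigma]$ gives $a=q\circ p+r$ with $r\in\LL$ a constant; evaluating $\psi$ at $\gamma$ yields $r\gamma=\psi(a)(\gamma)=0$, so $r=0$ and $a=q\circ p$ with $\deg_\sigma q=d-1$.

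Finally I would combine these ingredients with rank--nullity. Since $\psi(a)=\psi(q)\circ\psi(p)$, the map $\psi(p)$ sends $\ker\psi(a)$ into $\ker\psi(q)$, and its kernel restricted to $\ker\psi(a)$ is $\ker\psi(p)\cap\ker\psi(a)=\K\gamma$. Thus
\[
\dim_\K\ker\psi(a)\le 1+\dim_\K\ker\psi(q)\le 1+(d-1)=d,
\]
where the last inequality is the induction hypothesis applied to $q$. I expect the main obstacle to be the bookkeeping forced by noncommutativity: one must make sure that $p$ is placed on the \emph{right} so that $\psi(p)$ acts first, that right (rather than left) division is the correct version to invoke, and that the identification $\ker\psi(p)=\K\gamma$ really rests on $\LL^{\langle\sigma\rangle}=\K$. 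An alternative, equivalent route avoids the recursion by taking $d+1$ putative $\K$-independent kernel elements $\beta_0,\dots,\beta_d$ and noting that the nonzero vector $(a_0,\dots,a_d)$ lies in the kernel of the generalized Moore matrix $\bigl(\sigma^i(\beta_j)\bigr)_{i,j}$; one would then have to prove that this Moore matrix is nonsingular precisely when the $\beta_j$ are $\K$-linearly independent. Since that nonsingularity is itself most cleanly established by the same induction, I would favor the factorization argument above.
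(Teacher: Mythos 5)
The paper does not actually prove Theorem~\ref{thm:sigmadegree_bound}: it is stated with citations to \cite{guralnick1994invertible} and \cite[Theorem 5]{gow2009galoisBis}, so there is no internal proof to compare against. Your argument is correct and self-contained, and it is essentially the standard factorization argument from the skew-polynomial literature (in the spirit of \cite{lam1988vandermonde}, which the paper itself points to right after the theorem): the base case, the identification $\ker\psi(p)=\K\gamma$ for $p=\sigma-(\sigma(\gamma)/\gamma)\,\mathrm{id}$ via the Galois-theoretic fact $\LL^{\langle\sigma\rangle}=\K$, the right division $a=q\circ p$ with $\deg_\sigma q=d-1$, and the rank--nullity induction all check out; this is in substance the proof found in the cited references.

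One slip you should correct, though it does not invalidate the argument: $\LL[\sigma]=\LL[G]$ is \emph{not} an Ore domain. By Theorem~\ref{thm:skew_isomorphism} it is isomorphic to $\End_{\K}(\LL)\cong\K^{m\times m}$, so it has zero divisors; concretely, $(\sigma-\mathrm{id})\circ(\sigma^{m-1}+\cdots+\sigma+\mathrm{id})=\sigma^m-\mathrm{id}=0$ in $\LL[G]$. The ring with genuine right Euclidean division is the Ore extension $\LL[x;\sigma]$, of which $\LL[G]$ is the quotient by the two-sided ideal generated by $x^m-1$. Your division step survives this correction for either of two reasons: you may run the division algorithm directly in $\LL[G]$, since termination needs only the strict drop in $\sigma$-degree and the invertibility of the leading coefficient of $p$, not the absence of zero divisors; or you may lift $a$ to $\LL[x;\sigma]$, divide there, and note that since $\deg_\sigma a\le m-1$ no exponent ever reaches $m$, so the identity $a=q\circ p+r$ descends verbatim to $\LL[G]$. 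With that justification inserted in place of the phrase ``Ore domain,'' the proof is complete.
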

The reader may have noticed that Theorem \ref{thm:sigmadegree_bound} represents a noncommutative analogue of the schoolbook result that the number of roots of a nonzero polynomial over a field cannot exceed its degree. This is indeed the case, and for more general results in this direction we refer the interested reader to \cite{lam1988vandermonde} or \cite[Theorem 3.4]{neri2022twisted}.

At this point, we have all the ingredients needed to exhibit the most prominent construction of MRD codes. This was exhibited for finite fields first by Delsarte \cite{de78}  and later by Gabidulin \cite{ga85a}. However, the proofs of their results were strongly relying on the finiteness of the field. For general fields admitting a cyclic Galois extension, this construction was -- to the best of our knowledge -- first given by Guralnick \cite{guralnick1994invertible}, who used a different approach. Independently, it can also be found in the works of Roth \cite{roth1996tensor} and  Augot-Loidreau-Robert \cite{augot2018generalized}.

\begin{theorem}[{see \cite{de78,guralnick1994invertible}}]\label{thm:DG_construction}
    For every positive integer $d\le m$, the set
    $$\mathcal G_d:=\{ a \in \LL[\sigma] \,:\, \deg_\sigma(a) \le m-d \}\cup\{0\}\subseteq \LL[\sigma]$$
    defines an $[m\times m, m(m-d+1),d]_{\K}$ MRD code under the map $\psi$.
\end{theorem}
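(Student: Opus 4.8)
The plan is to realize $\mathcal{G}_d$ as a rank-metric code inside $\sqmat$ via the isomorphism $\psi$ of Theorem~\ref{thm:skew_isomorphism} composed with the fixed identification $\End_{\K}(\LL) \cong \sqmat$ of \eqref{eq:isomorphism}, and then to verify the three defining properties: that it is a $\K$-linear subspace, that its $\K$-dimension equals $m(m-d+1)$, and that its minimum rank distance equals $d$. Since $\psi$ is a $\K$-algebra isomorphism, it is in particular $\K$-linear, so $\psi(\mathcal{G}_d)$ is a $\K$-subspace of $\sqmat$ precisely when $\mathcal{G}_d$ is a $\K$-subspace of $\LL[\sigma]$; this is immediate because $\mathcal{G}_d = \{a : \deg_\sigma(a) \le m-d\} \cup \{0\}$ consists exactly of those elements whose coefficients $a_{m-d+1}, \ldots, a_{m-1}$ all vanish, a $\K$-linear (indeed $\LL$-linear) condition.

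For the dimension, I would count degrees of freedom directly. An element $a = \sum_{i=0}^{m-1} a_i \sigma^i$ lies in $\mathcal{G}_d$ iff $a_i = 0$ for $i > m-d$, leaving the coefficients $a_0, \ldots, a_{m-d}$ free in $\LL$. That is $m-d+1$ free coefficients, each ranging over an $m$-dimensional $\K$-vector space, so
$$\dim_{\K}(\mathcal{G}_d) = (m-d+1)\cdot m = m(m-d+1),$$
and $\psi$ being an isomorphism preserves this dimension. This matches the Singleton-like bound of Theorem~\ref{thm:singbound} in the square case $m=n$, which reads $k \le m(m-d+1)$, so once the minimum distance is shown to be exactly $d$, the code will automatically be MRD.

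The main content, and the step where I expect the real work to lie, is establishing that the minimum rank distance is exactly $d$. Here Theorem~\ref{thm:sigmadegree_bound} is the crucial tool. For any nonzero $a \in \mathcal{G}_d$ we have $\deg_\sigma(a) \le m-d$, and the endomorphism $\psi(a)$ has rank equal to $m - \dim_{\K}\ker(\psi(a))$ under the identification \eqref{eq:isomorphism}. By Theorem~\ref{thm:sigmadegree_bound}, $\dim_{\K}\ker(\psi(a)) \le \deg_\sigma(a) \le m-d$, hence
$$\rk(\psi(a)) = m - \dim_{\K}\ker(\psi(a)) \ge m - (m-d) = d.$$
This proves $\drk(\psi(\mathcal{G}_d)) \ge d$, i.e.\ every nonzero codeword has rank at least $d$.

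It remains to check that this lower bound is attained, so that the minimum distance is not strictly larger than $d$; equivalently, I must exhibit a nonzero $a \in \mathcal{G}_d$ with $\rk(\psi(a)) = d$, or argue indirectly. The cleanest route is to invoke the duality already available: a $\K$-subspace of dimension $m(m-d+1)$ with minimum rank distance strictly exceeding $d$ would violate the Singleton-like bound of Theorem~\ref{thm:singbound}, since that bound forces $k \le m(m-d'+1)$ whenever the minimum distance is $d' > d$, giving $m(m-d+1) \le m(m-d'+1) < m(m-d+1)$, a contradiction. Thus the minimum distance is exactly $d$ and the code meets the bound, so $\psi(\mathcal{G}_d)$ is an $[m\times m, m(m-d+1), d]_{\K}$ MRD code. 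The genuinely nontrivial ingredient is Theorem~\ref{thm:sigmadegree_bound}, which we are permitted to assume; everything else is bookkeeping with dimensions and an appeal to the Singleton-like bound.
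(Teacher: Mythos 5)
Your proof is correct and follows exactly the route the paper intends: the paper states this theorem without giving its own proof (deferring to \cite{de78,guralnick1994invertible}), but the ingredients it assembles immediately beforehand --- the isomorphism $\psi$ of Theorem~\ref{thm:skew_isomorphism} and the kernel bound of Theorem~\ref{thm:sigmadegree_bound} --- are precisely what you use, together with the $\LL$-dimension count and the appeal to Theorem~\ref{thm:singbound} to force the minimum distance to equal $d$. In particular, your indirect Singleton-bound argument cleanly avoids having to exhibit a codeword of rank exactly $d$, and the whole argument is gap-free.
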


The MRD codes defined in Theorem \ref{thm:DG_construction} are known as \textbf{Delsarte-Gabidulin codes} (or just \textbf{Gabidulin codes}), due to the fact that this construction has been first independently exhibited by both Delsarte and Gabidulin. More recently, new constructions have been found, see e.g. \cite{lunardon2018generalized,sheekey2020new}. 

Combining Theorem \ref{thm:DG_construction} with Proposition \ref{prop:puncturingMRD}, we can state the following result, which concludes the question about the existence of MRD codes over certain fields.

\begin{theorem}\label{thm:existenceMRD}
    Let $n,m,d$ be positive integers such that $1\le d \le n \le m$, and let $\K$ be a field admitting a cyclic Galois extension of degree $m$. Then, there exists a $[m\times n, m(n-d+1),d]_{\K}$ MRD code. 
    In other words, over such a field, the Singleton-bound of Theorem \ref{thm:singbound} is tight for every admissible set of parameters.
\end{theorem}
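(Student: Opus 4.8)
The plan is to combine the square MRD construction of Theorem~\ref{thm:DG_construction} with the puncturing procedure of Proposition~\ref{prop:puncturingMRD}, handling the two regimes of $d$ separately. The hypothesis gives a cyclic Galois extension $\LL/\K$ of degree $m$, so Theorem~\ref{thm:DG_construction} immediately supplies, for every $d'$ with $1\le d'\le m$, an $[m\times m, m(m-d'+1),d']_{\K}$ MRD code $\mathcal G_{d'}$. The goal is to turn each such square code into a non-square $[m\times n, m(n-d+1),d]_{\K}$ MRD code for the prescribed target distance $d\le n$.

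First I would treat the regime where $m-n<d$. Here I set $d':=d+m-n$, which satisfies $1\le d'\le m$ since $0\le d-1$ and $d\le n$ give $d'\le m$; moreover $m-n<d$ is exactly the hypothesis needed to invoke Proposition~\ref{prop:puncturingMRD}. I apply that proposition to the square code $\mathcal G_{d'}$ of distance $d'$, choosing any $\mathcal I\subseteq[n]$ (more precisely any index set of size $m-n$ inside $[m]$) with $|\mathcal I|=m-n$. The proposition then certifies that $\pi(\mathcal G_{d'},\mathcal I)$ is an $[m\times n, m(m-d'+1),d'-m+n]_{\K}$ MRD code; substituting $d'=d+m-n$ rewrites the dimension as $m(m-(d+m-n)+1)=m(n-d+1)$ and the distance as $d'-m+n=d$, which is precisely the claimed code. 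Note that the dimension count also matches the equality case of Theorem~\ref{thm:singbound} with $\min\{m,n\}=n$, confirming the MRD property independently.

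Second I would handle the complementary regime $d\le m-n$ (equivalently $m-n\ge d$), where Proposition~\ref{prop:puncturingMRD} does not directly apply because its hypothesis $m-n<d$ fails. The cleanest fix is to observe that one may enlarge the target distance: it suffices to produce, for the boundary value $d_0:=m-n+1$, an $[m\times n, m(n-d_0+1),d_0]_{\K}$ MRD code via the previous paragraph, and then puncture further, or alternatively to apply Proposition~\ref{prop:puncturingMRD} with a square code of a suitably larger distance so that its punctured distance lands on the desired $d$. Concretely, for any $d\le n$ one can still choose $d':=d+m-n\le m$ as before; the only point requiring care is verifying the strict inequality $m-n<d'$, which holds automatically since $d'=d+m-n$ and $d\ge 1$ force $d'\ge m-n+1>m-n$. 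Thus the single substitution $d'=d+m-n$ in fact covers all $d$ with $1\le d\le n$ uniformly, and the apparent case split collapses.

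I expect the main obstacle to be precisely this bookkeeping around the hypothesis $m-n<d$ of Proposition~\ref{prop:puncturingMRD}: one must confirm that the shifted distance $d'=d+m-n$ always satisfies both $d'\le m$ (so that $\mathcal G_{d'}$ exists) and $m-n<d'$ (so that the proposition applies), for every admissible $d$ in the range $1\le d\le n\le m$. Once these two inequalities are checked, the result is a direct chaining of Theorem~\ref{thm:DG_construction} and Proposition~\ref{prop:puncturingMRD}, and the final sentence of the statement, asserting tightness of the Singleton-like bound, follows because the constructed parameters meet the bound of Theorem~\ref{thm:singbound} with equality.
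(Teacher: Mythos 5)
Your proposal is correct and follows essentially the same route as the paper: both take the square Delsarte--Gabidulin code $\mathcal G_{d'}$ with $d'=d+m-n$ from Theorem~\ref{thm:DG_construction} and puncture it via Proposition~\ref{prop:puncturingMRD}, checking the same two inequalities $d'\le m$ and $m-n<d'$. The only cosmetic difference is that the paper separates the case $n=m$ (where no puncturing is needed) from $n<m$, whereas you observe the substitution works uniformly; your bookkeeping, including the observation that the proposition's hypothesis applies to $d'$ rather than $d$ and hence holds automatically, is sound.
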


\begin{proof}
    If $n=m$, then the code $\mathcal G_d$ of Theorem \ref{thm:DG_construction} is a $[m\times m,m(m-d+1),d]_{\K}$ MRD code. Thus, let us assume $n<m$ and let us call $e=m-n$ and $d'=d+e$. Again, using Theorem \ref{thm:DG_construction}, we can construct a $[m\times m,m(m-d'+1),d']_{\K}$ MRD code $\mathcal G_{d'}$. At this point, using Proposition \ref{prop:puncturingMRD} we obtain a
    $[m\times (m-e),m(m-d'+1),d'-e]_{\K}$ MRD code. Since $m-e=n$, $m-d'=n-d$ and $d'-e=d$, this is just an
    $[m\times n,m(n-d+1),d]_{\K}$ MRD code.
\end{proof}

\begin{remark}
It must be noted that Delsarte-Gabidulin codes given in Theorem \ref{thm:DG_construction} have a natural structure of $\LL$-linearity, since they are defined as $\LL$-subspaces of $\LL[\sigma]$. Such a linearity can be shown to be also preserved by the puncturing operation of Proposition \ref{prop:puncturingMRD}, and hence, all the codes which we have exhibited for establishing  Theorem \ref{thm:existenceMRD} possess a hidden $\LL$-linearity.  
\end{remark}

In the next section, we will deepen the concept of linearity of rank-metric codes over an extension field, and give additional points of view for codes with this additional property. 

\section{Linear MRD codes}\label{sec:linearMRD}

Theorem \ref{thm:DG_construction} exhibits a construction of MRD codes over a field $\K$ by means of an auxiliary extension field $\LL$. If one looks deeper at this construction, one can see that the codes $\mathcal G_d$  have an additional property of linearity, since they are not only $\K$-linear, but they also form an $\LL$-linear subspace of $\LL[\sigma]$. Motivated by this observation, in this section we will look at rank-metric codes that can be seen as linear over an extension field. In particular, we will look at them
in terms of vectors of $\mathbb{L}^n$, where $\mathbb{L}$ is an extension of $\mathbb{K}$ of degree $m$, as it was done by Gabidulin in \cite{ga85a} for finite fields.

\begin{definition}
For any $v=(v_1,\ldots,v_n)\in \mathbb{L}^n$, we define the \textbf{rank weight} as
\[ \wt(v)=\dim_{\mathbb{K}}(\langle v_1,\ldots,v_n \rangle_{\mathbb{K}}), \]
where $\langle v_1,\ldots,v_n \rangle_{\mathbb{K}}$ denotes the $\mathbb{K}$-span of the entries of the vector $v$.
The \textbf{rank distance} between two vectors $u,v \in \mathbb{L}^n$ is defined as $\dd(u,v)=\wt(u-v)$. 
\end{definition}

Now let $\Gamma=(\gamma_1,\ldots,\gamma_m)$ be an ordered basis of $\LL/\K$. Then, for any $v=(v_1,\ldots,v_n) \in \LL^n$, there exists $\Gamma(v)\in \K^{m\times n}$ such that
\[ v=(\gamma_1,\ldots,\gamma_m)\Gamma(v). \]

Notice that the $\K$-linear map defined by $\Gamma$, is a $\K$-linear isometry between the spaces $(\mathbb{K}^{m\times n},d_{\mathrm{rk}})$ and $(\mathbb{L}^{n},d)$. Hence we can equivalently define rank-metric codes in the space $(\mathbb{L}^{n},d)$.

\begin{definition}
Given a vector $v=(v_1,\ldots,v_n) \in \LL^n$, we define the \textbf{rank support} $\mathrm{supp}(v)$ of $v$ as $\mathrm{rowsp}(\Gamma(v))$.
This definition can be extended to a subspace of $\LL^n$ as follows.
Let $D$ be a subspace of $\LL^n$, we define the \textbf{rank support of $D$} as
\[ \mathrm{supp}(D)=\sum_{v \in D}\mathrm{supp}(v), \]
or equivalently $\mathrm{supp}(D) = \mathrm{rsupp}(\Gamma(D))$.
\end{definition}

We note that these definitions do not depend on the choice of the basis of $\LL/\K$ and the rank support of either a vector or of a vector space is a subspace of $\K^n$. Moreover, we clearly have that $\wt(v)=\dim_{\K}(\supp(v))$.

\begin{remark}
For any $\LL$-subspace $\C\subseteq \LL^n$ with $\C\ne \{0\}$ we have that $\Gamma(\C)$ is column-nondegenerate.
Indeed, let $v=(v_1,\ldots,v_n) \in\C\setminus\{0\}$. Without loss of generality, suppose that $v_1=1$ and consider the codewords $\beta_1 v,\ldots, \beta_m v$ of $\C$. It follows that the first column of $\Gamma(\beta_i v)$ will be the column with all zeros but a one in the $i$-th row.
Therefore,
\[ \mathrm{csupp}(\Gamma(\C))\supseteq \sum_{i=1}^m \mathrm{colspan}(\Gamma(\beta_i v))\supseteq \left\langle \begin{pmatrix} 1\\ 0 \\ \vdots \\ 0\end{pmatrix},\begin{pmatrix} 0\\ 1 \\ \vdots \\ 0\end{pmatrix},\ldots, \begin{pmatrix} 0\\ 0 \\ \vdots \\ 1\end{pmatrix} \right\rangle =\K^m. \]
Therefore, we will say that $\C$ is nondegenerate when $\Gamma(\C)$ is row-nondegenerate. Note that this corresponds to requiring that the $\K$-span of the columns of any generator matrix of $\C$ has dimension $n$. 
\end{remark}

\begin{definition}\label{def:L-linear_codes}
    An $[n,k,d]_{\LL/\K}$ (or $[n,k]_{\LL/\K}$) \textbf{rank-metric code} $\C$ is an $\mathbb{L}$-subspace of $\mathbb{L}^{n}$ equipped with the rank distance, where $d$ is its \textbf{minimum rank distance}, that is
    \[ d=\min\{ d(u,v) \, \colon \, u,v \in \C, u\ne v \}. \]
    Moreover, we define a \textbf{generator matrix} for $\C$ as a matrix in $\LL^{k\times n}$ whose rows are an $\LL$-basis of $\C$.
\end{definition}

In this section, we will restrict our attention to rank-metric codes that are $\mathbb{L}$-subspaces of $\mathbb{L}^{n}$ of dimension $k$ with $0<k<n$.

The map 
\[
\begin{array}{ccl}\LL^n \times \LL^n &  \longrightarrow & \LL \\ 
(u,v) &\longmapsto & uv^\top 
\end{array}\] 
is a symmetric and nondegenerate bilinear form on $\LL^n$.
The \textbf{dual} of an $[n,k,d]_{\LL/\K}$ code $\mathcal{C}$ is defined as 
\begin{equation}\label{eq:dualcodes}
\mathcal{C}^{\perp} =\left\{
v\in \LL^n \,\colon \, uv^\top=0 \mbox{ for all } v \in \C
\right\}.
\end{equation}

\begin{remark}
 We denoted by $\C^\perp$ the dual codes in both vector and matrix frameworks. This is not really an abuse of notation for the following reason. Let $\C$ be an $\Fmk$ code and let $\Gamma$ be a $\K$-basis of $\LL$. Consider the field trace of $\LL/\K$,
 $$\begin{array}{rcl}\mathrm{Tr}_{\LL/\K}:\LL& \longrightarrow &\K\\
 \alpha & \longmapsto & [\LL:\K(\alpha)]\sum\limits_{j=1}^{t}\sigma_j(\alpha),
 \end{array}$$
 where $t=\frac{m}{[\LL:\K(\alpha)]}$ and $\{\sigma_j(\alpha)\,:\, j\in [t]\}$
 is the set of roots of the minimal polynomial of $\alpha$ over $\K$.

 If we suppose that the extension $\LL/\K$ is separable (this is for instance the case of finite fields), then $\mathrm{Tr}_{\LL/\K}$ is nonzero and it induces a nondegenerate $\K$-bilinear form 
 $$\begin{array}{rcl}\phi: \LL\times \LL & \longrightarrow & \K \\
 (\alpha,\beta) & \longmapsto & \mathrm{Tr}_{\LL/\K}(\alpha\beta)\end{array}$$

Given a $\K$-basis $\Gamma=(\gamma_1,\ldots,\gamma_m)$ of $\LL$, let $\Gamma'=(\gamma_1',\ldots,\gamma_m')$ be the dual basis of $\gamma$ with respect to the $\K$-bilinear form $\phi$, that is, 
$$\phi(\gamma_i,\gamma_j')=\mathrm{Tr}_{\LL/\K}(\gamma_i\gamma'_j)=\begin{cases}
    1 & \mbox{ if } i=j, \\
    0 & \mbox{ if } i\neq j.
\end{cases}$$
Then, direct computations show that for every $u,v \in \LL^n$, it holds that
$$\mathrm{Tr}_{\LL/\K}(uv^\top)=\mathrm{Tr}(\Gamma(u)\Gamma'(v)^\top).$$
Thanks to this identity, if $v \in \C^\perp$, then $\Gamma'(v) \in \Gamma(\C)^\perp$, showing that $\Gamma'(\C^\perp)\subseteq \Gamma(\C)^\perp$. The opposite inclusion follows by a dimension argument. Thus, one has
$$\Gamma'(\C^\perp)= \Gamma(\C)^\perp.$$
In particular, $\Gamma(\mathcal{C}^\perp)$ and $\Gamma(\mathcal{C})^\perp$ are equivalent, and each one can be obtained from the other by multiplying by the change-of-basis matrix from $\Gamma$ to $\Gamma'$ or by its inverse.
For the finite field case, see \cite[Theorem 21]{ravagnani2016rank}.
\end{remark}

Similarly to what has been done for rank-metric codes in the matrix framework, we consider the following notion of equivalence of codes.

\begin{definition}\label{def:equiv_L_linear}
    Two $[n,k]_{\LL/\K}$ rank-metric codes $\C_1$ and $\C_2$ are \textbf{equivalent} if there exists an $\LL$-linear isometry $\varphi \colon (\LL^n,d) \longrightarrow (\LL^n,d)$ such that $\varphi(\C_1)=\C_2$.
\end{definition}

The $\LL$-linear isometries of the metric space $(\LL^n,d)$ have been classified; see \cite{berger2003isometries}. 

\begin{theorem}
    Two $[n,k]_{\LL/\K}$ rank-metric codes $\C_1$ and $\C_2$ are equivalent if and only if there exists $A \in \mathrm{GL}(n,\mathbb{K})$ such that
    \[\C_1=\C_2 A =\{vA \colon v \in \C_2\}.\]
\end{theorem}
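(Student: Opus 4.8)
The plan is to prove both directions of the equivalence, establishing that the $\LL$-linear isometries of $(\LL^n,d)$ are precisely the maps $v \mapsto vA$ for $A \in \mathrm{GL}(n,\K)$, and then deducing the stated characterization of code equivalence.

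First I would prove the easy direction: that every map of the form $\varphi(v) = vA$ with $A \in \GL(n,\K)$ is indeed an $\LL$-linear isometry. The $\LL$-linearity is immediate, since right-multiplication by a matrix with entries in $\K \subseteq \LL$ commutes with scalar multiplication by elements of $\LL$. To see that $\varphi$ preserves rank weight, I would use the interpretation via the basis map $\Gamma$. For $v \in \LL^n$, the weight $\wt(v) = \dim_{\K}(\langle v_1, \ldots, v_n\rangle_{\K})$ equals $\rk(\Gamma(v))$, and since $A$ has entries in $\K$, one checks $\Gamma(vA) = \Gamma(v)A$, so $\wt(vA) = \rk(\Gamma(v)A) = \rk(\Gamma(v)) = \wt(v)$ because $A$ is invertible over $\K$. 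Hence if $\C_1 = \C_2 A$ then $\C_1$ and $\C_2$ are equivalent via such a $\varphi$, which gives the ``if'' part of the statement.

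The harder direction is the converse: showing that any $\LL$-linear isometry $\varphi$ of $(\LL^n,d)$ has this specific form. Here I would invoke the classification of isometries cited from \cite{berger2003isometries}, which is precisely the tool stated to be available. The key point is that an isometry must map the weight-one vectors (those $v$ with $\wt(v)=1$, i.e. $v = \lambda e_i \cdot(\text{something})$, more precisely scalar multiples $\alpha w$ where $w \in \K^n$) to weight-one vectors, and $\LL$-linearity forces the action to be determined by its effect on a $\K$-basis of the weight-one cone; this rigidity pins down $\varphi$ to right-multiplication by a single invertible $\K$-matrix $A$. The main obstacle is verifying that $\LL$-linearity, combined with the weight-preservation of the minimal-weight vectors, rules out the ``transpose-type'' isometries that appear in the matrix setting (Theorem on $\K$-semilinear isometries) and leaves only the $A$-multiplication; this is exactly what the classification of \cite{berger2003isometries} supplies, so I would cite it rather than reprove it.

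Finally I would assemble the two directions. Given $\varphi$ an $\LL$-linear isometry with $\varphi(\C_1) = \C_2$, the classification writes $\varphi(v) = vA^{-1}$ for some $A \in \GL(n,\K)$ (choosing the inverse for notational convenience), so that $\C_2 = \varphi(\C_1) = \C_1 A^{-1}$, equivalently $\C_1 = \C_2 A$. Conversely, as shown above, the existence of such an $A$ yields an $\LL$-linear isometry carrying $\C_2$ to $\C_1$, establishing equivalence in the sense of Definition \ref{def:equiv_L_linear}. This closes the proof.
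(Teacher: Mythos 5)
Your proposal is correct and follows essentially the same route as the paper, which states this theorem without proof as an immediate consequence of the classification of $\LL$-linear isometries of $(\LL^n,d)$ cited from \cite{berger2003isometries}, exactly the classification you invoke (plus the routine verification of the easy direction). One minor caveat: that classification actually gives isometries of the form $v \mapsto \lambda v A$ with $\lambda \in \LL^*$ and $A \in \GL(n,\K)$, not just $v\mapsto vA$; since $\C_2$ is an $\LL$-subspace we have $\lambda\C_2 = \C_2$, so the scalar is absorbed and your conclusion $\C_1 = \C_2 A$ is unaffected.
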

\begin{remark}It must be noted that the definition of equivalence given in Definition \ref{def:equiv_L_linear} is stricter than the one given in Definition \ref{def:equiv_K_linear}. However, they are strongly related. Let us take $\C_1$ and $\C_2$ to be two $[n,k]_{\LL/\K}$ rank-metric codes.

First of all, if $\C_1$ and $\C_2$ are  equivalent according to Definition \ref{def:equiv_L_linear}, then there exists $B\in\GL(n,\K)$ such that $\C_1=\C_2B$. Hence, for every ordered basis $\Gamma$ of $\LL/\K$, we have $\Gamma(\C_1)=\Gamma(\C_2)B$, that is, $\Gamma(\C_1)$ and $\Gamma(\C_2)$ are equivalent according to Definition \ref{def:equiv_K_linear}.

On the other hand, assume that there exists an ordered basis $\Gamma=(\gamma_1,\ldots,\gamma_m)$ of $\LL/\K$ such that $\Gamma(\C_1)$ and $\Gamma(\C_2)$ are equivalent according to Definition \ref{def:equiv_K_linear}. Then, there exist $A\in \GL(m,\K), B\in \GL(n,\K)$ such that  $\Gamma(\C_1)=A\Gamma(\C_2)B$ or, when $n=m$, $\Gamma(\C_1)=A\Gamma(\C_2)^\top B$.  Assume that we are in the former case. In the special case that $n=m$ and that $\LL=\F_{q^n}$ and $\K=\F_q$ are finite fields, it was shown in \cite[Proposition 3.9]{sheekey2020rank} that $\C_1=\sigma^i(\C_2) B'$, where $B' \in \mathrm{GL}(n,\K)$, $i \in \{0,\ldots,n-1\}$ and the automorphism $\sigma:\LL \to \LL$ is the $q$-Frobenius automorphism. 
Therefore, over finite fields and when $n=m$, if $\Gamma(\C_1)$ and $\Gamma(\C_2)$ are equivalent according to Definition \ref{def:equiv_K_linear} (but not via the transpose), then there exists $i \in \{0,\ldots,n-1\}$ such that $\C_1$ and $\sigma^i(\C_2)$ are equivalent according to Definition \ref{def:equiv_L_linear}. 
\end{remark}

 {We now move on to the study of generalized weights in the rank metric, which are an important invariant of a rank-metric code originally introduced in \cite{oggier2012existence,ducoat2015generalized}, and have  applications in secure network coding \cite{kurihara2015relative}.

\begin{definition}
    Let $\C$ be a nondegenerate $[n,k]_{\LL/\K}$ rank-metric code.
    For every $r \in [k]$, the \textbf{$r$-th generalized rank weight} of $\C$ is the integer
    \[ d_r(\C):=\min \{\dim_{\K}(\mathrm{supp}(\mathcal D)) : \mathcal D\subseteq \C\,\, \text{with}\,\, \dim_{\LL}(D)=r\}. \]
\end{definition}

When we need to emphasize the generalized rank weights of $\mathcal{C}$, we will say that $\mathcal{C}$ is an $[n,k,(d_1,\ldots,d_k)]_{\LL/\K}$ code, where $d_i= d_i(\mathcal{C})$ for each $i \in [k]$. Note that $\mathcal{C}$ is nondegenerate if and only if $d_k(\mathcal{C})=n$.

We will need the following properties of the generalized weights, see e.g. \cite{kurihara2015relative,ducoat2015generalized}.

\begin{proposition}\label{prop:Wei}
    Let $\C$ be a nondegenerate $[n,k,d]_{\LL/\K}$  code. We have that
    \begin{enumerate}
        \item (Monotonicity) $d=d_1(\C)<d_2(\C)<\ldots<d_k(\C)=n$;
        \item (Wei-type duality) $\{ d_i(\C) \colon i \in [k]\} \cup \{ n+1-d_i(\C^\perp) \colon i \in [n-k] \}=[n]$.
    \end{enumerate}
\end{proposition}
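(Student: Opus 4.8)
The plan is to prove the two statements of Proposition~\ref{prop:Wei} by combining the definition of generalized rank weights with the support/duality machinery developed earlier in the section. Throughout I would work with a fixed ordered basis $\Gamma$ of $\LL/\K$, so that $\mathcal D \subseteq \C$ corresponds to a matrix code $\Gamma(\mathcal D)$ and $\dim_\K(\supp(\mathcal D)) = \dim_\K(\rsup(\Gamma(\mathcal D)))$; this lets me translate dimension counts about $\LL$-subspaces into statements about row supports of matrices over $\K$.

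For monotonicity, the inequalities $d_r(\C) \le d_{r+1}(\C)$ are immediate, since any $\LL$-subspace $\mathcal D$ of dimension $r$ sits inside one of dimension $r+1$, whose support contains $\supp(\mathcal D)$. The content is \emph{strict} monotonicity. First I would note $d_1(\C) = d$: a one-dimensional $\mathcal D = \langle v \rangle_\LL$ has $\supp(\mathcal D) = \supp(v)$, and $\dim_\K \supp(v) = \wt(v)$, so minimizing over nonzero $v$ recovers the minimum distance. To see $d_r < d_{r+1}$, I would take an optimal $\mathcal D$ of dimension $r+1$ achieving $d_{r+1}$; the key lemma is that one can find a nonzero $v \in \mathcal D$ whose support is strictly contained in $\supp(\mathcal D)$, yielding a subspace $\mathcal D' = \mathcal D \cap H$ (for a suitable hyperplane $H$ of $\K^n$ avoiding some coordinate direction in the support) of dimension $r$ with strictly smaller support. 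The equality $d_k(\C) = n$ is exactly the nondegeneracy hypothesis, as already remarked after the definition.

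For the Wei-type duality, the plan is to show that the two sets $A := \{d_i(\C) : i \in [k]\}$ and $B := \{n+1 - d_j(\C^\perp) : j \in [n-k]\}$ partition $[n]$. Since $|A| = k$ and $|B| = n-k$ by strict monotonicity (which forces all the $d_i$, and separately all the $d_j(\C^\perp)$, to be distinct), it suffices to prove $A \cap B = \emptyset$; the counting then forces $A \cup B = [n]$. The crux is a shortening/duality identity: for a $\K$-subspace $U \subseteq \K^n$, define the shortened code $\C(U) = \{v \in \C : \supp(v) \subseteq U\}$, and relate $\dim_\LL \C(U)$ to $\dim_\LL \C^\perp(U^\perp)$ via a counting formula of the form $\dim_\LL \C(U) = \dim_\LL \C - \operatorname{codim}_\K(U) + \dim_\LL \C^\perp(U^\perp)$. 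Using this, I would argue that an integer $s \in [n]$ lies in $A$ precisely when there is a jump in $\dim_\LL \C(U)$ as $U$ ranges over subspaces of dimension $s$, and then show this jump condition for $\C$ at $s$ is complementary to the corresponding jump condition for $\C^\perp$ at $n+1-s$, giving $A \cap B = \emptyset$.

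The main obstacle I anticipate is establishing the shortening-duality counting identity in full generality, since the arguments in the literature (\cite{kurihara2015relative,ducoat2015generalized}) are typically written for finite fields and may implicitly use finiteness or separability of $\LL/\K$ through the trace form. Here the ambient extension is only assumed to have degree $m$ and need not be separable, so I would have to verify that the relation $\Gamma'(\C^\perp) = \Gamma(\C)^\perp$ (and hence the interaction between rank support and the dual pairing) goes through, or else phrase the duality purely in terms of the matrix-code pairing $\operatorname{Tr}(XY^\top)$ and the row/column support duality, avoiding the field trace altogether. In practice, since the proposition is stated as a known result with references, the cleanest route is to reduce both parts to the matrix framework and cite the corresponding statements for $\K$-linear matrix rank-metric codes, checking only that the $\LL$-linearity and the support translations are compatible.
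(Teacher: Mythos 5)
The paper never proves Proposition~\ref{prop:Wei}: it is stated with a pointer to the finite-field literature (\cite{kurihara2015relative,ducoat2015generalized}), implicitly asserting that those arguments survive the passage from $\F_{q^m}/\F_q$ to an arbitrary degree-$m$ extension $\LL/\K$. Your plan therefore does strictly more than the paper, and its core is sound and genuinely field-independent. The shortening--duality identity you propose is correct over any field, and no trace form is needed to establish it: writing $\LL U$ for the $\LL$-span of a $\K$-subspace $U\subseteq\K^n$, one checks that $\supp(v)\subseteq U$ if and only if $v\in\LL U$, hence $\C(U)=\C\cap\LL U$; moreover $(\LL U)^{\perp}=\LL(U^{\perp})$, and the identity $\dim_{\LL}\C(U)=k-(n-\dim_{\K}U)+\dim_{\LL}\C^{\perp}(U^{\perp})$ follows from the dimension formula for the nondegenerate $\LL$-bilinear form $(u,v)\mapsto uv^{\top}$ of \eqref{eq:dualcodes}. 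Setting $f(s)=\max\{\dim_{\LL}\C(U)\st\dim_{\K}U=s\}$ and letting $g$ be its analogue for $\C^{\perp}$, the identity gives $f(s)=k-n+s+g(n-s)$; both functions are nondecreasing, increase by at most one per step, and satisfy $d_r(\C)=\min\{s\st f(s)\ge r\}$, so $s$ is a jump of $f$ exactly when $n+1-s$ is not a jump of $g$. This yields $\{d_i(\C)\}=[n]\setminus\{n+1-d_j(\C^{\perp})\}$ directly, which is even cleaner than your ``disjointness plus counting'' formulation.

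Two corrections are needed to make the plan rigorous. First, in the monotonicity step, ``$\mathcal D\cap H$'' with $H$ a hyperplane of $\K^n$ is a type mismatch ($\mathcal D\subseteq\LL^n$), and your stated key lemma (a single nonzero vector with strictly smaller support) is not the right mechanism -- it is neither needed nor obviously sufficient to produce an $r$-dimensional subspace with smaller support. The clean step is: take any $\K$-hyperplane $H$ of $\K^n$ with $\supp(\mathcal D)\not\subseteq H$ and set $\mathcal D'=\{v\in\mathcal D\st\supp(v)\subseteq H\}=\mathcal D\cap\LL H$; since $\LL H$ has $\LL$-codimension $1$ in $\LL^n$ and $\mathcal D\not\subseteq\LL H$, one gets $\dim_{\LL}\mathcal D'=r$ and $\supp(\mathcal D')\subseteq\supp(\mathcal D)\cap H$, whence $d_r(\C)\le d_{r+1}(\C)-1$. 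Second, your anticipated obstacle is a non-obstacle, and your fallback would make things worse rather than better: separability and the relation $\Gamma'(\C^{\perp})=\Gamma(\C)^{\perp}$ are needed only to compare the vector-code dual with the matrix-code dual, whereas every object in the proposition (supports, generalized weights, $\C^{\perp}$) is intrinsic to the vector framework and to the form $uv^{\top}$, which is nondegenerate over any field. Retreating to the matrix framework and citing results on $\K$-linear matrix codes is precisely the route where separability genuinely enters, and where generalized weights have a different (anticode-based) theory; the argument should stay entirely on the $\LL^n$ side, as your main plan in fact does.
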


\subsection{Systems and rank-metric codes}

The use of the geometric viewpoint for the classical Hamming-metric codes has provided useful insights in coding theory; see e.g. \cite{tsfasman2007algebraic}. In the rank metric, the geometric approach was  first developed in  \cite{randrianarisoa2020geometric}, by introducing the notion of $q$-systems.
Such a correspondence has been widely used to study for instance minimal rank-metric codes, one-weight codes, MRD codes and covering codes; see e.g. \cite{alfarano2022linear,randrianarisoa2020geometric,zini2021scattered,bonini2023saturating}.
In this section we are going to extend this notion to the framework of any finite extension of fields.

\begin{definition}
    An $[n,k,(d_1,\ldots,d_k)]_{\LL/\K}$ (or $[n,k]_{\LL/\K}$) \textbf{system} $\mathcal{U}$ is an $n$-dimensional $\mathbb{K}$-subspace of $\mathbb{L}^{k}$ such that $\langle \mathcal{U}\rangle_{\LL}=\LL^k$ and 
    \[ d_i=n-\max\{ \dim_{\K}(\mathcal{U}\cap \mathcal{H}) \colon \mathcal{H} \text{ is a } (k-i)\text{-dimensional } \LL\text{-subspace of } \LL^k \}, \]
    for any $i \in \{1,\ldots,k\}$.
    Moreover, we say that two systems $\mathcal{U}$ and $\mathcal{V}$ are \textbf{equivalent} if there exists an  $\phi$ in $\GL(k,\LL)$ such that $\phi(\mathcal{U})=\mathcal{V}$.
\end{definition}

We will show that there is a one-to-one correspondence between equivalence classes of nondegenerate $[n,k,(d_1,\ldots,d_k)]_{\LL/\K}$ rank-metric codes and equivalence classes of $[n,k,(d_1,\ldots,d_k)]_{\LL/\K}$ systems.

Let $\C$ be a nondegenerate $[n,k,(d_1,\ldots,d_k)]_{\LL/\K}$ rank-metric code and let $G\in \LL^{k\times n}$ be a generator matrix of $\C$.
Define $\mathcal{U}\subseteq \LL^k$ as the $\K$-columnspan of $G$. Since $\C$ is nondegenerate then $\dim_{\K}(\mathcal{U})=n$. 
Following \cite{randrianarisoa2020geometric} (see also \cite[Lemma 3.7]{alfarano2022linear} and \cite[Theorem 3.1]{neri2023geometry}), one can prove a relation between the supports of the subcodes of $\mathcal{C}$ and the intersection of the related system with $\LL$-subspaces of $\LL^k$.

For this
purpose, we need the following map
\[
\begin{array}{rccl}
\psi_{G}:& \mathbb{K}^n& \longrightarrow &\mathcal{U} \\
&\lambda & \longmapsto & \lambda G^\top,\end{array}\]
where $G$ is a generator matrix for $\C$.

\begin{lemma}\label{lem:supports}
    Let $\C$ be a nondegenerate $[n,k,(d_1,\ldots,d_k)]_{\LL/\K}$ rank-metric code and let $G\in \LL^{k\times n}$ be a generator matrix of $\C$. Let $\mathcal{U}\subseteq \LL^k$ be the $\K$-columnspan of $G$. For any $i$-dimensional subspace $\mathcal D=\langle v_1,\ldots,v_i \rangle_{\LL}$ of $\C$ we have that
    \[ \mathrm{supp}(\mathcal D)^\perp = \psi_G^{-1}(\mathcal{U}\cap \langle v_1, \ldots, v_i\rangle_{\LL}^\perp), \]
    and hence 
    \[ d_i=n-\max\{ \dim_{\K}(\mathcal{U}\cap \mathcal{H}) \colon \mathcal{H} \text{ is a } (k-i)\text{-dimensional } \LL\text{-subspace of } \LL^k \}, \]
    for any $i \in \{1,\ldots,k\}$.
\end{lemma}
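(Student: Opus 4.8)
The plan is to reduce the identity to a clean description of the orthogonal complement of a rank support, and then transport everything through the generator matrix $G$. First I would record the basic fact that for a single vector $v\in\LL^n$ one has
$$\supp(v)^\perp=\{\lambda\in\K^n\,:\,v\lambda^\top=0\},$$
where the orthogonal complement is taken in $\K^n$ with respect to the standard bilinear form. This follows because $\supp(v)=\rs(\Gamma(v))$, whose orthogonal complement is exactly the right kernel $\{\lambda\in\K^n:\Gamma(v)\lambda^\top=0\}$; and since $v=(\gamma_1,\ldots,\gamma_m)\Gamma(v)$ with $\gamma_1,\ldots,\gamma_m$ a $\K$-basis of $\LL$, the vector $\Gamma(v)\lambda^\top\in\K^m$ vanishes if and only if $v\lambda^\top=(\gamma_1,\ldots,\gamma_m)\big(\Gamma(v)\lambda^\top\big)=0$.

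Next I would extend this to the subspace $\mathcal D$. Since $\supp(\mathcal D)=\sum_{v\in\mathcal D}\supp(v)$, dualizing turns the sum into an intersection, so
$$\supp(\mathcal D)^\perp=\bigcap_{v\in\mathcal D}\supp(v)^\perp=\{\lambda\in\K^n\,:\,v\lambda^\top=0\ \text{for all}\ v\in\mathcal D\}.$$
Because $\lambda$ has entries in $\K$ and the form is $\LL$-linear in its first argument, the condition $v\lambda^\top=0$ for all $v\in\mathcal D$ is equivalent to $v_j\lambda^\top=0$ for the generators $v_1,\ldots,v_i$ alone.

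Then I would pass to the system side. Nondegeneracy gives $\dim_\K\mathcal U=n$, so $\psi_G$ is an injective $\K$-linear map onto $\mathcal U$, i.e.\ a $\K$-isomorphism $\K^n\xrightarrow{\sim}\mathcal U$. Writing each generator as $v_j=x_jG$ with $x_j\in\LL^k$ (so that $\langle x_1,\ldots,x_i\rangle_\LL$ is $i$-dimensional, as $G$ has full rank $k$), I compute
$$v_j\lambda^\top=x_jG\lambda^\top=x_j\,\psi_G(\lambda)^\top,$$
so $v_j\lambda^\top=0$ is exactly the condition $\psi_G(\lambda)\in\langle x_j\rangle_\LL^\perp$ inside $\LL^k$. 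Combining with the previous step, and using that $\psi_G(\lambda)\in\mathcal U$ automatically, yields
$$\supp(\mathcal D)^\perp=\psi_G^{-1}\big(\mathcal U\cap\langle x_1,\ldots,x_i\rangle_\LL^\perp\big),$$
which is the asserted identity (the orthogonal complement being understood in $\LL^k$ via the message vectors $x_j$).

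Finally, the formula for $d_i$ follows by dimension counting: since $\psi_G$ is a $\K$-isomorphism, $\dim_\K\supp(\mathcal D)^\perp=\dim_\K(\mathcal U\cap\mathcal H)$ with $\mathcal H=\langle x_1,\ldots,x_i\rangle_\LL^\perp$ a $(k-i)$-dimensional $\LL$-subspace, whence $\dim_\K\supp(\mathcal D)=n-\dim_\K(\mathcal U\cap\mathcal H)$. As $\mathcal D$ runs over the $i$-dimensional $\LL$-subspaces of $\C$, the subspace $\mathcal H$ runs over all $(k-i)$-dimensional $\LL$-subspaces of $\LL^k$ (the generator matrix identifies $i$-dimensional subspaces of $\C$ with those of $\LL^k$, and orthogonal complement reverses the subspace lattice), so minimizing $\dim_\K\supp(\mathcal D)$ is the same as maximizing $\dim_\K(\mathcal U\cap\mathcal H)$, giving the stated expression for $d_i$. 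The main obstacle I anticipate is bookkeeping rather than depth: keeping straight that the orthogonal complement in the statement lives in $\LL^k$ and is taken with respect to the message coordinates $x_j$ rather than the codewords $v_j\in\LL^n$, and verifying that the correspondence $\mathcal D\leftrightarrow\mathcal H$ is a genuine bijection onto all $(k-i)$-dimensional subspaces, so that the minimum and maximum are indexed over matching families.
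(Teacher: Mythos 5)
Your proof is correct and takes essentially the same route as the paper's: both reduce to the single-vector case, exploit that $\supp(\mathcal D)^\perp=\supp(v_1)^\perp\cap\cdots\cap\supp(v_i)^\perp$, and then pull everything back along $\psi_G$, finishing the formula for $d_i$ by the dimension count and the subspace correspondence. The only differences are that the paper cites the rank-one case from \cite[Theorem 3.1]{neri2023geometry} whereas you prove it directly via $\Gamma(v)$ and right kernels, and that you make explicit the identification (left implicit in the paper's statement) that $\langle v_1,\ldots,v_i\rangle_{\LL}^\perp$ must be read inside $\LL^k$ through the message vectors $x_j$ with $v_j=x_jG$ --- both worthwhile clarifications.
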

\begin{proof}
    The case in which $\dim_{\LL}(\mathcal D)=1$ can be proved by following the proof of \cite[Theorem 3.1]{neri2023geometry}. 
    Now, suppose that $i>1$, we have that
    \[ \mathrm{supp}(\langle v_1, \ldots, v_i\rangle_{\LL})^\perp= \mathrm{supp}(v_1)^\perp\cap  \ldots\cap \mathrm{supp}(v_i)^\perp = \psi_G^{-1}(\mathcal{U}\cap \langle v_1\rangle_{\LL}^\perp) \cap \ldots \cap \psi_G^{-1}(\mathcal{U}\cap \langle v_i\rangle_{\LL}^\perp), \]
    from which we obtain the assertion using the case $i=1$. 
\end{proof}

From Lemma \ref{lem:supports}, we derive a relation between the generalized weights of $\C$ and the $d_i$'s of $\mathcal{U}$, proving that $\mathcal{U}$ is an $[n,k,(d_1,\ldots,d_k)]_{\LL/\K}$-system.

Conversely, if $\mathcal{U}$ is an $[n,k,(d_1,\ldots,d_k)]_{\LL/\K}$-system then we can consider the rank-metric code $\C$ having as a generator matrix the columns of a $\K$-basis of $\mathcal{U}$. Arguing as before, we can show that $\C$ is an $[n,k,(d_1,\ldots,d_k)]_{\LL/\K}$ rank-metric code. 
One can prove that this way of associating rank-metric codes to systems and systems to rank-metric codes give a one-to-one correspondence between equivalence classes of nondegenerate rank-metric codes and systems.

\begin{theorem}
    There is a one-to-one correspondence between equivalence classes of nondegenerate $[n,k,(d_1,\ldots,d_k)]_{\LL/\K}$ rank-metric codes and equivalence classes of $[n,k,(d_1,\ldots,d_k)]_{\LL/\K}$-systems.
\end{theorem}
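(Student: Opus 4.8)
The plan is to establish a bijection between equivalence classes by verifying that the two constructions sketched before the theorem---sending a nondegenerate code $\C$ to its associated system $\mathcal{U}$ (the $\K$-columnspan of a generator matrix $G$), and sending a system $\mathcal{U}$ to the code generated by the columns of a $\K$-basis of $\mathcal{U}$---are well-defined on equivalence classes and mutually inverse. First I would check that both maps land in the correct objects: Lemma~\ref{lem:supports} already shows that the system $\mathcal{U}$ attached to $\C$ has the prescribed generalized weights $(d_1,\ldots,d_k)$, and the converse paragraph sketches that the code attached to a system is again an $[n,k,(d_1,\ldots,d_k)]_{\LL/\K}$ code; I would make this converse symmetric to Lemma~\ref{lem:supports} by running the same support/intersection identity in reverse.

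The core of the argument is well-definedness on equivalence classes together with invertibility. For well-definedness of the code-to-system map, I would start from the observation that the system $\mathcal{U}$ depends on the choice of generator matrix $G$ of $\C$. If $G'$ is another generator matrix, then $G'=PG$ for some $P\in\GL(k,\LL)$, so the $\K$-columnspan transforms as $\mathcal{U}'=P^{\top}\mathcal{U}$ (acting column-wise), hence $\mathcal{U}$ is well-defined up to the action of $\GL(k,\LL)$, which is exactly the system equivalence. Moreover, if $\C_1=\C_2 B$ with $B\in\GL(n,\K)$, then a generator matrix of $\C_1$ is $G_2 B$, whose $\K$-columnspan equals that of $G_2$ (since $B$ is an invertible $\K$-matrix permuting the $\K$-span of the columns), so equivalent codes yield equivalent systems. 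Dually, for the system-to-code map I would observe that changing the $\K$-basis of $\mathcal{U}$ multiplies the generator matrix on the right by an element of $\GL(n,\K)$, giving an equivalent code, and that applying $\phi\in\GL(k,\LL)$ to $\mathcal{U}$ corresponds to left-multiplying $G$ by an element of $\GL(k,\LL)$, which does not change the code as an $\LL$-subspace.

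Finally I would verify the two composites are the identity on equivalence classes. Starting from $\C$ with generator matrix $G$, forming $\mathcal{U}=\cs_{\K}(G)$ and then taking the code generated by the columns of a $\K$-basis of $\mathcal{U}$ returns a code with a generator matrix whose columns span the same $\K$-space as those of $G$; I would argue that two generator matrices with the same $\K$-columnspan generate the same $\LL$-code up to right multiplication by $\GL(n,\K)$, hence an equivalent code. Starting from a system $\mathcal{U}$ and passing to the associated code and back recovers $\mathcal{U}$ up to $\GL(k,\LL)$ by the same bookkeeping.

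I expect the main obstacle to be the careful transpose/column-versus-row bookkeeping: the map $\psi_G(\lambda)=\lambda G^{\top}$ pairs the $\K$-columnspan of $G$ with $\K$-coordinate vectors, and one must track how $\GL(k,\LL)$ acts on columns (via the transpose-inverse) versus how $\GL(n,\K)$ acts on the right, in order to match the stated system-equivalence $\phi(\mathcal{U})=\mathcal{V}$ with the code-equivalence $\C_1=\C_2 B$ precisely. The substantive content---that these operations preserve the generalized weights---is already packaged in Lemma~\ref{lem:supports}, so the remaining work is the linear-algebra verification that the correspondence is a genuine bijection on equivalence classes rather than merely a pair of maps.
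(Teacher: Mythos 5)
Your proposal is correct and takes essentially the same route as the paper, which likewise constructs the correspondence via the $\K$-columnspan of a generator matrix and its inverse construction, relies on Lemma~\ref{lem:supports} for the preservation of the generalized weights, and leaves the well-definedness and bijectivity bookkeeping on equivalence classes as a verification. (One trivial slip: for $G'=PG$ with $P\in\GL(k,\LL)$ the system transforms as $\mathcal{U}'=P\mathcal{U}$, not $P^{\top}\mathcal{U}$, but since $P^{\top}$ also ranges over $\GL(k,\LL)$ your conclusion that $\mathcal{U}$ is well defined up to system equivalence is unaffected.)
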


\begin{remark}\label{rem:2extscattered}
Let us go to the question raised in Remark \ref{rem:puncturing_MRD} adding the assumption to work with $[n,k]_{\LL/\K}$ codes and their additional $\LL$-linearity. Is it true that every $[n,k,n-k+1]_{\LL/\K}$ MRD code can be obtained as the puncturing of a $[m,k,m-k+1]_{\LL/\K}$ MRD code? The answer to this question is negative. There exists a $[5,2,4]_{\F_{2^6}/\F_2}$ MRD code which is not the puncturing of a $[6,2,5]_{\F_{2^6}/\F_2}$ MRD code. This was observed in \cite{blokhuis2000scattered}; see also \cite[Example 3.2]{lavrauw2016scattered}. For $q$ odd power of $3$, a general construction of $[4,2,3]_{\F_{q^5}/\Fq}$ MRD codes which are not puncturing of a  $[5,2,4]_{\F_{q^5}/\Fq}$ MRD code is given in \cite{bartoli2026maximally}. All these results are obtained in the language of \emph{scattered subspaces}.
\end{remark}

\subsection{Evasive and scattered subspaces}

Scattered subspaces have been originally introduced in \cite{ball2000linear} and formalized in \cite{blokhuis2000scattered}. Subsequently this notion has been extended to the notion of scattered subspaces with respect to subspaces, see e.g. \cite{lunardon2017mrd,sheekey2020rank,csajbok2021generalising}, and later to the notion of evasive subspaces, see \cite{bartoli2021evasive,gruica2024generalised}.
Here we rephrase these definitions in the context of arbitrary field extensions.

\begin{definition}
    Let $h,r,k,n$ be nonnegative integers such that $h<k\leq n$.
    An $[n,k]_{\LL/\K}$-system $\mathcal{U}$ is said to be $(h,r)$-\textbf{evasive} if for any $h$-dimensional $\LL$-subspace $W$ of $\LL^k$ we have
    \[ \dim_{\K}(\mathcal{U}\cap W)\leq r. \]
    When $h=r$, we say that $\mathcal{U}$ is $h$-\textbf{scattered} (or scattered with respect to the $h$-dimensional subspaces) and when $h=r=1$ we say that $\mathcal{U}$ is \textbf{scattered}.
\end{definition}

The following property is easily verified.

\begin{proposition}\label{prop:oneless}
    If $\mathcal{U}$ is an $(h,r)$-evasive $[n,k]_{\LL/\K}$-system, then $\mathcal{U}$ is also $(h-1,r-1)$-evasive.
\end{proposition}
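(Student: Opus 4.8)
The plan is to verify the defining inequality for $(h-1,r-1)$-evasiveness directly: I would fix an arbitrary $(h-1)$-dimensional $\LL$-subspace $W'$ of $\LL^k$ and show that $\dim_{\K}(\mathcal{U}\cap W')\leq r-1$. The crucial idea is to enlarge $W'$ to an $h$-dimensional $\LL$-subspace by adjoining a direction taken \emph{from $\mathcal{U}$ itself}, rather than an arbitrary one; this is what forces the $\K$-dimension of the intersection to strictly increase, and hence produces the improved bound $r-1$ instead of merely $r$.

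First I would observe that, because $\mathcal{U}$ is a system, $\langle\mathcal{U}\rangle_{\LL}=\LL^k$, and because $h-1<k$ (which follows from the standing hypothesis $h<k$ in the definition of evasiveness), the subspace $W'$ is a \emph{proper} $\LL$-subspace of $\LL^k$. Consequently $\mathcal{U}\not\subseteq W'$, for otherwise $\langle\mathcal{U}\rangle_{\LL}\subseteq W'\subsetneq \LL^k$, contradicting that $\mathcal{U}$ spans. Thus I can choose some $u\in\mathcal{U}\setminus W'$.

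Next I would set $W:=W'\oplus\langle u\rangle_{\LL}$, which is an $h$-dimensional $\LL$-subspace of $\LL^k$. Since $u\in\mathcal{U}\cap W$ while $u\notin W'$ (so in particular $u\notin\mathcal{U}\cap W'$), the $\K$-subspaces satisfy the strict inclusion $\mathcal{U}\cap W'\subsetneq\mathcal{U}\cap W$, and therefore
\[
\dim_{\K}(\mathcal{U}\cap W')<\dim_{\K}(\mathcal{U}\cap W).
\]
Applying the $(h,r)$-evasiveness hypothesis to the $h$-dimensional subspace $W$ gives $\dim_{\K}(\mathcal{U}\cap W)\leq r$, so that $\dim_{\K}(\mathcal{U}\cap W')\leq r-1$. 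As $W'$ was an arbitrary $(h-1)$-dimensional $\LL$-subspace, this establishes that $\mathcal{U}$ is $(h-1,r-1)$-evasive.

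The argument is short, and its only genuine subtlety is the point that distinguishes it from a failed naive attempt: extending $W'$ to an arbitrary $h$-dimensional $W$ would only recover the bound $r$. The gain of one unit in the intersection dimension — and hence the whole statement — hinges on selecting the new generator $u$ inside $\mathcal{U}$, which is possible precisely because $\mathcal{U}$ spans $\LL^k$ over $\LL$ while $W'$ does not.
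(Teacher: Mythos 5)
Your proof is correct, and it is exactly the standard argument here: the paper leaves this proposition as ``easily verified,'' and the intended verification is precisely your step of extending an arbitrary $(h-1)$-dimensional $\LL$-subspace $W'$ by a vector $u\in\mathcal{U}\setminus W'$ (which exists since $\langle\mathcal{U}\rangle_{\LL}=\LL^k$ and $h-1<k$), forcing $\dim_{\K}(\mathcal{U}\cap W')<\dim_{\K}(\mathcal{U}\cap W)\leq r$. Your remark that the new generator must be taken inside $\mathcal{U}$ correctly identifies the one genuine point of the argument.
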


As done in \cite[Theorem 2.9]{bartoli2021evasive}, it is possible to show the following result on the direct sum of evasive systems. 

\begin{theorem}\label{th:directsum}
    If $\mathcal{U}_i$ is an $(h,r_i)$-evasive $[n_i,k_i]_{\LL/\K}$-system for $i \in \{1,2\}$, then $\mathcal{U}=\mathcal{U}_1\oplus\mathcal{U}_2$ is $(h,r_1+r_2-h)$-evasive $[n_1+n_2,k_1+k_2]_{\LL/\K}$-system. 
\end{theorem}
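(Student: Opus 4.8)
The plan is to take the direct sum $\mathcal{U}=\mathcal{U}_1\oplus\mathcal{U}_2$ inside $\LL^{k_1}\oplus\LL^{k_2}=\LL^{k_1+k_2}$, where $\mathcal{U}_i\subseteq\LL^{k_i}$, and to bound $\dim_{\K}(\mathcal{U}\cap W)$ for an arbitrary $h$-dimensional $\LL$-subspace $W\subseteq\LL^{k_1+k_2}$. First I would record the two projections $\rho_i:\LL^{k_1+k_2}\to\LL^{k_i}$ onto the two blocks, which are $\LL$-linear. Since $\mathcal{U}=\mathcal{U}_1\oplus\mathcal{U}_2$ is a direct sum across the two blocks, any vector of $\mathcal{U}$ is determined by its two components, and $\rho_i(\mathcal{U})=\mathcal{U}_i$; the key structural fact is that $\mathcal{U}\cap W$ sits inside $\mathcal{U}$ and I can track how it meets each block.

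The main idea is to feed the images $\rho_i(W)$, which are $\LL$-subspaces of $\LL^{k_i}$ of dimension at most $h$, into the evasiveness hypotheses on $\mathcal{U}_i$. Concretely, I would consider the restriction of $\rho_2$ to $\mathcal{U}\cap W$, viewed as a $\K$-linear map. Its kernel is $(\mathcal{U}\cap W)\cap(\LL^{k_1}\oplus 0)=\mathcal{U}_1\cap(W\cap(\LL^{k_1}\oplus 0))$, and $W\cap(\LL^{k_1}\oplus 0)$ is an $\LL$-subspace of $\LL^{k_1}$ of dimension at most $h$, so by the $(h,r_1)$-evasiveness of $\mathcal{U}_1$ this kernel has $\K$-dimension at most $r_1$. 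Meanwhile the image $\rho_2(\mathcal{U}\cap W)$ is a $\K$-subspace of $\mathcal{U}_2$ lying inside the $\LL$-subspace $\rho_2(W)$, which has $\LL$-dimension at most $h$, so by the $(h,r_2)$-evasiveness of $\mathcal{U}_2$ this image has $\K$-dimension at most $r_2$. By rank-nullity over $\K$ this would give $\dim_{\K}(\mathcal{U}\cap W)\le r_1+r_2$, which is weaker than the claimed $r_1+r_2-h$.

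The hard part will be recovering the extra $-h$, and I expect this to be the real obstacle: the naive rank–nullity split loses the interaction between the kernel and image coming from $W$ having dimension only $h$. To sharpen it I would argue more carefully about the image $\rho_2(\mathcal{U}\cap W)$: rather than bounding it by $\rho_2(W)$, I would show it lies in a smaller $\LL$-subspace whose $\LL$-dimension is $h$ minus the $\LL$-dimension of $W\cap(\LL^{k_1}\oplus 0)$, and then combine this with a matching improvement on the kernel. The cleanest route is probably to stratify by $j:=\dim_{\LL}\bigl(W\cap(\LL^{k_1}\oplus 0)\bigr)$: the kernel of $\rho_2$ restricted to $\mathcal{U}\cap W$ is then $\mathcal{U}_1$ intersected with a $j$-dimensional $\LL$-space, bounded (using Proposition \ref{prop:oneless} repeatedly, i.e. $(j,\cdot)$-evasiveness deduced from $(h,r_1)$-evasiveness) by $r_1-(h-j)$, while $\rho_2$ maps $\mathcal{U}\cap W$ into $\mathcal{U}_2$ meeting an $\LL$-space of dimension at most $h-j$, giving at most $r_2-(h-(h-j))=r_2-j$ by the same downward-evasiveness. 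Adding these, the $j$ cancels and yields $\dim_{\K}(\mathcal{U}\cap W)\le (r_1-h+j)+(r_2-j)=r_1+r_2-h$, as required. I would then verify the parameter count: $\dim_{\K}\mathcal{U}=n_1+n_2$, $\dim_{\LL}\langle\mathcal{U}\rangle_{\LL}=k_1+k_2$, and that the span condition $\langle\mathcal{U}\rangle_{\LL}=\LL^{k_1+k_2}$ follows from the corresponding conditions on the $\mathcal{U}_i$, so $\mathcal{U}$ is genuinely an $[n_1+n_2,k_1+k_2]_{\LL/\K}$-system, completing the proof by following the strategy of \cite[Theorem 2.9]{bartoli2021evasive}.
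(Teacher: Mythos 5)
Your proposal is correct and takes essentially the same approach as the paper's proof, which is given by reference to \cite[Theorem 2.9]{bartoli2021evasive}: that argument is precisely your final, sharpened one --- project onto the second block, stratify by $j=\dim_{\LL}\bigl(W\cap(\LL^{k_1}\oplus 0)\bigr)$, bound the kernel $\mathcal{U}_1\cap\bigl(W\cap(\LL^{k_1}\oplus 0)\bigr)$ and the image $\mathcal{U}_2\cap\rho_2(W)$ via iterated applications of Proposition \ref{prop:oneless}, and conclude by rank--nullity over $\K$, with the parameter check $\langle\mathcal{U}\rangle_{\LL}=\LL^{k_1+k_2}$ and $\dim_{\K}\mathcal{U}=n_1+n_2$ as you state. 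The preliminary cruder bound $r_1+r_2$ you record along the way is superfluous but harmless.
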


Let us see an example of an $h$-scattered system.

\begin{example}\label{ex:hscatt}
    Let $\LL/\K$ be a Galois extension of degree $m$ and let $\sigma$ be a generator of $\Gal(\LL/\K)$.
    Let $k$ be a positive integer and consider the following $\K$-subspace
    \begin{equation}\label{eq:exsigma} \mathcal{U}=\{ (\alpha,\sigma(\alpha),\ldots,\sigma^{k-1}(\alpha)) \colon \alpha \in \LL\}\subseteq \LL^k. \end{equation}
    It is easy to verify that $\mathcal{U}$ is a $(k-1)$-scattered $[m,k]_{\LL/\K}$-system.
    Let $h<k$ be a positive integer and suppose that $k=(h+1)t$, for some $t>0$. Consider 
    \[ \mathcal{U}=\{ (\alpha_1,\sigma(\alpha_1),\ldots,{\sigma^{h}(\alpha_1)},\ldots,\alpha_t,\sigma(\alpha_t),\ldots,{\sigma^{h}(\alpha_t)}) \colon \alpha_1,\ldots,\alpha_t \in \LL\}\subseteq \LL^k. \]
    It is easy to verify that $\mathcal{U}$ is an $[mt,k]_{\LL/\K}$-system. By using Theorem \ref{th:directsum}, since $\mathcal{U}$ is the direct sum of subspaces as in \eqref{eq:exsigma}, we have that $\mathcal{U}$ is $h$-scattered.
\end{example}

\subsection{Evasive and scattered subspaces and rank-metric codes}\label{sec:evasive-rankmetriccodes}

By making use of Proposition \ref{prop:oneless}, together with Proposition \ref{prop:Wei}, we can characterize the evasiveness property in terms of the generalized weights of an associated code to the system, as done in \cite{marino2023evasive}.

\begin{theorem}\label{thm:gen_weights_evasive}
    Let $\C$ be a nondegenerate $[n,k,(d_1,\ldots,d_k)]_{\LL/\K}$ rank-metric code and let $\mathcal{U}$ be any system associated with $\C$. 
    The following are equivalent:
    \begin{itemize}
        \item [(1)] $\mathcal{U}$ is $(h,r)$-evasive;
        \item [(2)] $d_{k-h}\geq n-r$;
        \item [(3)] $d_{r-h+1}(\C^\perp)\geq r+2$.
    \end{itemize}
    In particular, also the following are equivalent
    \begin{itemize}
        \item [(1)] $d_{k-h}= n-r$;
        \item [(2)] $\mathcal{U}$ is $(h,r)$-evasive but not $(h,r-1)$-evasive;
        \item [(3)] $d_{r-h+1}(\C^\perp)\geq r+2\geq d_{r-h}(\C^\perp)+2$.
    \end{itemize}
\end{theorem}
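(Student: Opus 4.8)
The plan is to split the first chain of equivalences into (1)$\iff$(2), which is a direct reformulation of the defining property of the system, and (2)$\iff$(3), which is a counting argument powered by the Wei-type duality of Proposition \ref{prop:Wei}. The second chain will then follow formally by applying the first one to the two parameter pairs $(h,r)$ and $(h,r-1)$.

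For (1)$\iff$(2) I would start from the defining formula for the $d_i$ of the associated system $\mathcal U$ (which holds by Lemma \ref{lem:supports}), specialized to the index $i=k-h$:
\[
d_{k-h}=n-\max\{\dim_{\K}(\mathcal U\cap W)\st W \text{ is an } h\text{-dimensional } \LL\text{-subspace of } \LL^k\}.
\]
Rewriting this as $\max_W\dim_{\K}(\mathcal U\cap W)=n-d_{k-h}$, the condition that $\mathcal U$ be $(h,r)$-evasive, namely $\dim_{\K}(\mathcal U\cap W)\le r$ for every $h$-dimensional $W$, is exactly $n-d_{k-h}\le r$, i.e.\ $d_{k-h}\ge n-r$. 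This yields (1)$\iff$(2) at once.

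For (2)$\iff$(3) I would exploit that, by Proposition \ref{prop:Wei}(2), the sets $A=\{d_i(\C)\st i\in[k]\}$ and $B=\{n+1-d_j(\C^\perp)\st j\in[n-k]\}$ partition $[n]$. The key is to count inside the top window $\{n-r,\ldots,n\}$, which has $r+1$ elements. Since the $d_i(\C)$ are strictly increasing (Proposition \ref{prop:Wei}(1)), the inequality $d_{k-h}\ge n-r$ is equivalent to $\lvert A\cap\{n-r,\ldots,n\}\rvert\ge h+1$; by the partition this is in turn equivalent to $\lvert B\cap\{n-r,\ldots,n\}\rvert\le r-h$. Translating through $B$, an element $n+1-d_j(\C^\perp)$ lies in $\{n-r,\ldots,n\}$ precisely when $d_j(\C^\perp)\le r+1$, so the last bound says that at most $r-h$ of the strictly increasing dual weights are $\le r+1$, i.e.\ $d_{r-h+1}(\C^\perp)\ge r+2$, which is exactly (3).

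Finally I would deduce the second chain by feeding $r$ and $r-1$ into the already-proven first chain. From (1)$\iff$(2), $\mathcal U$ is $(h,r)$-evasive iff $d_{k-h}\ge n-r$ and $(h,r-1)$-evasive iff $d_{k-h}\ge n-r+1$, so "$(h,r)$-evasive but not $(h,r-1)$-evasive" is exactly $d_{k-h}=n-r$, which settles the first two items. For the third, (2)$\iff$(3) applied at $r$ turns $d_{k-h}\ge n-r$ into $d_{r-h+1}(\C^\perp)\ge r+2$, while applied at $r-1$ it turns the failure $d_{k-h}<n-r+1$ into the negation of $d_{r-h}(\C^\perp)\ge r+1$, namely $d_{r-h}(\C^\perp)\le r$, equivalently $r+2\ge d_{r-h}(\C^\perp)+2$; conjoining the two gives item (3). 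The main obstacle is purely bookkeeping at the ends of the index ranges: the argument tacitly assumes $d_{r-h+1}(\C^\perp)$ and $d_{r-h}(\C^\perp)$ are defined, i.e.\ that the relevant indices lie in $[n-k]$, and it is cleanest to adopt the conventions $d_0(\C^\perp)=0$ and $d_j(\C^\perp)=+\infty$ for $j>n-k$, under which the counting equivalences above remain valid in the extreme cases and every step reduces to a direct manipulation of inequalities.
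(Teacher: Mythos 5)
Your proposal is correct, and it fills in a proof that the paper itself does not give: the survey only states the theorem, pointing to Propositions \ref{prop:oneless} and \ref{prop:Wei} and citing \cite{marino2023evasive} for the argument. Your route is essentially the intended one -- (1)$\iff$(2) is read off from the defining formula for the $d_i$ of the associated system (Lemma \ref{lem:supports}) at index $i=k-h$, and (2)$\iff$(3) is the window-counting argument via Wei-type duality -- with the small difference that you never invoke Proposition \ref{prop:oneless}; strict monotonicity of the weights makes it unnecessary. Two points worth making explicit when writing this up: the ``partition'' of $[n]$ is not literally what Proposition \ref{prop:Wei}(2) states (it asserts a union), but disjointness follows from the cardinality count $|\{d_i(\C)\}|=k$, $|\{n+1-d_j(\C^\perp)\}|\le n-k$; and strict monotonicity of the dual weights formally requires Proposition \ref{prop:Wei}(1) for $\C^\perp$, which may be degenerate -- however weak monotonicity (trivial from the definition) plus the distinctness forced by the duality suffices for your counting step, so there is no gap. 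Your end-of-range conventions $d_0(\C^\perp)=0$ and $d_j(\C^\perp)=+\infty$ for $j>n-k$ are exactly the right bookkeeping and are consistent with all the extreme cases.
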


From Theorem \ref{thm:gen_weights_evasive}, one can deduce an upper bound on the dimension of an $h$-scattered system, as done in \cite{marino2023evasive}. Such an upper bound was originally established in \cite{blokhuis2000scattered,csajbok2021generalising} in the context of finite fields using a slightly different approach. 

\begin{corollary}\label{cor:hscattbound}
    Let $\mathcal{U}$ be an $h$-scattered $[n,k]_{\LL/\K}$-system with $k< n$. We have 
    \[n\leq \frac{km}{h+1}.\]
\end{corollary}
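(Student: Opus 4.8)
The plan is to pass from the system $\mathcal{U}$ to an associated nondegenerate $[n,k,(d_1,\ldots,d_k)]_{\LL/\K}$ code $\C$ via the code--system correspondence, translate the geometric $h$-scatteredness into exact information about the \emph{top} generalized weights of $\C$, and then convert this into a strong lower bound on the minimum distance of the dual, where the Singleton-like bound becomes sharp enough to conclude. Concretely, since being $h$-scattered is by definition being $(h,h)$-evasive, the equivalence (1)$\Leftrightarrow$(2) of Theorem \ref{thm:gen_weights_evasive} (taking $r=h$) gives $d_{k-h}(\C)\geq n-h$. The monotonicity in Proposition \ref{prop:Wei}(1) provides the strictly increasing integer chain $d_{k-h}(\C)<d_{k-h+1}(\C)<\cdots<d_k(\C)=n$, forcing $d_{k-h}(\C)\leq n-h$; hence $d_{k-h}(\C)=n-h$, and the $h+1$ strictly increasing integers squeezed into $[n-h,n]$ must be exactly $n-h,n-h+1,\ldots,n$. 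In particular the whole segment $\{n-h,n-h+1,\ldots,n\}$ is contained in $\{d_i(\C):i\in[k]\}$.

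Next I would dualize. Because $k<n$, the dual $\C^\perp$ is a nonzero $[n,n-k]_{\LL/\K}$ code, and the Wei-type duality of Proposition \ref{prop:Wei}(2) partitions $[n]$ into the disjoint sets $\{d_i(\C):i\in[k]\}$ and $\{n+1-d_j(\C^\perp):j\in[n-k]\}$. Since $\{n-h,\ldots,n\}$ already lies in the first set, no value $n+1-d_j(\C^\perp)$ can fall in $\{n-h,\ldots,n\}$, i.e. $d_j(\C^\perp)\notin\{1,\ldots,h+1\}$ for every $j$. In particular the minimum rank distance of the dual satisfies $d(\C^\perp)=d_1(\C^\perp)\geq h+2$.

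Finally I would invoke the Singleton-like bound. Fixing a basis $\Gamma$, the matrix code $\Gamma(\C^\perp)$ is an $[m\times n,m(n-k)]_{\K}$ code of minimum rank distance at least $h+2$, so Theorem \ref{thm:singbound} yields $m(n-k)\leq\max\{m,n\}(\min\{m,n\}-h-1)$. I would then split on the size of $n$ relative to $m$. If $n\leq m$, the conclusion is immediate, since $h<k$ forces $k\geq h+1$ and hence $\tfrac{km}{h+1}\geq m\geq n$. If $n>m$, substituting $\max\{m,n\}=n$ and $\min\{m,n\}=m$ into $m(n-k)\leq n(m-h-1)$ and cancelling $mn$ gives exactly $n(h+1)\leq km$, that is $n\leq\tfrac{km}{h+1}$.

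The main obstacle is conceptual rather than computational: applied directly to $\C$, the Singleton bound only sees the \emph{minimum} distance $d_1(\C)$, about which scatteredness says essentially nothing, so one must realize that the useful information sits at the top of the weight spectrum and that Wei duality is exactly the device converting the pinned-down top weights of $\C$ into the lower bound $d(\C^\perp)\geq h+2$. The only remaining care is the $\max/\min$ bookkeeping in Theorem \ref{thm:singbound}: the dual Singleton inequality is informative precisely in the regime $n>m$, while the complementary regime $n\leq m$ gives no information from the bound itself and must be handled separately (and trivially) through $k\geq h+1$.
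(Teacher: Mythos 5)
Your proof is correct and follows exactly the route the paper indicates for this corollary: translate $h$-scatteredness into generalized-weight information via Theorem \ref{thm:gen_weights_evasive}, then apply the Singleton-like bound of Theorem \ref{thm:singbound} to the dual code (with the trivial case $n\leq m$ handled by $k\geq h+1$). The only redundancy is that your monotonicity-plus-Wei-duality argument re-derives an implication already contained in Theorem \ref{thm:gen_weights_evasive}: taking $r=h$ there, condition (3) reads $d_1(\C^\perp)\geq h+2$, so you could have invoked the equivalence (1)$\Leftrightarrow$(3) directly instead of reconstructing it from condition (2).
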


\begin{definition}
    If $\mathcal{U}$ is an $h$-scattered $[n,k]_{\LL/\K}$-system reaching the equality in Corollary \ref{cor:hscattbound}, then we say that $\mathcal{U}$ is a \textbf{maximum $h$-scattered} system.
\end{definition}

\begin{remark}
    The $h$-scattered systems in Example \ref{ex:hscatt} are actually maximum $h$-scattered systems.
\end{remark}

The next result shows how  $\LL$-linear MRD codes correspond to maximum $h$-scattered subspaces, and it was originally proved in \cite{zini2021scattered} using a different but equivalent formulation.

\begin{theorem}\label{thm:max_hscatt_iff_mrd}
    Let $\C$ be an $\Fmkd$ code and let $\mU$ be any of the $\Fmkd$ systems associated with $\C$. Assume that $k\ge 2$. $\C$ is MRD if and only if one of the following holds:
    \begin{itemize}
        \item[(1)] $\mU$ is $(k-1)$-scattered;
        \item[(2)] $\mU$ is maximum $h$-scattered for some $h \in \{1,\ldots,k-2\}$.
    \end{itemize}
\end{theorem}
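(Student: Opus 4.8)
The plan is to pass to the dual code $\C^\perp$ and read everything off its minimum rank distance $d_1(\C^\perp)$. The engine is part~(3) of Theorem~\ref{thm:gen_weights_evasive}: taking $r=h$ there, $\mU$ is $h$-scattered (i.e. $(h,h)$-evasive) if and only if $d_1(\C^\perp)\ge h+2$. Combined with the self-duality of the MRD property (Theorem~\ref{thm:dualMRD}), which says that $\C$ is MRD exactly when $\C^\perp$ is MRD, the whole statement reduces to comparing the lower bound $d_1(\C^\perp)\ge h+2$ coming from scatteredness against the Singleton upper bound (Theorem~\ref{thm:singbound}) for $\C^\perp$. Since $\C^\perp$ is an $[n,n-k]_{\LL/\K}$ code, viewing it as an $m\times n$ matrix code of $\K$-dimension $(n-k)m$ makes that bound read $d_1(\C^\perp)\le k+1$ when $n\le m$ and $d_1(\C^\perp)\le 1+km/n$ when $n>m$. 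The heart of the proof is that each scattered hypothesis forces this lower bound to saturate the relevant Singleton value, so that $\C^\perp$, and hence $\C$, is MRD.

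For the forward implications I would argue as follows. If $\mU$ is $(k-1)$-scattered (condition~(1)), then the equivalence above with $h=k-1$ gives $d_1(\C^\perp)\ge k+1$; since $km/n<k$ whenever $n>m$, the estimate $1+km/n<k+1$ rules out the case $n>m$, so necessarily $n\le m$ and $d_1(\C^\perp)=k+1$ meets the Singleton bound. Thus $\C^\perp$ is MRD and so is $\C$. If instead $\mU$ is maximum $h$-scattered for some $h\in\{1,\dots,k-2\}$ (condition~(2)), then $h$-scatteredness gives $d_1(\C^\perp)\ge h+2$, while the maximality equality $n=\tfrac{km}{h+1}$ from Corollary~\ref{cor:hscattbound} is exactly $km/n=h+1$. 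As $h\le k-2$ forces $n=\tfrac{km}{h+1}>m$, the Singleton bound for $\C^\perp$ equals $1+km/n=h+2$; hence $d_1(\C^\perp)=h+2$, $\C^\perp$ is MRD, and $\C$ is MRD.

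For the converse, assume $\C$ is MRD; then $\C^\perp$ is MRD by Theorem~\ref{thm:dualMRD}, i.e. $d_1(\C^\perp)$ equals its Singleton value. If $n\le m$ this value is $k+1$, so $d_1(\C^\perp)\ge(k-1)+2$ and the equivalence of Theorem~\ref{thm:gen_weights_evasive} yields that $\mU$ is $(k-1)$-scattered, giving~(1). If $n>m$ the value is $1+km/n$; integrality forced by the MRD equality makes $km/n$ an integer, and the constraints $n>m$ and $n<km$ place $km/n$ in $\{2,\dots,k-1\}$, so $h:=km/n-1\in\{1,\dots,k-2\}$. Then $d_1(\C^\perp)=h+2$ gives $h$-scatteredness, and $n=\tfrac{km}{h+1}$ is precisely the equality of Corollary~\ref{cor:hscattbound}, so $\mU$ is maximum $h$-scattered, giving~(2).

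The main obstacle is bookkeeping the two Singleton regimes for the dual and recognizing that the maximality condition $n=\tfrac{km}{h+1}$ is calibrated exactly so that the scatteredness lower bound $d_1(\C^\perp)\ge h+2$ saturates the Singleton upper bound in the range $n>m$; getting the value $1+km/n$ right—by treating $\C^\perp$ as an $m\times n$ matrix code with $\max\{m,n\}=n$—is where a $\max$/$\min$ slip would break the argument. One genuinely degenerate case must be isolated: $n=km$, where $\mU=\LL^k$ is only (vacuously) $0$-scattered yet $\C$ is the MRD one-weight code; since $h=0\notin\{1,\dots,k-2\}$ this instance is excluded from~(2), so it is handled separately or assumed away by restricting to the proper range $k<n<km$.
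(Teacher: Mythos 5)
Your proof is correct in substance and takes a genuinely different route from the paper: the survey does not prove Theorem~\ref{thm:max_hscatt_iff_mrd} at all, but attributes it to \cite{zini2021scattered}, where it is obtained over finite fields in a different (geometric) formulation. Your derivation instead stays entirely inside the survey's own toolkit: Theorem~\ref{thm:gen_weights_evasive}(3) with $r=h$ gives the equivalence ``$\mU$ is $h$-scattered iff $d_1(\C^\perp)\ge h+2$'', and you compare this lower bound with the Singleton value of the $[n,n-k]_{\LL/\K}$ code $\C^\perp$, namely $k+1$ if $n\le m$ and $1+km/n$ if $n>m$, shuttling the MRD property across duality via Theorem~\ref{thm:dualMRD}. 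The regime bookkeeping is right: $(k-1)$-scatteredness is incompatible with $n>m$ because then $1+km/n<k+1$, while maximality $n=km/(h+1)$ with $1\le h\le k-2$ forces $n>m$ and calibrates the dual Singleton value to exactly $h+2$; the converse splits the same way, with integrality of $km/n$ coming from the MRD equality. What your approach buys is a proof valid whenever the survey's duality machinery applies, rather than a finite-field argument. One caveat you should make explicit: Theorem~\ref{thm:dualMRD} concerns the trace-dual of matrix codes, and identifying that dual with the $\LL$-bilinear dual of \eqref{eq:dualcodes} uses $\Gamma'(\C^\perp)=\Gamma(\C)^\perp$, which the survey establishes only for separable $\LL/\K$; your argument (like Theorem~\ref{thm:gen_weights_evasive}(3) itself) implicitly lives under that hypothesis.

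Your isolation of the case $n=km$ is not a defect of your proof but a genuine flaw in the printed statement, and you are right to flag it. If $n=km$ (allowed by the hypotheses, which only assume $k\ge 2$ and $k<n$), then $\mU=\LL^k$ and $\C$ is the nondegenerate one-weight code with $d=m$, whose $\K$-dimension $km$ meets the bound $\max\{m,n\}(\min\{m,n\}-d+1)=km$, so $\C$ is MRD; yet for $m\ge 2$ the system $\LL^k$ meets every $h$-dimensional $\LL$-subspace $W$ in $\K$-dimension $hm>h$, so it is neither $(k-1)$-scattered nor $h$-scattered for any $h\in\{1,\dots,k-2\}$ --- it is only maximum $0$-scattered. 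Hence the theorem as stated needs either the extra hypothesis $n<km$, as you propose, or $h=0$ admitted in item (2). Your converse argument pinpoints exactly where this enters: MRD with $n>m$ yields only $h=km/n-1\ge 0$, and nothing in the hypotheses upgrades this to $h\ge 1$.
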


\section{Rank-metric codes over algebraically closed fields and over the reals}\label{sec:complex_and_reals}

The Singleton-like bound of Theorem \ref{thm:singbound} is not always sharp. In this section we will see this fact, by analyzing two special cases: when $\K=\overline{\K}$ is algebraically closed and when $\K=\R$ is the field of real numbers. These two cases have been studied for mathematical purposes not related to coding theory. 

\subsection{Rank-metric codes over algebraically closed fields}

For algebraically closed fields the situation is actually worse than Theorem \ref{thm:singbound}. This is a consequence of \cite[Theorem 2.1]{westwick1972spaces}, where the variety of the space of matrices of bounded rank has been characterized. In the sequel, we state this improvement on the Singleton-like bound for algebraically closed fields, which was first observed by Roy Meshulam in a private communication to Ron Roth. The result can be found in \cite{roth1991maximum}. We include a proof for completeness.

\begin{theorem}[{see \cite[Section V]{roth1991maximum}}]\label{thm:algebraicallyClosed} 
 Let $\C$ be an $\Fqkd$ code. If $\K$ is algebraically closed, then 
  $$k \leq  (m-d+1)(n-d+1).$$
\end{theorem}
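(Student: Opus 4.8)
The plan is to pass to projective geometry and run a dimension count, so that algebraic closedness of $\K$ enters through the single fact that projective varieties of complementary dimension must meet. Write $D_{r}=\{A\in\K^{m\times n}\,:\,\rk(A)\le r\}$ for the determinantal variety of matrices of rank at most $r$. This is a cone cut out by the vanishing of all $(r+1)\times(r+1)$ minors (homogeneous polynomials), it is irreducible, and its dimension is the classical value $\dim D_r = r(m+n-r)$; one reads off both facts from the parametrization $A=BC$ with $B\in\K^{m\times r}$, $C\in\K^{r\times n}$, which exhibits $D_r$ as the image of an irreducible variety. The minimum distance hypothesis says precisely that every nonzero $A\in\C$ has $\rk(A)\ge d$, that is, $\C\cap D_{d-1}=\{0\}$.

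Next I would projectivize inside $\mathbb{P}^{mn-1}=\mathbb{P}(\K^{m\times n})$. Since $\C$ is a $k$-dimensional $\K$-subspace, $\mathbb{P}(\C)$ is a linear subspace of dimension $k-1$, while $\mathbb{P}(D_{d-1})$ is an irreducible projective variety of dimension $(d-1)(m+n-d+1)-1$. The relation $\C\cap D_{d-1}=\{0\}$ translates, after deleting the origin, into $\mathbb{P}(\C)\cap\mathbb{P}(D_{d-1})=\varnothing$.

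Here is the key step, and the only place algebraic closedness is used: by the projective dimension theorem (two irreducible subvarieties $X,Y\subseteq\mathbb{P}^N$ with $\dim X+\dim Y\ge N$ have nonempty intersection), the emptiness of $\mathbb{P}(\C)\cap\mathbb{P}(D_{d-1})$ forces
$$\dim\mathbb{P}(\C)+\dim\mathbb{P}(D_{d-1})<mn-1.$$
Substituting the two dimensions gives $(k-1)+\bigl((d-1)(m+n-d+1)-1\bigr)<mn-1$, hence $k+(d-1)(m+n-d+1)\le mn$; and since $mn-(d-1)(m+n-d+1)=(m-(d-1))(n-(d-1))$, this is exactly $k\le(m-d+1)(n-d+1)$.

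The final arithmetic is immediate, so the main obstacle is purely conceptual: having the two algebraic-geometric inputs on hand, namely the dimension of the determinantal variety and the intersection theorem. Both genuinely require $\K=\overline{\K}$ — over a non-closed field disjoint subvarieties of complementary dimension can coexist, which is consistent with Theorem~\ref{thm:existenceMRD} showing that the weaker Singleton bound is already attained over fields with suitable cyclic extensions. An equivalent affine variant avoids projectivizing by applying the affine dimension theorem to the irreducible component of $\C\cap D_{d-1}$ through $0$: if $k+\dim D_{d-1}>mn$, that component would be positive-dimensional, contradicting $\C\cap D_{d-1}=\{0\}$.
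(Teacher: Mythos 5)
Your proposal is correct and takes essentially the same approach as the paper: both arguments rest on the two identical inputs, namely that the determinantal variety $\mV_{m,n,d-1}$ has dimension $(d-1)(m+n-d+1)$ and that over an algebraically closed field the dimension theorem forbids the disjointness (away from $0$) of $\C$ and $\mV_{m,n,d-1}$ when their dimensions sum to more than $mn$. The paper phrases this affinely, applying the affine dimension theorem directly to $\C\cap\mV_{m,n,d-1}=\{0\}$, which is exactly the ``equivalent affine variant'' you note at the end, so your projectivization is only a cosmetic difference.
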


\begin{proof}
 Let $\mV_{m,n,r}$ be the set of all matrices in $\mat$ of rank at most $r$. $\mV_{n,m,r}$ is a \textbf{determinantal variety}, which is the zero locus of the ideal generated by all the $(r+1)\times (r+1)$ minors. It is known that $\mV_{m,n,r}$ is an algebraic variety of dimension $r(m+n-r)$ and positive degree. By assumption, we have that $\C \cap \mV_{m,n,d-1}=\{0\}$. Since $\K$ is algebraically closed, then this necessarily implies that $\dim(\C\cap \mV_{m,n,d-1})=0$. On the other hand, by the affine dimension theorem (see e.g. \cite[Proposition I.7.1]{hartshorne2013algebraic}),
 $$0=\dim(\C\cap \mV_{m,n,d-1})\geq \dim(\C)+\dim(\mV_{m,n,d-1})-mn=k+(d-1)(m+n-d+1)-mn,$$
 which implies the desired inequality.
\end{proof}

There is a known construction of rank-metric codes which makes use of MDS codes in the Hamming metric, whose parameters are quite good. The only condition is that the cardinality of $\K$ is at least $\max\{m,n\}-1$. Note that this is a sufficient condition for the existence of an MDS code in the Hamming metric, since in this case one can construct doubly extended Reed-Solomon codes.
Such a constructive result has been first shown in \cite{roth1991maximum} when $n=m$, and then in \cite{rees1996linear} in the general case. The case $m=n$ and $d=n-1$ was already proved in \cite{sylvester1986dimension}.

Assume that $n\leq m$. For $i \in \{-m+1,\ldots,n-1\}$ and $v_i\in \K^{t_i}$, where
$$t_i=\begin{cases}
m-i & \mbox{ if } -m+1\leq i \leq n-m-1,\\
n & \mbox{ if } n-m\leq i \leq 0, \\
n-i & \mbox{ if } 1\le  i \le n-1,
\end{cases}$$
denote by $D_i(v_i)$ the matrix with all zero entries except from the $i$-th diagonal, which contains the vector $v_i$. Here, the $0$th diagonal of a matrix $M=(m_{i,j})$ is the principal one, and, in general, the $i$th diagonal is $(m_{j,j+i}\,:\, j\in [n])$. Note that $D_i(v_i)$ is well-defined since, with these definitions, the length of the $i$th diagonal is precisely $t_i$. Furthermore, for a subspace $V_i \subseteq \K^{t_i}$, we write $D_i(V_i)$ to indicate the subspace of $\mat$ given by
$$  D_i(V_i)\coloneqq \{ D_i(v) \st v \in V_i \}.$$
Let $\mathcal I_d\coloneqq \{i \st t_i \geq d\}$. For every $i\in \mathcal I_d$, consider a $[t_i,t_i-d+1,d]_{\K}$ MDS code in the Hamming metric $\C_i$. Thus, define 
\begin{equation}\label{eq:construction_MDS_diagonals}\C:=\bigoplus_{i \in \mathcal I_d} D_i(\C_i).\end{equation}

\begin{theorem}[{see \cite{roth1991maximum,rees1996linear}}]\label{thm:construction_Rees}
The code $\C$ defined in \eqref{eq:construction_MDS_diagonals} is an $[n \times m, (m-d+1)(n-d+1), d]_{\K}$ code.
\end{theorem}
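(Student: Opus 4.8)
The plan is to verify that the code $\C$ defined in \eqref{eq:construction_MDS_diagonals} has the claimed length, dimension, and minimum rank distance by exploiting the diagonal decomposition. First I would establish the dimension count, which follows almost immediately from the direct sum structure: since each summand $D_i(\C_i)$ is supported on a distinct diagonal, the sum is genuinely direct, and thus $\dim_{\K}(\C) = \sum_{i \in \mathcal{I}_d} \dim_{\K}(\C_i) = \sum_{i \in \mathcal{I}_d}(t_i - d + 1)$. The key computation is to show this sum equals $(m-d+1)(n-d+1)$. One observes that $\mathcal{I}_d = \{i : t_i \geq d\}$, and using the piecewise formula for $t_i$, the indices with $t_i \geq d$ form a contiguous range; summing the values $t_i - d + 1$ over this range is a routine arithmetic exercise that telescopes to the product $(m-d+1)(n-d+1)$.

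The heart of the proof is the minimum rank distance. I would take an arbitrary nonzero $A \in \C$ and show $\rk(A) \geq d$, since the dimension count combined with Theorem~\ref{thm:algebraicallyClosed}-style bounds (or direct matching against the claimed parameters) will pin down $d$ once we know $d$ is a lower bound. Write $A = \sum_{i \in \mathcal{I}_d} D_i(v_i)$ with $v_i \in \C_i$, not all zero. The crucial structural feature is that $A$ is a matrix whose nonzero entries lie on diagonals indexed by $\mathcal{I}_d$, and on each such diagonal the entries form a codeword of an MDS code of minimum Hamming distance $d$. The idea is that if some $v_i \neq 0$, then because $\C_i$ is MDS with minimum distance $d$, the vector $v_i$ has Hamming weight at least $d$, so $A$ has at least $d$ nonzero entries lying on a single diagonal. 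The remaining work is to argue that having $d$ nonzero entries on a single diagonal forces the rank to be at least $d$: entries on a fixed diagonal occupy distinct rows and distinct columns, so one can extract a $d \times d$ submatrix of $A$ that is (up to row/column permutation) lower- or upper-triangular with nonzero diagonal, hence of full rank $d$.

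The main obstacle, and the step requiring the most care, is precisely this last reduction in the presence of the other diagonals. When $A$ has contributions from several diagonals simultaneously, the $d \times d$ submatrix selected to witness the nonzero entries on the single diagonal $D_i(v_i)$ may also pick up nonzero entries from neighboring diagonals, so it need not be triangular on the nose. I would handle this by choosing, among all diagonals $i$ with $v_i \neq 0$, the one that is \emph{extremal} (say the smallest index $i$, corresponding to the diagonal furthest toward the lower-left), and then selecting the $d$ nonzero positions of $v_i$ together with the rows and columns they determine. By extremality, the contributions of the other diagonals to this submatrix all lie strictly to one side, so the submatrix is genuinely triangular after reordering, with the $d$ guaranteed nonzero entries of $v_i$ on its anti-diagonal; its determinant is then (up to sign) the product of these entries and is nonzero. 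This yields $\rk(A) \geq d$. Finally, I would exhibit a single nonzero codeword of rank exactly $d$ (taking $v_i$ a minimum-weight codeword of some $\C_i$ with $t_i = d$, or arguing via the matching of the dimension with the Singleton-like bound of Theorem~\ref{thm:singbound}) to conclude that the minimum rank distance equals $d$.
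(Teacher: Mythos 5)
The paper does not actually prove Theorem \ref{thm:construction_Rees}; it states it with citations to \cite{roth1991maximum,rees1996linear}. Your argument is, in substance, the classical proof from those references, and it is correct. The sum in \eqref{eq:construction_MDS_diagonals} is direct because the summands occupy disjoint sets of matrix entries, so $\dim_\K(\C)=\sum_{i\in\mathcal I_d}(t_i-d+1)$, and since $\mathcal I_d$ is the contiguous range of diagonals of length at least $d$, this sum is indeed $(m-d+1)(n-d+1)$. For the minimum distance, your key step is the right one: distinct diagonals do not interfere, so the entries of a codeword $A=\sum_i D_i(v_i)$ on diagonal $i$ are exactly $v_i$; choosing the extremal index $i$ with $v_i\neq 0$, the MDS property gives at least $d$ nonzero entries on that diagonal, and the $d\times d$ submatrix on the corresponding rows and columns is triangular (all contributions of the remaining diagonals fall strictly on one side, by extremality) with those $d$ nonzero entries as its pivots, hence nonsingular. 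This gives $\rk(A)\geq d$, and a minimum-weight codeword of a single $\C_i$ placed on its diagonal has rank exactly $d$, since a matrix supported on one diagonal has rank equal to the Hamming weight of that diagonal.

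One caveat: the parenthetical alternatives you offer for pinning down the minimum distance do not work in general and should be dropped. Theorem \ref{thm:algebraicallyClosed} holds only over algebraically closed fields, whereas this construction is of interest precisely over fields such as $\R$ or small finite fields. Matching the dimension against the Singleton-like bound of Theorem \ref{thm:singbound} also fails: for minimum distance $d+1$ that bound gives $k\leq m(n-d)$, and $(m-d+1)(n-d+1)$ can be strictly smaller than $m(n-d)$ (e.g.\ $m=n=5$, $d=3$ gives $9<10$), so the dimension alone does not exclude minimum distance $d+1$. Consequently, the explicit exhibition of a rank-$d$ codeword, which you also give, is not an optional shortcut but the step that is actually needed.
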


As a consequence of Theorem \ref{thm:construction_Rees}, we derive the existence of $[m\times n, (m-d+1)(n-d+1),d]_{\K}$ codes whenever we can construct MDS codes over the field $\K$ of length $t_i$, for each $i\in\mathcal I_d$.

\begin{corollary}\label{cor:generalLowerBound}
 For every $d \leq n \leq m$, and for every  field $\K$ such that $|\K| \geq n-1$, there exists an $[n \times m, (m-d+1)(n-d+1), d]_{\K}$ code. In particular, for algebraically closed fields, the bound of Theorem \ref{thm:algebraicallyClosed} is  tight for every choice of (admissible) $n,m,d$.
\end{corollary}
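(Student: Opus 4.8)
The plan is to derive Corollary~\ref{cor:generalLowerBound} as an essentially immediate consequence of Theorem~\ref{thm:construction_Rees}, together with the classical existence theory for MDS codes in the Hamming metric. The whole content of the corollary is a statement about when the diagonal construction \eqref{eq:construction_MDS_diagonals} can actually be carried out: the construction requires, for each index $i \in \mathcal I_d$, a Hamming-metric MDS code $\C_i$ with parameters $[t_i, t_i-d+1, d]_{\K}$, and such a code exists precisely when the field is large enough relative to the length $t_i$. So first I would isolate the arithmetic fact that, under the hypothesis $d \le n \le m$, the lengths $t_i$ appearing in the construction are all bounded above by $n$; indeed from the definition of $t_i$ one has $t_i \le n$ for every admissible $i$ (the middle range gives exactly $n$, and the two sloped ranges give strictly smaller values), and moreover only those $i$ with $t_i \ge d$ contribute, so every contributing length satisfies $d \le t_i \le n$.

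Next I would invoke the standard existence result for MDS codes: over any field $\K$ with $|\K| \ge n-1$, a doubly-extended Reed--Solomon code furnishes an $[N, N-d+1, d]_{\K}$ MDS code for every length $N \le n$ and every $d \le N$ (this is exactly the sufficient condition recalled in the paragraph preceding Theorem~\ref{thm:construction_Rees}). Since each relevant $t_i$ satisfies $t_i \le n$, the hypothesis $|\K| \ge n-1$ guarantees that all the codes $\C_i$ required by \eqref{eq:construction_MDS_diagonals} exist simultaneously. Feeding these into Theorem~\ref{thm:construction_Rees} then yields an $[n \times m, (m-d+1)(n-d+1), d]_{\K}$ code, which is the first assertion.

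For the second assertion, about algebraically closed fields, I would simply observe that an algebraically closed field is infinite, so in particular $|\K| \ge n-1$ holds for every $n$, and the existence part just proved applies. This produces a code meeting the bound $k = (m-d+1)(n-d+1)$ of Theorem~\ref{thm:algebraicallyClosed}; since that theorem shows the same quantity is an upper bound over algebraically closed fields, the constructed code is optimal and the bound is tight for every admissible triple $(n,m,d)$.

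I do not expect a genuine obstacle here, as the corollary is a packaging of the preceding machinery rather than a new argument; the only point requiring a small amount of care is the uniform length bound $t_i \le n$, which must be checked against all three cases in the piecewise definition of $t_i$ to be sure that the single cardinality condition $|\K| \ge n-1$ suffices for every contributing diagonal at once. Verifying that $\max_i t_i = n$ (attained on the flat middle range $n-m \le i \le 0$) is the one place where I would be explicit, since it is exactly what lets the global hypothesis on $|\K|$ control every individual MDS ingredient.
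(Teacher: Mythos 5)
Your proposal is correct and matches the paper's (implicit) argument: the corollary is stated there as an immediate consequence of Theorem~\ref{thm:construction_Rees} combined with the existence of (extended) Reed--Solomon MDS codes of each required length $t_i$, which is exactly what you do, including the key observation that $\max_i t_i = n$ so that the single hypothesis $|\K|\geq n-1$ covers every diagonal, and that algebraically closed fields are infinite so the bound of Theorem~\ref{thm:algebraicallyClosed} is attained. Your explicit check of the piecewise bound $t_i\leq n$ is the right point to be careful about (and in fact shows the condition $|\K|\geq \min\{m,n\}-1$ suffices, slightly sharper than the $\max\{m,n\}-1$ mentioned in the paragraph preceding Theorem~\ref{thm:construction_Rees}).
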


\subsection{Rank-metric codes over the reals}

Let us focus now on the case when $\K=\R$. This case is much more complicated, and very little is known in the literature. However, it has been widely studied due their numerous applications, for instance in algebraic topology \cite{hopf1940systeme,adams1962vector,james1957whitehead} and, in turn, in Clifford algebras and mathematical physics \cite{lounesto2001clifford}, and Young measures in PDE theory and calculus of variations \cite{bhattacharya1994restrictions}. We  start from the case of determining the largest dimension of a subspace of square matrices $\C\subseteq \R^{n\times n}$ such that each nonzero matrix is invertible. In other words, we wish to determine the largest integer $k$ such that there exists an $[n\times n, k, n]_{\R}$ code. This case has been completely solved, and found its motivation from the well-known problem of vector fields over spheres \cite{james1957whitehead,adams1962vector}. This is -- up to our knowledge -- the only special case that is completely solved. 

\begin{definition}
Let $n$ be a positive integer, and let us write $n=(2a+1)2^b$, for some nonnegative integers $a,b$. Furthermore, let $c,d$ be the unique nonnegative integers such that $b=c+4d$ and $0\leq c \leq 3$. The \textbf{$n$th Radon-Hurwitz number} \cite{radon1922lineare,hurwitz1923komposition} is the integer
$$\rho(n) \coloneqq 2^c+8d.$$
\end{definition}

 We start by showing that, for every $n \in \mathbb N$, there exists an $[n\times n, \rho(n),n]_{\R}$ code. This was shown by James in \cite{james1957whitehead}. In the following, we provide a revisited idea of the construction of such codes, by using the simpler language of \cite{adams1965matrices}.

\begin{lemma}\label{lem:aux_adams_1}
For every positive integer $n$, we have
$$\rho(16n)=\rho(n)+8.$$
\end{lemma}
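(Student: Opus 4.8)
The plan is to reduce the identity to a bookkeeping check on the $2$-adic valuation, since multiplying by $16 = 2^4$ affects only the power of $2$ dividing $n$ and leaves its odd part untouched.

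First I would write $n = (2a+1)2^b$ as in the definition, so that $16n = (2a+1)2^{b+4}$. The odd factor $2a+1$ is unchanged, hence the only effect of multiplying by $16$ is to replace the exponent $b$ by $b+4$ while keeping the same odd part $2a+1$.

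Next I would track this through the secondary decomposition $b = c + 4d$ with $0 \le c \le 3$. Adding $4$ gives $b+4 = c + 4(d+1)$, and since the constraint $0 \le c \le 3$ is unaffected, this is exactly the decomposition associated to $16n$: its $c$-value is again $c$, while its $d$-value becomes $d+1$. Applying the definition of the Radon-Hurwitz number to $16n$ then yields $\rho(16n) = 2^c + 8(d+1) = (2^c + 8d) + 8 = \rho(n) + 8$, as claimed.

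The only point requiring care is the uniqueness of the decomposition $b = c + 4d$: one observes that $c$ is the residue of $b$ modulo $4$ and $d = (b-c)/4$ the corresponding quotient, so that passing from $b$ to $b+4$ fixes the residue $c$ and increments the quotient $d$ by exactly one. No genuine obstacle arises here; the statement is a direct unwinding of the definition, and I would expect the whole argument to fit in a few lines.
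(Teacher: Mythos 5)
Your proof is correct: the paper itself states this lemma without proof, treating it as an immediate consequence of the definition of the Radon--Hurwitz number, and your argument is precisely that intended unwinding. The key point you rightly flag---uniqueness of the decomposition $b = c + 4d$ with $0 \le c \le 3$, so that passing from $b$ to $b+4$ fixes $c$ and increments $d$---is exactly what makes the computation $\rho(16n) = 2^c + 8(d+1) = \rho(n) + 8$ valid.
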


We now recall the existence of three important division algebras over the reals. These are given by the complex numbers $\CC$, the quaternions $\Hh$ and the octonions $\OO$, which have $\R$-dimension $2$, $4$ and $8$, respectively.
Now, for $\mathbb E \in \{\R, \CC, \Hh, \OO\}$, consider the left multiplication 
$$ \begin{array}{ccc} \E \times \E^{n} & \longrightarrow & \E^{n} \\
(\alpha, v) & \longmapsto & \mm_{\alpha}(v)\coloneqq av.
\end{array} $$
One can easily see that each $\mm_{\alpha}$ is an $\R$-linear map which is invertible. 
\begin{lemma}\label{lem:aux_adams_2}
The left multiplicative actions of
\begin{enumerate}
    \item $\R$ on $\R^{n}$,
    \item $\CC$ on $\CC^{n}$
    \item $\Hh$ on $\Hh^{n}$
    \item $\OO$ on $\OO^{n}$
\end{enumerate}
produce, respectively,
\begin{enumerate}
    \item an $[n\times n,1,n]_{\R}$ code;
    \item an $[2n\times 2n,2,2n]_{\R}$ code;
    \item an $[4n\times 4n,4,4n]_{\R}$ code;
    \item an $[8n\times 8n,8,8n]_{\R}$ code.
\end{enumerate}
\end{lemma}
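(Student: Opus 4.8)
The plan is to treat all four cases at once. Let $\E\in\{\R,\CC,\Hh,\OO\}$ and set $t=\dim_{\R}(\E)\in\{1,2,4,8\}$, so that $\E^n$ is an $\R$-vector space of dimension $tn$. Fixing an $\R$-basis of $\E^n$ identifies $\End_{\R}(\E^n)$ with $\R^{tn\times tn}$, and under this identification the object of interest is
\[ \C:=\{\mm_\alpha \st \alpha\in\E\}\subseteq \R^{tn\times tn}. \]
I then need to verify only two things: that $\C$ is an $\R$-linear code of dimension $t$, and that its minimum rank distance equals $tn$. Specializing $t=1,2,4,8$ to $\E=\R,\CC,\Hh,\OO$ will then yield the four codes in the statement.

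For the dimension I would consider the evaluation map $\mu\colon\E\to\End_{\R}(\E^n)$, $\alpha\mapsto\mm_\alpha$. It is $\R$-linear, since left multiplication distributes over addition and commutes with multiplication by real scalars, which lie in the centre of each $\E$; hence its image $\C$ is an $\R$-subspace. For injectivity I evaluate $\mm_\alpha$ at $(1,0,\ldots,0)\in\E^n$: if $\mm_\alpha=0$ then in particular its first coordinate $\alpha\cdot 1=\alpha$ vanishes. Therefore $\dim_{\R}(\C)=\dim_{\R}(\E)=t$.

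For the minimum distance, note that $\mm_\alpha$ acts coordinatewise, so it is the block-diagonal map consisting of $n$ copies of left multiplication $L_\alpha\colon\E\to\E$; in particular $\mm_\alpha$ is invertible precisely when $L_\alpha$ is. Each $\E$ is a normed division algebra, so its norm $N$ is multiplicative and vanishes only at $0$; consequently $\E$ has no zero divisors, and for $\alpha\neq 0$ the map $L_\alpha$ is injective, hence bijective by finite-dimensionality. Thus every nonzero element of $\C$ is invertible, i.e.\ has rank $tn$, the largest possible rank of a $tn\times tn$ matrix, so $\drk(\C)=tn$. Combining the two computations, $\C$ is a $[tn\times tn,t,tn]_{\R}$ code.

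The one place where care is needed is the octonionic case, where associativity fails and one cannot simply argue that $L_{\alpha^{-1}}$ is a two-sided inverse of $L_\alpha$. The no-zero-divisor argument above circumvents this entirely, since it invokes only injectivity of $L_\alpha$ together with finite-dimensionality and never refers to an explicit inverse; this is exactly what makes the uniform treatment of all four algebras, including $\OO$, go through, and it is the subtle point underlying the offhand remark preceding the lemma that each nonzero $\mm_\alpha$ is invertible.
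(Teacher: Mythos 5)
Your proof is correct and follows essentially the same route as the paper's: $\R$-linearity of left multiplication, invertibility of each nonzero $\mm_\alpha$ from the division-algebra property, and a dimension count. The only difference is one of detail — the paper simply says ``it is invertible, since these are all division algebras,'' while you spell out why this survives the failure of associativity in $\OO$ via the no-zero-divisors argument, which is a worthwhile clarification but not a different approach.
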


\begin{proof}
The left multiplication is $\R$-linear, and it is invertible, since these are all division algebras. The claim follows after checking the dimensions of $\E^{n}$ as $\R$-vector spaces.
\end{proof}

\begin{lemma}\label{lem:aux_adams_3}
Let $\C$ be an $[n\times n,k,n]_{\R}$ code. Then, there exists a $[16n\times 16n,k+8,16n]_{\R}$ code.
\end{lemma}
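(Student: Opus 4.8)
The plan is to work inside the factorization $\R^{16n}=\R^{16}\otimes\R^n$ and to manufacture the eight new dimensions from the octonions, which are the arithmetic source of the summand $8$. Writing $\R^{16}=\R^2\otimes\OO$ and letting $L_{e_1},\dots,L_{e_7}$ denote the left multiplications by the imaginary octonion units on $\OO=\R^8$, I would set $B_i=s\otimes L_{e_i}$ for $i=1,\dots,7$ and $B_8=j\otimes I_8$, where $s=\diag(1,-1)$ and $j=\left(\begin{smallmatrix}0&-1\\1&0\end{smallmatrix}\right)$. Since $\OO$ is a division algebra (the structure underlying Lemma \ref{lem:aux_adams_2}), the $L_{e_i}$ are skew, orthogonal and pairwise anticommuting, and a direct check shows the same for $B_1,\dots,B_8$ on $\R^{16}$. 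Equivalently $\langle I_{16},B_1,\dots,B_8\rangle$ is a $[16\times16,9,16]_\R$ code, which is admissible precisely because $\rho(16)=9$.

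Next I would combine this family with $\C$. Set $\Omega=B_1\cdots B_8$; one checks that $\Omega^\top=\Omega$, $\Omega^2=I_{16}$ and $\Omega B_i=-B_i\Omega$ for all $i$, so $\Omega$ is a symmetric orthogonal involution anticommuting with the whole family. On $\R^{16}\otimes\R^n$ consider the $\R$-subspace
\[
\widetilde\C=\Bigl\{\Omega\otimes A+\sum_{i=1}^8 c_iB_i\otimes I_n \ :\ A\in\C,\ c_i\in\R\Bigr\},
\]
which has dimension $k+8$ because $\Omega,B_1,\dots,B_8$ are linearly independent. For $M=\Omega\otimes A+\sum_i c_iB_i\otimes I_n$ I would compute $MM^\top$: writing $C=\sum_i c_iB_i$, so that $C^\top=-C$ and $C^2=-\|c\|^2I$ by anticommutativity, the relation $\Omega C=-C\Omega$ collapses the expansion to
\[
MM^\top=I_{16}\otimes AA^\top+\|c\|^2\,I_{16n}+C\Omega\otimes(A+A^\top).
\]
When the matrices of $\C$ are skew-symmetric the cross term vanishes, $MM^\top\succeq\|c\|^2 I$ is positive definite for every nonzero $M$, and $M$ is invertible; this already settles the lemma for skew-symmetric codes.

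The main obstacle is that a general $[n\times n,k,n]_\R$ code need not consist of skew-symmetric matrices, and the cross term $C\Omega\otimes(A+A^\top)$ genuinely obstructs invertibility: already at the scalar elements $A=\|c\|\,I_n$ (present whenever $I_n\in\C$) one computes that $M$ is singular. The naive remedy of skew-symmetrizing $\C$ by the doubling $A\mapsto\left(\begin{smallmatrix}0&-A^\top\\A&0\end{smallmatrix}\right)$ lands in $\R^{2n\times2n}$, and combining a skew code in $\R^{2n}$ with eight anticommuting structures forces the carrier to have dimension at least $16$ (eight anticommuting complex structures require $\rho(p)\ge9$, i.e.\ $p\ge16$), overshooting the target $16n$ by a factor of two. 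The crux is therefore to route $\C$ through the octonionic coordinates so that its contribution becomes effectively skew while the ambient dimension stays exactly $16n$: following the octonion bookkeeping of Adams \cite{adams1965matrices}, one makes $A$ act in a transposed, off-diagonal manner across the two eigenspaces of the involution $\Omega$, so that the symmetric and skew parts of $A$ are separated and the cross terms cancel. Verifying that the resulting $MM^\top$ is positive definite for every nonzero element is the only substantive computation; once it is in place the dimension count $k+8$ and the identity $16n$ are immediate, and iterating this step with Lemma \ref{lem:aux_adams_1} and the base codes of Lemma \ref{lem:aux_adams_2} produces an $[n\times n,\rho(n),n]_\R$ code for every $n$.
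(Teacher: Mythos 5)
Your construction, as you yourself concede, proves the lemma only for codes consisting of skew-symmetric matrices: the cross term $C\Omega\otimes(A+A^\top)$ genuinely obstructs invertibility, and the step meant to remove it for an arbitrary code --- making ``$A$ act in a transposed, off-diagonal manner across the two eigenspaces of $\Omega$'' --- is never written down, let alone verified. Handling an arbitrary $[n\times n,k,n]_{\R}$ code is precisely the content of the lemma, so this is a genuine gap, not a deferred computation. Moreover, your reason for dismissing the natural repair --- that doubling via $A\mapsto\bigl(\begin{smallmatrix}0 & -A^\top\\ A & 0\end{smallmatrix}\bigr)$ and then imposing eight anticommuting structures forces ambient dimension $32n$ --- only applies if you insist that all eight extra dimensions come from a Clifford family; that is not the only mechanism available.

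The paper's proof closes exactly this gap with two moves you did not consider. First, it doubles to a \emph{symmetric} space and gains a dimension in the process: for $A\in\C$ and $\lambda\in\R$ it sets
$$M(A,\lambda)=\begin{pmatrix}\lambda I_n & A^\top\\ A & -\lambda I_n\end{pmatrix}\in\R^{2n\times 2n},\qquad
M(A,\lambda)^2=\begin{pmatrix}\lambda^2 I_n+A^\top A & 0\\ 0 & \lambda^2 I_n+AA^\top\end{pmatrix},$$
so $\widetilde\C=\{M(A,\lambda)\st A\in\C,\ \lambda\in\R\}$ is a $(k+1)$-dimensional space of symmetric matrices whose nonzero elements are invertible (clear for $\lambda\neq0$; for $\lambda=0$, $A\neq0$ use $\drk(\C)=n$). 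Second, instead of anticommutation it uses a spectral argument: for $B\in\widetilde\C$ and $\alpha$ a purely imaginary octonion, the Kronecker sum $N(B,\alpha)=B\otimes I_8+I_{2n}\otimes N_\alpha$, with $N_\alpha$ the matrix of left multiplication by $\alpha$ on $\OO\cong\R^8$, has as eigenvalues the pairwise sums of eigenvalues of $B$ and of $N_\alpha$; those of $B$ are real because $B$ is symmetric --- this is where the first step is used --- while those of $N_\alpha$ with $\alpha\neq0$ are purely imaginary and nonzero, so no sum vanishes unless $B=0$ and $\alpha=0$. This adds only the seven dimensions of the imaginary octonions, yielding $(k+1)+7=k+8$ inside $\R^{16n\times16n}$ with $8\cdot 2n=16n$: no overshoot. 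The factor of two you feared in the doubling is absorbed because the symmetric doubling itself supplies one of the eight new dimensions, and the Kronecker-sum argument needs only seven more directions rather than eight anticommuting complex structures.
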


\begin{proof}
The proof is divided in two steps.  

First, for every $A \in \C$ and $\lambda  \in \R$, consider the matrix 
$$M(A,\lambda):=\begin{pmatrix} \lambda \I_n & A^\top \\A & -\lambda \I_n \end{pmatrix}\in \R^{2n \times 2n}.$$
It is easy to see that the space
$$\widetilde{\C}\coloneqq \{M(A,\lambda) \st A \in \C, \lambda \in \R\}$$
is a $(k+1)$-dimensional $\R$-subspace of $\R^{2n\times 2n}$ such that every nonzero matrix is invertible. Furthermore, every matrix in  $\widetilde{\C}$ is symmetric.

Now, let us consider the $\R$-vector space isomorphism $\OO\cong \R^8$ obtained by fixing an $\R$-basis $\mB$ of $\OO$. For an element $\alpha \in \OO$, let $N_{\alpha}\in \R^{8\times 8}$ be the matrix representing $\mm_\alpha$ with respect to the basis $\mB$. Let $T$ be the $7$-dimensional $\R$-subspace of $\OO$ consisting of all the purely imaginary elements. For each $B \in \widetilde{C}$ and $\alpha \in T$, define
$$ N(B,\alpha)\coloneqq B \otimes \I_{8}+\I_{2n}\otimes N_\alpha.$$
It is easy to see that the space
$$\widehat{\C}\coloneqq \big\{N(B,\alpha) \st B \in \widetilde{\C}, \alpha \in T\big\}$$
is a $(k+8)$-dimensional $\R$-subspace of $\R^{16n\times 16n}$ such that every nonzero matrix is invertible. Indeed, $N(B,\alpha)$ is the Kronecker sum of $N_\alpha$ and $B$, whose eigenvalues are the pairwise sums of the eigenvalues of $B$ and $N_{\alpha}$. Since $B$ is symmetric, all its eigenvalues are real, while $N_{\alpha}$ cannot have real eigenvalues, due to the choice of $\alpha$ being in $T$.
\end{proof}

Combining Lemmas \ref{lem:aux_adams_1}, \ref{lem:aux_adams_2} and \ref{lem:aux_adams_3}, we can prove the following.

\begin{theorem}[{see \cite{james1957whitehead,adams1965matrices}}]\label{thm:james:construction}
 For every positive integer $n$, there exists an $[n\times n, \rho(n), n]_{\R}$ code.
\end{theorem}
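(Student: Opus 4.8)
The plan is to prove Theorem \ref{thm:james:construction} by induction on the $2$-adic structure of $n$, reducing every case to the base cases handled by Lemma \ref{lem:aux_adams_2} via repeated application of the doubling step in Lemma \ref{lem:aux_adams_3}. First I would write $n = (2a+1)2^b$ and decompose the exponent as $b = c + 4d$ with $0 \le c \le 3$, exactly as in the definition of the Radon-Hurwitz number $\rho(n) = 2^c + 8d$. The key observation is that $16n = (2a+1)2^{b+4}$, so that incrementing $b$ by $4$ (equivalently, multiplying $n$ by $16$) increases $d$ by one and leaves $c$ unchanged; this is precisely the content of Lemma \ref{lem:aux_adams_1}, namely $\rho(16n) = \rho(n) + 8$. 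Thus the factor of $16$ is the natural unit of induction.

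Next I would set up the induction on $d$. For the base case $d = 0$, we have $0 \le c \le 3$, so $n = (2a+1)2^c$ with $c \in \{0,1,2,3\}$ and $\rho(n) = 2^c \in \{1,2,4,8\}$. Here I would invoke Lemma \ref{lem:aux_adams_2}: writing $n = (2a+1)2^c$ and setting $n_0 = 2a+1$ (an odd number), the four cases $c = 0, 1, 2, 3$ correspond to the left multiplication actions of $\R, \CC, \Hh, \OO$ on $\R^{n_0}, \CC^{n_0}, \Hh^{n_0}, \OO^{n_0}$ respectively. Since $\CC, \Hh, \OO$ have $\R$-dimensions $2, 4, 8$, these yield $[n_0 \times n_0, 1, n_0]_{\R}$, $[2n_0 \times 2n_0, 2, 2n_0]_{\R}$, $[4n_0 \times 4n_0, 4, 4n_0]_{\R}$, and $[8n_0 \times 8n_0, 8, 8n_0]_{\R}$ codes. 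In each case the matrix size is $2^c n_0 = n$ and the dimension is $2^c = \rho(n)$, establishing the base case for all four residues of $c$.

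For the inductive step, I would assume that for a given $n$ with $d \ge 0$ there exists an $[n \times n, \rho(n), n]_{\R}$ code, and then apply Lemma \ref{lem:aux_adams_3} to produce a $[16n \times 16n, \rho(n) + 8, 16n]_{\R}$ code. Since $\rho(16n) = \rho(n) + 8$ by Lemma \ref{lem:aux_adams_1}, this is exactly an $[16n \times 16n, \rho(16n), 16n]_{\R}$ code, completing the induction. Because every positive integer is of the form $(2a+1)2^{c+4d}$ for unique $a, c, d$ and is reached from its base case $(2a+1)2^c$ by applying the multiplication-by-$16$ step exactly $d$ times, this covers all $n$.

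The main conceptual work has already been front-loaded into the three lemmas, so the remaining obstacle is purely bookkeeping: one must verify that the base cases from Lemma \ref{lem:aux_adams_2} are correctly indexed by the residue $c$ and that the odd factor $n_0 = 2a+1$ is carried consistently through the $\E$-module dimension count. The only subtlety worth flagging explicitly is that Lemma \ref{lem:aux_adams_2} is stated for the four algebras acting on $\E^{n_0}$, giving matrix sizes $2^c n_0$; one should confirm that this matches $n = (2a+1)2^c$ precisely, and that the corresponding dimension $2^c$ agrees with $\rho$ evaluated at $d = 0$. Once this alignment is checked, the induction runs with no further computation.
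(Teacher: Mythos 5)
Your proposal is correct and follows exactly the route the paper intends: the paper proves Theorem~\ref{thm:james:construction} by combining Lemmas~\ref{lem:aux_adams_1}, \ref{lem:aux_adams_2} and \ref{lem:aux_adams_3}, which is precisely your induction on $d$ with base cases $n=(2a+1)2^c$, $c\in\{0,1,2,3\}$, handled by the division-algebra actions and the inductive step given by the multiplication-by-$16$ construction. Your write-up simply makes explicit the bookkeeping that the paper leaves to the reader.
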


\begin{remark}[Vector fields on spheres]
    Here, we give a brief overview on the connection between $[n\times n,k,n]_{\R}$ codes and vector fields. Denote by $S^{n-1}=\{ v \in \mathbb{R}^n \colon \|v\|=1 \}$  the $(n-1)$-sphere in $\mathbb{R}^n$ centered in the origin. We recall that a \textbf{continuous vector field} on $S^{n-1}$ is a continuous map $\phi \colon S^{n-1}\rightarrow \mathbb{R}^n$ such that $v \cdot \phi(v)=0$ for every $v \in S^{n-1}$.
    Also, we say that the continuous vector fields on $S^{n-1}$ $\phi_1,\ldots,\phi_k$ are called \textbf{linearly independent} if $\{ \phi_1(v),\ldots,\phi_k(v) \}$ is a set of linearly independent vectors in $\mathbb{R}^n$ for every $v \in S^{n-1}$. The natural question that arises is the determination of the maximum number of linear independent continuous vector fields on $S^{n-1}$. A way to construct linear independent continuous vector fields on $S^{n-1}$ is the following. Let $\C$ be an $[n\times n,k,n]_{\R}$ code and let $\{A_1,\ldots,A_{k}\}$ be an $\R$-basis of $\C$. Consider the $[n\times n,k-1,n]_{\R}$ subcode $\C'=\langle A_1,\ldots,A_{k-1}\rangle_{\R}$.
    For any $i \in \{1,\ldots,k-1\}$, define $B_i=A_k^{-1}A_i$ and the vector field on $S^{n-1}$ as
    \[ \phi_i(v)=\mathrm{proj}_{v^\perp}(B_iv),  \]
    where $\mathrm{proj}_{v^\perp}$ denotes the orthogonal projection map onto $v^\perp$.
    Then, one can derive  that $\phi_1,\ldots,\phi_{k-1}$ are linearly independent vector fields on $S^{n-1}$. Indeed, assume that there exist $\lambda_1,\ldots,\lambda_{k-1}\in \R$ and $v \in S^{n-1}$ such that $$\lambda_1\phi_1(v)+\ldots+\lambda_{k-1}\phi_{k-1}(v)=0.$$
    Then 
    $$0=\mathrm{proj}_{v^\perp} ((\lambda_1B_1+\ldots +\lambda_{k-1}B_{k-1})v)=\mathrm{proj}_{v^\perp} (A_k^{-1}(\lambda_1A_1+\ldots +\lambda_{k-1}A_{k-1})v),$$
    which means that $v$ is an eigenvector for the matrix $A_k^{-1}A$, with $A=\lambda_1A_1+\ldots +\lambda_{k-1}A_{k-1}\in \C'$. Let $\lambda\in \R$ be the corresponding eigenvalue, then $\rk(A-\lambda A_k)<n$, which implies $\lambda=0$ and $A=0$, and, consequently, $\lambda_1=\ldots=\lambda_{k-1}=0$.
\end{remark}
    
    Clearly, by Theorem \ref{thm:james:construction}, we immediately have that the maximum number of linear independent continuous vector fields on $S^{n-1}$ is at least $\rho(n)-1$.
    Adams showed in \cite{adams1962vector} that this number cannot exceed $\rho(n)-1$ by making use of  homotopy theory and topological K-theory. His groundbreaking result automatically implies the following bound on the dimension of a $[n\times n,k,n]_{\R}$ code. 

\begin{theorem}[\textnormal{see  \cite[Theorem 1.1]{adams1962vector}}]\label{th:adams_bound}
 Let $\C$ be an $[n\times n,k,n]_{\R}$ code. Then
 $$ k \leq \rho(n).$$
\end{theorem}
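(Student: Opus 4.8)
The plan is to reduce the statement to the classical problem of bounding the maximal number of linearly independent continuous vector fields on $S^{n-1}$, and then to invoke Adams' topological solution of that problem. Write $V(n)$ for this maximal number. The strategy has two parts: an elementary reduction turning a code into vector fields, and the deep K-theoretic upper bound on $V(n)$ which does the real work.

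First I would use the correspondence recalled in the Remark preceding the statement. Given an $[n\times n,k,n]_{\R}$ code $\C$ with $\R$-basis $\{A_1,\ldots,A_k\}$, the codeword $A_k$ is invertible (every nonzero element of $\C$ is), so I may set $B_i=A_k^{-1}A_i$ and define $\phi_i(v)=\mathrm{proj}_{v^\perp}(B_iv)$ for $i\in\{1,\ldots,k-1\}$. These are $k-1$ continuous vector fields on $S^{n-1}$, and the full-rank hypothesis is exactly what makes them pointwise linearly independent: a nontrivial relation $\sum_i\lambda_i\phi_i(v)=0$ at some $v\in S^{n-1}$ forces $v$ to be an eigenvector of $A_k^{-1}A$ for $A=\sum_i\lambda_iA_i$, whence $\rk(A-\lambda A_k)<n$ for the eigenvalue $\lambda$; since every nonzero element of $\C$ has rank $n$, this yields $A=0$ and then $\lambda_1=\cdots=\lambda_{k-1}=0$. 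Consequently $k-1\le V(n)$.

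It then remains to determine $V(n)$. The construction underlying Theorem \ref{thm:james:construction} already produces $\rho(n)-1$ linearly independent fields, so $V(n)\ge\rho(n)-1$; combined with the previous step, a matching upper bound $V(n)\le\rho(n)-1$ would give $k-1\le V(n)=\rho(n)-1$, that is $k\le\rho(n)$, as desired. Here I would invoke Adams' theorem \cite{adams1962vector} for the upper bound. At the level of a plan, Adams reformulates the existence of $r$ independent vector fields on $S^{n-1}$ as a cross-section (equivalently, a reducibility) question for a Stiefel fibration over $S^{n-1}$, which he encodes through the stunted real projective spaces $\mathbb{RP}^{n-1}/\mathbb{RP}^{n-1-r}$, and then detects the obstruction using the action of the Adams operations $\psi^t$ on complex K-theory.

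The hard part is precisely this upper bound on $V(n)$: it is Adams' groundbreaking K-theoretic computation and cannot be reached by the elementary linear algebra of the first step, so in this survey I would present the reduction in detail and cite \cite{adams1962vector} for the topological input rather than reproduce it. Everything else — the invertibility of $A_k$, the independence argument for the $\phi_i$, and the arithmetic $k-1\le\rho(n)-1\Rightarrow k\le\rho(n)$ — is routine.
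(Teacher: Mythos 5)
Your proposal is correct and follows essentially the same route as the paper: the paper's Remark on vector fields on spheres carries out exactly your reduction (invertibility of $A_k$, the fields $\phi_i(v)=\mathrm{proj}_{v^\perp}(B_iv)$, and the pointwise independence argument via the rank-$n$ hypothesis), and the theorem is then deduced, as you do, by citing Adams' K-theoretic upper bound $\rho(n)-1$ on the number of independent continuous vector fields on $S^{n-1}$.
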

    
}

What about more general values of $d<n$? Very little is known. One solved case is the following and concerns only some special values of $n,m,d$, so that a sort of Theorem \ref{thm:algebraicallyClosed} holds also over the reals. 

For three positive integers $a\leq b \leq c$, define 
\begin{equation}\label{eq:degree_det_var}
    \dd(b,c,a)\coloneqq \prod_{i=0}^{b-a-1}\frac{(c+i)!i!}{(a+i)!(c-a+i)!}.
\end{equation}

\begin{proposition}[see \cite{rees1996linear}]\label{prop:odd_degree}
Let $0 < d <n \leq m$ be three positive integers, and let $\C$ be an $[m\times n, k, d]_{\R}$ code. If $\dd(m,n,d-1)$ is odd, then 
$$k \leq (m-d+1)(n-d+1).$$
\end{proposition}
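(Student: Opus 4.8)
The plan is to mimic the proof of Theorem~\ref{thm:algebraicallyClosed}, replacing the affine dimension theorem over an algebraically closed field with a parity (degree) argument valid over $\R$. The geometric input is the determinantal variety $\mV_{m,n,d-1}$ of matrices of rank at most $d-1$, viewed now inside the real affine space $\R^{m\times n}$. By assumption $\C$ is a $\K$-linear subspace with $\C\cap\mV_{m,n,d-1}=\{0\}$, so the intersection of the linear space $\C$ with the cone $\mV_{m,n,d-1}$ is just the origin. The key numerical fact is that $\mV_{m,n,d-1}$ has codimension $(m-d+1)(n-d+1)$ in $\R^{m\times n}$ and degree $\dd(m,n,d-1)$ as defined in \eqref{eq:degree_det_var}; this is the classical formula of Giambelli/Thom--Porteous for the degree of a determinantal variety.

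First I would suppose, for contradiction, that $k > (m-d+1)(n-d+1)$, i.e.\ that $\dim_\R(\C)$ strictly exceeds the codimension of $\mV_{m,n,d-1}$. Over an algebraically closed field the affine dimension theorem would then force a nonzero intersection point, but over $\R$ this can fail, so the argument must be refined. The idea is to pass to the projectivization: let $X\subseteq \PG(mn-1,\R)$ be the projective determinantal variety and $L=\PG(\C)$ the projective linear subspace of dimension $k-1$. The condition $\C\cap\mV_{m,n,d-1}=\{0\}$ means exactly $X\cap L=\emptyset$ in real projective space. Complexifying, write $X_\CC$ and $L_\CC$ for the base changes to $\CC$; these are a variety of dimension $mn-1-(m-d+1)(n-d+1)$ and a linear space of dimension $k-1$. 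If $k>(m-d+1)(n-d+1)$ then $\dim X_\CC+\dim L_\CC\ge \dim\PG(mn-1,\CC)$, so $X_\CC\cap L_\CC$ is nonempty and its intersection number equals $\deg X_\CC=\dd(m,n,d-1)$.

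The crucial step is the parity argument. The complex intersection points of $X_\CC\cap L_\CC$ come in two kinds: real points, which are forbidden by $X\cap L=\emptyset$, and genuinely complex points, which occur in conjugate pairs because both $X$ and $L$ are defined over $\R$ (the defining minors have real coefficients and $\C$ is a real subspace). Hence, if the intersection is transverse (a generic situation one arranges by a small real perturbation of $L$ that preserves the avoidance $X\cap L=\emptyset$, using that avoidance is an open condition), the total number of intersection points counted with multiplicity is $\dd(m,n,d-1)$, and this number equals twice the number of conjugate pairs, hence is \emph{even}. This contradicts the hypothesis that $\dd(m,n,d-1)$ is odd. Therefore the assumption $k>(m-d+1)(n-d+1)$ is untenable and the bound follows.

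The main obstacle I anticipate is the transversality/perturbation step: one must ensure that after perturbing $L$ to make $L_\CC$ meet $X_\CC$ transversally (so that the intersection number is literally the count of points and no multiplicity masks the parity), the real avoidance $\C\cap\mV_{m,n,d-1}=\{0\}$ is preserved, and that the perturbed $L$ is still defined over $\R$ so that the conjugation symmetry persists. Since $X\cap L=\emptyset$ in the compact real projective space, the avoidance is stable under sufficiently small real perturbations, and generic real linear perturbations achieve transversality over $\CC$; combining these gives a real $L'$ with $L'\cap X=\emptyset$ and $L'_\CC$ meeting $X_\CC$ transversally, at which point the conjugate-pairs counting applies verbatim. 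A secondary technical point is justifying that the intersection multiplicity at a real point would be positive, so that even allowing singular behavior the count of complex points remains congruent to the number of real points modulo $2$; this is handled by the standard fact that conjugation acts freely on the non-real points of a real zero-dimensional scheme, forcing the non-real part to have even length.
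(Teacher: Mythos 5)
Your proposal is correct and is essentially the paper's argument: the paper's proof simply invokes, as a known fact, that a real algebraic variety of odd degree meets every real linear subspace of dimension greater than its codimension, applied to the determinantal variety $\mV_{m,n,d-1}$ with its dimension $(d-1)(m+n-d+1)$ and degree $\dd(m,n,d-1)$. Your complexification, real perturbation, and conjugate-pair parity argument is precisely the standard proof of that cited fact, so you have reproduced the paper's route with the black box opened (the only caveat, easily fixed, is that when $k-1$ strictly exceeds $(m-d+1)(n-d+1)$ you should first pass to a subspace of $\C$ of dimension exactly $(m-d+1)(n-d+1)+1$, since otherwise no perturbation of its projectivization can make the complex intersection finite).
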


\begin{proof}
The space $\mV_{m,n,d-1}$ of matrices of rank at most $d-1$  is an algebraic variety of dimension $(d-1)(m+n-d+1)$ of degree $\dd(m,n,d-1)$. Since the degree of $\mV_{m,n,d-1}$ is odd, then it meets nontrivially every linear space of dimension strictly greater than $mn-(d-1)(m+n-d+1)=(m-d+1)(n-d+1)$. This concludes the proof. 
\end{proof}

\begin{remark}\label{rem:odd_degree}
The natural question that arises at this point is when $\dd(n,m,d-1)$ is odd. A condition for which this holds was found in \cite{rees1996linear}. More precisely, let $s$ be the unique integer such that 
$n-d+1\leq 2^s<2(n-d+1)$. Then, $$\dd(n,n,d-1)$$ is odd if and only if $n \equiv \pm (n-d+1) \pmod{2^{s+1}}$. 

We immediately verify that when $d=n$, the integer $d(n,n,n-1)$ is odd if and only if  $n \equiv \pm 1 \pmod 2$, and therefore, by Proposition \ref{prop:odd_degree}, when $n$ is odd any $[n\times n,k,n]_{\R}$ code must have dimension $k\leq 1$. This coincides with the upper bound of Theorem \ref{th:adams_bound}, since one has $\rho(n)=1$ if and only if $n$ is odd. 
\end{remark}

Another special case of rank-metric codes over the reals was highly studied due to the application in topology and measure theory. This is the case of square matrix codes ($n=m$) with minimum rank distance $d=2$; see \cite{handel1970subspaces,bhattacharya1994restrictions}. In this case of minimum rank distance $d=2$, one can state the following bound.

\begin{proposition}[see \cite{rees1996linear}]
 Let $\C$ be an $[m\times n, k, 2]_{\R}$ code. Then
 $$k\leq nm-1- \max\left\{ a+b \st a \leq n-1, b\leq m-1, \binom{a+b}{a} \mbox{ is odd } \right\}.$$
\end{proposition}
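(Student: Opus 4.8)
The plan is to translate the coding-theoretic hypothesis into the nonexistence of a suitable nonsingular bilinear map and then invoke a mod-$2$ cohomological obstruction of Hopf--Stiefel type. Concretely, write $N:=mn-k$ and let $\pi\colon \R^{m\times n}\to \R^{m\times n}/\C\cong \R^{N}$ be the quotient map. Since $\drk(\C)=2$, the code $\C$ contains no nonzero matrix of rank $1$; as every nonzero rank-one matrix is of the form $uv^\top$ with $u\in\R^m\setminus\{0\}$ and $v\in\R^n\setminus\{0\}$, this says exactly that
$$\beta\colon \R^m\times\R^n\longrightarrow \R^N,\qquad \beta(u,v):=\pi(uv^\top),$$
is a \emph{nonsingular} bilinear map, i.e. $\beta(u,v)\neq 0$ whenever $u\neq 0$ and $v\neq 0$. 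The statement is thus equivalent to a lower bound on the target dimension $N$, and the goal becomes to prove $N\geq 1+M$, where $M:=\max\{a+b : a\leq n-1,\ b\leq m-1,\ \binom{a+b}{a}\text{ odd}\}$.

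Next I would projectivize. Because $\beta$ is bilinear and never vanishes on nonzero inputs, the line $[\beta(u,v)]$ depends only on $[u]$ and $[v]$, yielding a continuous map
$$f\colon \mathbb{P}(\R^m)\times\mathbb{P}(\R^n)\longrightarrow \mathbb{P}(\R^N)$$
between real projective spaces. Working with $\F_2$-coefficients, write $H^*(\mathbb{P}(\R^m);\F_2)=\F_2[x]/(x^m)$, $H^*(\mathbb{P}(\R^n);\F_2)=\F_2[y]/(y^n)$ and $H^*(\mathbb{P}(\R^N);\F_2)=\F_2[z]/(z^N)$, with $x,y,z$ the degree-one generators. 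The crucial computation is that $f^*(z)=x+y$. I would obtain this bundle-theoretically: $\beta$ sends the line $\R(u\otimes v)$ injectively onto $\R\beta(u,v)$, so $f$ classifies the external tensor product $\gamma_m\otimes\gamma_n$ of the tautological line bundles, whence $f^*\gamma_N=\gamma_m\otimes\gamma_n$ and $f^*(z)=w_1(\gamma_m\otimes\gamma_n)=w_1(\gamma_m)+w_1(\gamma_n)=x+y$.

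With this in hand the conclusion is short. Since $z^N=0$ in the target, applying $f^*$ gives $(x+y)^N=0$ in $\F_2[x]/(x^m)\otimes\F_2[y]/(y^n)$. Now fix a pair $(a,b)$ attaining $M$, so $a\leq n-1$, $b\leq m-1$ and $\binom{a+b}{a}$ is odd. Expanding $(x+y)^{a+b}=\sum_{j}\binom{a+b}{j}x^{j}y^{a+b-j}$, the monomial $x^{b}y^{a}$ survives (its exponents lie strictly below $m$ and $n$) with odd coefficient $\binom{a+b}{b}=\binom{a+b}{a}$, so $(x+y)^{M}=(x+y)^{a+b}\neq 0$. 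Because $x+y$ is homogeneous of degree one, the set $\{s : (x+y)^s\neq 0\}$ is downward closed; hence $(x+y)^{M}\neq 0$ forces $(x+y)^{s}\neq 0$ for every $s\leq M$. If $N\leq M$ this would give $(x+y)^N\neq 0$, contradicting $(x+y)^N=0$; therefore $N\geq M+1$, i.e. $mn-k=N\geq M+1$, which is the claimed $k\leq nm-1-M$.

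I expect the bundle identification $f^*(z)=x+y$ to be the main obstacle: one must justify carefully that the nonsingular bilinear map induces a fiberwise-injective bundle map $\gamma_m\otimes\gamma_n\hookrightarrow\underline{\R^N}$, so that $f$ is genuinely the classifying map of $\gamma_m\otimes\gamma_n$, and then apply the additivity of $w_1$ under tensor products of line bundles. Everything else is formal: the translation to a nonsingular bilinear map is routine, and the final step is elementary $\F_2$-algebra once $(x+y)^N=0$ has been established.
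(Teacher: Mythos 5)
Your proof is correct and complete. Note that the paper itself offers no argument for this proposition---it is stated with a bare citation to Rees---so there is no internal proof to compare against; your argument is precisely the classical Hopf--Handel obstruction that underlies the cited result: translate the absence of rank-one codewords into a nonsingular bilinear map $\beta\colon\R^m\times\R^n\to\R^{N}$ with $N=mn-k$, projectivize, and combine $f^*(z)=x+y$ with $z^N=0$ in $H^*(\mathbb{P}(\R^N);\F_2)$. All steps check out: the identification $f^*\gamma_N\cong\gamma_m\otimes\gamma_n$ is legitimate, since nonsingularity makes the induced map $u\otimes v\mapsto\beta(u,v)$ injective on each fiber $(\R u)\otimes(\R v)$, so $f$ really is the classifying map of the external tensor product and $w_1$-additivity gives $f^*(z)=x+y$. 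If you want to avoid bundle theory altogether, the same identity follows from the K\"unneth theorem by restricting $f$ to the slices $\mathbb{P}(\R^m)\times\{[v_0]\}$ and $\{[u_0]\}\times\mathbb{P}(\R^n)$, where it is induced by the injective linear maps $u\mapsto\beta(u,v_0)$ and $v\mapsto\beta(u_0,v)$, hence homotopic to linear embeddings pulling $z$ back to $x$ and $y$ respectively. The implicit hypotheses ($m,n\ge 2$, $N\ge 1$, and the correct matching of the exponent ranges $b\le m-1$, $a\le n-1$ with the truncations $x^m=0$, $y^n=0$) are all satisfied, and your downward-closedness observation correctly converts $(x+y)^M\neq 0$ and $(x+y)^N=0$ into $N\ge M+1$, which is the stated bound.
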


\section{Future research directions}

In this section, we explore potential research directions in the theory of MRD codes, highlighting open problems and conjectures that could shape future investigations. We begin by discussing a conjecture on the existence of MRD codes over finite fields, followed by an examination of the broader problem of their existence over arbitrary fields. Finally, we consider the intriguing case of rank metric codes over the real numbers.

\subsection{An MRD conjecture}

Consider the case of a finite field extension  $\F_{q^m}/\F_q$. By Theorem \ref{thm:existenceMRD}, we know that  $\F_q$-linear MRD codes in $\F_{q^m}^n$ exist for any possible value of the minimum distance. The existence of $[n,k]_{\F_{q^m}/\F_q}$ MRD codes, however, is less straightforward to determine, and it is still unknown for many parameter sets.
Using the notation of Theorem \ref{thm:max_hscatt_iff_mrd}, we know that $[n,k]_{\F_{q^m}/\F_q}$ MRD codes exist in the following cases.

\begin{theorem}\label{thm:existenceMRDlinear}
There exists an $[n,k]_{\F_{q^m}/\F_q}$ MRD code in the following cases:
\begin{itemize}
    \item[(1)] $n\leq m$, see \cite{de78,ga85a,roth1991maximum};
    \item[(2)] $n=mk/2$, see \cite{ball2000linear,blokhuis2000scattered,bartoli2018maximum,csajbok2017maximum};
    \item[(3)] $n$ divides $mk$ and $mk/n$ divides $k$, see \cite{csajbok2021generalising};
    \item[(4)] $h=m-3$ and $k$ is odd, see \cite{csajbok2021generalising};
    \item[(5)] $k\ne 5$, $m=6$ and $q$ is an odd power of $2$, see \cite{bartoli2024new}.
\end{itemize}
\end{theorem}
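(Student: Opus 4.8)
The plan is to reduce the existence of each MRD code to the existence of a suitable scattered system via the dictionary of Theorem~\ref{thm:max_hscatt_iff_mrd}, and then either build the system by hand using the paper's own constructions or cite the specialized ones. Throughout, $\F_{q^m}/\F_q$ is cyclic Galois with $\sigma$ the $q$-Frobenius, so Example~\ref{ex:hscatt}, Theorem~\ref{th:directsum} and Theorem~\ref{thm:max_hscatt_iff_mrd} all apply. Concretely, for $k\ge 2$ an $[n,k]_{\F_{q^m}/\F_q}$ code is MRD exactly when its system $\mU\subseteq\F_{q^m}^k$ is $(k-1)$-scattered or maximum $h$-scattered for some $h\in\{1,\ldots,k-2\}$; and by Corollary~\ref{cor:hscattbound} an $h$-scattered system of length $n=\frac{km}{h+1}$ is automatically maximum. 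So each item reduces to exhibiting one $\mU$ of the prescribed length and verifying the single inequality $\dim_{\F_q}(\mU\cap W)\le h$ for all $h$-dimensional $\F_{q^m}$-subspaces $W$.

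I would dispatch the cases that follow from the paper's machinery first. For (1), since $n\le m$, the Delsarte--Gabidulin code of Theorem~\ref{thm:DG_construction} (when $n=m$) together with the puncturing of Proposition~\ref{prop:puncturingMRD} (when $n<m$) already gives the code; the $n=m$ instance is exactly the $(k-1)$-scattered system $\{(\alpha,\sigma(\alpha),\ldots,\sigma^{k-1}(\alpha)):\alpha\in\F_{q^m}\}$ of Example~\ref{ex:hscatt}. For (3), set $h+1=mk/n$: the hypotheses say precisely that $n\mid mk$ (so $h+1$ is an integer) and $(h+1)\mid k$, so with $t=k/(h+1)$ the direct-sum system of Example~\ref{ex:hscatt} has length $mt=\frac{mk}{h+1}=n$ and is $h$-scattered by Theorem~\ref{th:directsum}. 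Meeting Corollary~\ref{cor:hscattbound} with equality makes it maximum $h$-scattered, and Theorem~\ref{thm:max_hscatt_iff_mrd} produces the MRD code. The sub-case of (2) with $k$ even is just $h=1$ in this same construction.

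The remaining cases are the real obstacle, since they escape the direct-sum recipe and depend on constructions whose scatteredness is genuinely delicate to certify. For (2) with $k$ odd one must invoke the maximum scattered subspaces produced in the cited works from scattered and linearized polynomials; for (4), the relevant objects are maximum $(m-3)$-scattered subspaces (so $h+1=m-2$ and $n=\frac{km}{m-2}$), and both the construction and the parity condition on $k$ come from the intersection estimates there; for (5), the $m=6$ examples over characteristic-two fields rely on ad hoc algebraic-geometric and computer-assisted arguments. In all three, the difficulty is not the reduction, which is uniform, but the verification that the candidate $\mU$ satisfies $\dim_{\F_q}(\mU\cap W)\le h$ for every such $W$; this is where essentially all the external technical work lives. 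Accordingly, I would present (1), (3) and the even sub-case of (2) in full through Example~\ref{ex:hscatt} and Theorem~\ref{th:directsum}, and cite the remaining constructions, singling out the scattered-verification as the crux.
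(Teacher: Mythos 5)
Your proposal is correct and takes essentially the same route as the paper: the paper states this theorem without proof, as a catalogue of citations, remarking only that the cited results rest on the characterization of MRD codes as $(k-1)$-scattered or maximum $h$-scattered systems (Theorem~\ref{thm:max_hscatt_iff_mrd}), which is precisely your reduction. Your self-contained treatment of cases (1), (3) and the even-$k$ part of (2) via Example~\ref{ex:hscatt}, Theorem~\ref{th:directsum} and Corollary~\ref{cor:hscattbound} fills in what the paper leaves implicit, and your deferral to the cited constructions for (2) with $k$ odd, (4) and (5) -- where the genuinely hard scatteredness verifications live -- matches the paper's own stance.
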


Theorem \ref{thm:existenceMRDlinear} is based on a collection of papers in which the existence of $\F_{q^m}$-linear MRD codes is proved in some cases. Quite a few of these instances are proved using the characterization result of MRD codes as maximum $h$-scattered spaces given in Theorem \ref{thm:max_hscatt_iff_mrd}. The existence of maximum $h$-scattered space seems plausible, due to all these partial results. Thus, we believe that it is not too rash to formulate the following conjecture. 

\begin{conjecture}[The linear MRD conjecture]
    For any $q,m,n,k$ positive integers such that $q$ is a prime power, $k\leq n$, and one of the following holds:
    \begin{itemize}
        \item[(1)] $n\leq m$;
        \item[(2)] $n>m$ and $n=\frac{mk}{m-d+1}$, for some integer $d$.
    \end{itemize}
    There exists an $[n,k]_{\F_{q^m}/\F_q}$ MRD code.
\end{conjecture}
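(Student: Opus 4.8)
The plan is to pass from codes to the geometric language of scattered systems via Theorem~\ref{thm:max_hscatt_iff_mrd}: an $[n,k]_{\F_{q^m}/\F_q}$ code with $k\ge 2$ is MRD precisely when an associated system is $(k-1)$-scattered, or maximum $h$-scattered for some $h\in\{1,\ldots,k-2\}$. It therefore suffices to exhibit, in each admissible parameter regime, a system of the required scattering type, and I would treat the two cases of the conjecture separately.

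Case (1) needs no new input. When $n\le m$, the Delsarte--Gabidulin codes of Theorem~\ref{thm:existenceMRD} already provide an $[n,k]_{\F_{q^m}/\F_q}$ MRD code for every $k\le n$; geometrically, the $\sigma$-twisted system $\{(\alpha,\sigma(\alpha),\ldots,\sigma^{k-1}(\alpha)):\alpha\in\F_{q^m}\}$ of Example~\ref{ex:hscatt} is $(k-1)$-scattered of the correct dimension, and the MRD property is inherited under the puncturing of Proposition~\ref{prop:puncturingMRD}.

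The real content is case (2). Setting $h+1=\tfrac{mk}{n}$, the hypothesis $n=\tfrac{mk}{m-d+1}$ says exactly that $\tfrac{mk}{n}$ is a positive integer, i.e.\ that $n$ attains the scattered bound $n\le\tfrac{km}{h+1}$ of Corollary~\ref{cor:hscattbound} with equality for this $h$. By Theorem~\ref{thm:max_hscatt_iff_mrd} the conjecture is thus equivalent to the assertion that a maximum $h$-scattered $[n,k]_{\F_{q^m}/\F_q}$ system exists whenever $h+1=\tfrac{mk}{n}\in\mathbb Z_{>0}$. The natural first attack combines the two tools already in hand. The building block $\{(\alpha,\sigma(\alpha),\ldots,\sigma^{h}(\alpha)):\alpha\in\F_{q^m}\}$ of Example~\ref{ex:hscatt} is an $h$-scattered $[m,h+1]_{\LL/\K}$ system, and since a direct sum of two $h$-scattered systems is again $h$-scattered by Theorem~\ref{th:directsum} (there $r_1+r_2-h=h$), taking $t=\tfrac{k}{h+1}$ such blocks yields a maximum $h$-scattered $[mt,k]_{\LL/\K}$ system with $mt=\tfrac{mk}{h+1}=n$. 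This disposes of every instance with $(h+1)\mid k$, recovering exactly Theorem~\ref{thm:existenceMRDlinear}(3).

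The hard part, and the reason the statement is a conjecture rather than a theorem, is the residual case $(h+1)\nmid k$. Here no partition of $k$ into blocks of size $h+1$ exists, and any leftover summand of $\F_{q^m}$-dimension $r\le h$ can be at best $(r-1)$-scattered, so gluing it on via Theorem~\ref{th:directsum} strictly degrades the scattering degree (by Proposition~\ref{prop:oneless}) and loses equality in the bound. Overcoming this appears to require genuinely new maximum $h$-scattered families tuned to the arithmetic of $q,m,n,k$, exactly as in the sporadic constructions underlying Theorem~\ref{thm:existenceMRDlinear}(2),(4),(5); I do not know a uniform construction covering all residues, and a complete proof would presumably demand either such a family or a non-constructive existence argument---for instance a probabilistic or algebro-geometric count of scattered subspaces over $\F_{q^m}$---that sidesteps explicit models altogether.
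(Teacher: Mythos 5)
The statement you were asked to prove is a \emph{conjecture}: the paper offers no proof of it, only the partial evidence collected in Theorem~\ref{thm:existenceMRDlinear}, and your proposal---quite correctly---does not prove it either. What you have written is essentially an accurate reconstruction of the paper's own state of knowledge. Your translation of the problem via Theorem~\ref{thm:max_hscatt_iff_mrd} is the right one: case (1) is the classical Delsarte--Gabidulin regime (item (1) of Theorem~\ref{thm:existenceMRDlinear}), and in case (2), writing $h+1=\frac{mk}{n}=m-d+1$, the conjecture becomes the existence of maximum $h$-scattered $[n,k]_{\F_{q^m}/\F_q}$ systems. Your direct-sum argument (Example~\ref{ex:hscatt} plus Theorem~\ref{th:directsum}) correctly settles the subcase $(h+1)\mid k$, which is exactly item (3) of Theorem~\ref{thm:existenceMRDlinear}, and your diagnosis of why the argument stops there is also correct: gluing on a leftover block of $\F_{q^m}$-dimension $r\le h$ can only be $(r-1)$-scattered, so the direct sum falls strictly short of the bound of Corollary~\ref{cor:hscattbound}. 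The remaining parameter sets---beyond the sporadic families in items (2), (4), (5) of Theorem~\ref{thm:existenceMRDlinear}---are open, and that is precisely why the paper states this as a conjecture rather than a theorem.

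Two minor points of precision. First, in case (2) the hypothesis is slightly stronger than ``$\frac{mk}{n}$ is a positive integer'': one needs $\frac{mk}{n}=m-d+1$ for an admissible distance $d$, which bounds $h=m-d$ in the range required by Theorem~\ref{thm:max_hscatt_iff_mrd} (note $h\le k-2$ holds automatically when $n>m$, since $h=k-1$ would force $n=m$). Second, in case (1) you do not need any scattered-subspace machinery at all: Theorem~\ref{thm:existenceMRD} (Delsarte--Gabidulin plus puncturing, with $\LL$-linearity preserved) already gives the codes directly. Neither point affects your conclusions; your assessment of where the genuine difficulty lies is the same as the paper's.
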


As shown in Section \ref{sec:evasive-rankmetriccodes}, the geometric counterparts of MRD codes are scattered subspaces with respect to the hyperplanes and maximum $h$-scattered subspaces. More precisely, when $n\leq m$ the $q$-system associated with an $[n,k]_{\F_{q^m}/\F_q}$ MRD code is (not necessarily maximum) scattered with respect to the hyperplanes and when $n>m$ the system associated with an $[n,k,d]_{\F_{q^m}/\F_q}$ MRD code is a maximum $(m-d)$-scattered subspace of dimension $\frac{mk}{m-d+1}$. Solving the aforementioned conjecture would thus imply the existence of these scattered spaces, which has been studied even before their connection to rank-metric codes. These spaces are linked to a wide variety of objects, such as translation hyperovals, translation caps, linear sets, codes with few weights in the Hamming metric, and semifields; see, for example, \cite{lavrauw2016scattered}. Consequently, new instances of the conjecture could also have an impact on other areas.

\subsection{Rank-metric codes over any field}

The picture for general fields is still far from clear. As mentioned in the previous subsection, we have a relatively clear understanding for finite fields and a few other specific classes of fields. For instance, in the case of fields admitting a Galois extension, we described in Section \ref{sec:constrMRD} the family of Delsarte-Gabidulin codes, which provide MRD codes for all possible values of the minimum distance. These fields exhibit behavior very similar to that of finite fields. Indeed, the family of Delsarte-Gabidulin codes can be viewed as linear MRD codes, as described in Section \ref{sec:linearMRD}. 

These codes can be used to construct $[n, k]_{\LL/\K}$ MRD codes when $n \leq [\LL : \K]$, corresponding to the parameters of case (1) in the previous section, as well as $[n, k, d]_{\LL/\K}$ MRD codes with $n = [\LL : \K]k/([\LL : \K] - d + 1)$, corresponding to the parameters of case (3) of Theorem \ref{thm:existenceMRDlinear}. It would be interesting to explore whether the codes used to achieve the parameters in cases (2), (4), and (5) of Theorem \ref{thm:existenceMRDlinear} can also be formulated within the framework of Galois extensions. However, the proofs of the MRD property for these codes seem to rely heavily on the finite field context.
In particular, the rank arguments underlying the MRD property crucially depend on the linear independence criteria provided by Moore and Dickson matrices, whose behavior is tightly linked to cyclic Galois extensions. This phenomenon is much more complicated to exploit for more general Galois extensions, as shown in \cite{augot2021rank}.

Another possible approach to constructing additional examples of MRD codes is the use of scattered polynomials. For a recent survey on this topic, see \cite{longobardi2024scattered}. A similar definition and connection to MRD codes can be formulated in the context of Galois extensions. 

\begin{definition}
    A \textbf{scattered endomorphism} is an element $a \in \LL[G]$ such that the image of $\langle \mathrm{id}, a \rangle_{\LL}$  under the map $\psi$ of Theorem \ref{thm:skew_isomorphism} is an MRD code. 
\end{definition}
\begin{example}
If $\LL/\K$ is cyclic, with $\Gal(\LL/\K)=\langle \sigma \rangle$,
then $\sigma+\alpha\sigma^{m-1}$ is a scattered endomorphism  whenever $\mathrm{N}_{\LL/\K}(\alpha)\neq 1$. This can be proved using the same arguments as in \cite{sheekey2016new,sheekey2020new}; see also \cite{neri2022twisted}.
\end{example}

For other known scattered polynomials $\bar{a} \in \F_{q^m}[\sigma]$ in the finite field context, 
using the arguments in \cite[Section 2]{maazouz2021valued}, one should be able to find a generalization whenever $\LL/\K$ is an unramified extension of non-Archimedean local fields. 

\begin{proposition}\label{prop:unramifiedextentionlocal}
    Let $\LL/\K$ be an unramified extension of non-Archimedean local fields and let $\F_{q^m}/\F_q$ be the corresponding extension of the residue fields.
    Let $a \in \mathcal O_\LL[\sigma]$ -- where $\mathcal{O}_{\LL}$ denotes the ring of integers of $\LL$ -- be such that its reduction modulo the maximal ideal of $\mathcal O_\LL$ is equal to $\bar{a}$, and let $\C=\langle \mathrm{id},a\rangle_{\LL}$ and $\bar{\C}=\langle \mathrm{id}, \bar{a} \rangle_{\F_{q^m}}$.
    If $\psi(\bar{\C})$ is an MRD code in $\End_{\Fq}(\F_{q^m})$, then also $\psi(\C)$ is an MRD code in $\End_{\K}(\LL)$.
\end{proposition}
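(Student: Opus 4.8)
The plan is to reduce the MRD property over $\LL$ to the MRD property over the residue field $\F_{q^m}$ via reduction modulo the maximal ideal. Recall that being MRD for the code $\C=\langle \mathrm{id},a\rangle_{\LL}$ means that every nonzero $\LL$-linear combination $\lambda\,\mathrm{id}+\mu\,a$ with $(\lambda,\mu)\neq(0,0)$ has, under $\psi$, kernel of the expected (minimal possible) dimension; equivalently, the corresponding endomorphism of $\LL$ attains the maximal rank prescribed by the Singleton-like bound of Theorem \ref{thm:singbound}. Since a $2$-dimensional code of $\sigma$-degree at most $1$ is MRD precisely when every such nonzero element has full rank as an endomorphism (minimum distance $m-1$), the whole statement amounts to: if $\lambda\,\mathrm{id}+\mu\,\bar a$ is invertible over $\F_{q^m}$ for all nonzero $(\bar\lambda,\bar\mu)$, then $\lambda\,\mathrm{id}+\mu\,a$ is invertible over $\LL$ for all nonzero $(\lambda,\mu)$.

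First I would set up the integral structure. Because $\LL/\K$ is unramified, the ring of integers $\mathcal O_\LL$ is a free $\mathcal O_\K$-module and its reduction modulo the maximal ideal $\mathfrak m_\LL$ is exactly the residue extension $\F_{q^m}/\F_q$; moreover the Frobenius $\sigma\in\Gal(\LL/\K)$ reduces to the $q$-Frobenius on $\F_{q^m}$. This means reduction induces a well-defined $\K$-algebra map $\mathcal O_\LL[\sigma]\to\F_{q^m}[\sigma]$ compatible with $\psi$, so that the reduction of $\psi(\lambda\,\mathrm{id}+\mu\,a)$ is $\psi(\bar\lambda\,\mathrm{id}+\bar\mu\,\bar a)$ whenever $\lambda,\mu\in\mathcal O_\LL$.

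Next I would use the invertibility criterion over local fields: an endomorphism represented by a matrix with entries in $\mathcal O_\LL$ is invertible over $\LL$ as soon as its reduction is invertible over $\F_{q^m}$, because the determinant is then a unit in $\mathcal O_\LL$ (its reduction is nonzero, hence a unit, and a lift of a unit is a unit). So I would take an arbitrary nonzero $(\lambda,\mu)\in\LL^2$, scale it -- using that $\LL$ is a field and $\mathcal O_\LL$ contains a uniformizer -- so that $(\lambda,\mu)\in\mathcal O_\LL^2$ with at least one entry a \emph{unit}, i.e.\ a primitive integral vector whose reduction $(\bar\lambda,\bar\mu)$ is nonzero. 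By hypothesis $\psi(\bar\lambda\,\mathrm{id}+\bar\mu\,\bar a)$ is invertible, hence its determinant lies in $\F_{q^m}^\times$; by compatibility of $\psi$ with reduction this determinant is the reduction of $\det\psi(\lambda\,\mathrm{id}+\mu\,a)\in\mathcal O_\LL$, which is therefore a unit, so $\psi(\lambda\,\mathrm{id}+\mu\,a)$ is invertible over $\LL$. As $(\lambda,\mu)$ was arbitrary, $\psi(\C)$ is MRD.

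The main obstacle I anticipate is purely the bookkeeping of the scaling/primitivity step together with verifying that $\psi$ genuinely commutes with reduction. The scaling is safe because any nonzero $(\lambda,\mu)\in\LL^2$ can be multiplied by a suitable power of a uniformizer to land in $\mathcal O_\LL^2$ with a unit coordinate, and this does not affect invertibility. The compatibility of $\psi$ with reduction requires that choosing the reduction of an $\mathcal O_\LL$-basis of $\mathcal O_\LL$ as an $\F_q$-basis of $\F_{q^m}$ makes the matrix of $\psi(a)$ reduce entrywise to the matrix of $\psi(\bar a)$; this follows from Theorem \ref{thm:skew_isomorphism} applied on both levels together with the fact that $\sigma$ reduces to the $q$-Frobenius, but it should be stated carefully since it is the one place where the unramifiedness hypothesis is essential (ramification would break the clean reduction of $\mathcal O_\LL$ onto $\F_{q^m}$ and of $\sigma$ onto Frobenius).
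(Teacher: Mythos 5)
Your overall strategy---scale $(\lambda,\mu)$ to a primitive vector of $\mathcal O_{\LL}^{2}$, reduce modulo the maximal ideal, invoke the MRD property of $\bar{\C}$, and lift back---is the same skeleton as the paper's proof, but you execute the key step with a wrong characterization of the MRD property, and this is a genuine gap. Since $\C=\langle \mathrm{id},a\rangle_{\LL}$ has $\dim_{\K}\C=2m$, the Singleton bound of Theorem \ref{thm:singbound} gives $2m\le m(m-d+1)$, i.e.\ $d\le m-1$; hence for these two-dimensional codes ``MRD'' means minimum rank distance \emph{exactly} $m-1$, that is, every nonzero element has kernel of $\K$-dimension at most $1$ --- not that every nonzero element is invertible (your parenthetical ``full rank \ldots (minimum distance $m-1$)'' already conflates the two). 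In fact no genuinely two-dimensional $\F_{q^m}$-linear code consists of invertible maps only: for $\bar a\notin\langle\mathrm{id}\rangle_{\F_{q^m}}$ and any $x_0\in\F_{q^m}^{\times}$, the nonzero element $\bar a(x_0)x_0^{-1}\,\mathrm{id}-\bar a\in\bar{\C}$ annihilates $x_0$; concretely, $\mathrm{id}-\sigma$ lies in the MRD (Gabidulin) code $\langle\mathrm{id},\sigma\rangle_{\F_{q^m}}$ and has kernel $\F_q$. So the hypothesis, as you restate it, is never satisfied, and the determinant argument collapses exactly where it is needed: the reduction of $\det\psi(\lambda\,\mathrm{id}+\mu a)$ can be $0$ even when $(\bar\lambda,\bar\mu)\neq(0,0)$, so you cannot conclude that this determinant is a unit. (A side issue: you also restrict to $\deg_{\sigma}(a)\le 1$, which is neither assumed nor needed; the motivating scattered endomorphisms $\sigma+\alpha\sigma^{m-1}$ have $\sigma$-degree $m-1$.)

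The repair, which is what the paper actually does, is to replace ``nonzero determinant of the reduction implies unit determinant'' by the one-sided semicontinuity of rank under reduction: if the reduced matrix has rank $r$ over $\F_q$, then some $r\times r$ minor of it is nonzero, hence the corresponding minor of the integral matrix is a unit of $\mathcal O_{\K}$, and the matrix has rank at least $r$ over $\K$ (rank can only drop under reduction, but never increase). Keeping your scaling step and your verification that $\psi$ commutes with reduction in the unramified setting --- both of which are correct and are exactly the ingredients of the paper's proof --- this yields, for every $(\lambda,\mu)\in\mathcal O_{\LL}^{2}$ with at least one coordinate of valuation $0$,
$$\rk\,\psi(\lambda\,\mathrm{id}+\mu a)\;\ge\;\rk\,\psi(\bar\lambda\,\mathrm{id}+\bar\mu\,\bar a)\;\ge\;m-1,$$
the last inequality because the reduction is a nonzero element of the MRD code $\psi(\bar{\C})$. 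Hence $\drk(\psi(\C))\ge m-1$, and together with the Singleton bound $\drk(\psi(\C))\le m-1$ this shows that $\psi(\C)$ is MRD.
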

\begin{proof}
For any element $\lambda a + \mu \in \mathcal{C}$ such that $\lambda, \mu \in \mathcal{O}_{\LL}$ and at least one of them has valuation $0$, the reduction modulo the maximal ideal of $\mathcal{O}_{\LL}$ yields a nonzero element of the MRD code corresponding to the extension $\F_{q^m}/\F_q$. Thus, the minimum distance of $\bar{\C}$ is at most the minimum distance of $\mathcal{C}$, confirming that $\mathcal{C}$ is indeed an MRD code.
\end{proof}

It is natural to ask if there are other types of extensions for which a similar argument works, and whether the assumption on $\LL/\K$ being an unramified extension of non-Archimedean local fields can be relaxed. 

\begin{question}
    Can the hypothesis that $\LL/\K$ is an unramified extension of non-Archimedean local fields in Proposition \ref{prop:unramifiedextentionlocal} be replaced by a  weaker one?
\end{question}

\subsection{Rank-metric codes over the real numbers}
Rank-metric codes over the real numbers have been studied independently by Adams \textit{et al.} in \cite{adams1962vector,adams1965matrices} and James in \cite{james1957whitehead} in the case of square $n\times n$ matrices with minimum rank distance $n$. In this case, the existence of rank-metric codes of maximum possible dimension with respect to the bound of Theorem \ref{th:adams_bound} has been completely settled. However, the general case of $[m\times n,k,d]_{\R}$ codes is still widely open. The only cases in which an upper bound can be considered to be tight is when the value $\mathrm{d}(n,m,d-1)$ defined in \eqref{eq:degree_det_var} is odd. Indeed, in that case, Proposition \ref{prop:odd_degree} gives the same upper bound as the one for algebraically closed fields, that is,
$$k\le (n-d+1)(m-d+1),$$
and Theorem \ref{thm:construction_Rees} (or, equivalently, Corollary \ref{cor:generalLowerBound}) guarantees that there is a construction of $[m\times n,(n-d+1)(m-d+1),d]_{\R}$ codes meeting this bound with equality. A first natural question is the following. 

\begin{question}
Is it possible to characterize the triples $(m,n,d)$ such that $\mathrm{d}(m,n,d-1)$ is odd?
\end{question}

This has only been answered when $m=n$, as seen in Remark \ref{rem:odd_degree}, but the general case is unclear.

However, the main research direction concerns the completion of the theory of rank-metric codes over the real numbers, and, in particular, can be summarized by the following question.

\begin{question}
Can we provide a general upper bound
$$k\leq f(m,n,d)$$ 
for an $[m\times n,k,d]_{\R}$ code, that generalizes the upper bound of Theorem \ref{th:adams_bound} and that is tight for every choice of $m,n$ and $d$?
\end{question}

\section*{Acknowledgments}

We would like to thank the anonymous referees for their valuable comments, which  improved the readability of this survey.
The research was partially supported by the project ``COMPACT'' of the University of Campania ``Luigi Vanvitelli'' and by the Italian National Group for Algebraic and Geometric Structures and their Applications (GNSAGA - INdAM) and by the INdAM - GNSAGA Project \emph{Tensors over finite fields and their applications}, number E53C23001670001.
This research was also partially supported by Bando Galileo 2024 – G24-216.

\end{document}